\DeclareMathOperator{\e}{e}
\renewcommand*{\vec}[1]{\mathit{#1}}
\newcommand*{\mat}[1]{\mathbfss{#1}}
\newcommand*{\trans}[1]{{#1}^T}
\newcommand{\dd}{{\rm d}}
\newtheorem{theorem}{Theorem}
\newcommand{\REWRITE}[1]{\textcolor{black}{#1}}
\newcommand{\REWRITER}[1]{\textcolor{black}{#1}}
\title[Bias and robustness of eccentricity estimates]{Bias and robustness of eccentricity estimates from radial velocity data}
\author[N. C. Hara et al.]{
Nathan C. Hara,$^{1,2,3}$\thanks{E-mail:nathan.hara@unige.fr}
G. Bou\'e$^{2}$,
J. Laskar$^{2}$,
J.-B. Delisle$^{1,2}$,
N. Unger$^{4}$
\\ 
$^{1}$ Observatoire de Gen\`eve, Universit\'e de Gen\`eve, 51 ch. des Maillettes, 1290 Versoix, Switzerland
\\
$^{2}$ ASD/IMCCE, CNRS-UMR8028, Observatoire de Paris,  PSL, UPMC, 77 Avenue Denfert-Rochereau, 75014 Paris, France
\\
$^{3}$ NCCR PlanetS CHEOPS Fellow, Switzerland
\\
$^{4}$ Universidad de Buenos Aires, Facultad de Ciencias Exactas y Naturales. Buenos Aires, Argentina
}
\begin{document}
\label{firstpage}
\pagerange{\pageref{firstpage}--\pageref{lastpage}}
  \maketitle

\begin{abstract}
Eccentricity is a parameter of particular interest as it is an informative indicator of the past of planetary systems. It is however not always clear whether the eccentricity fitted on radial velocity data is real or if it is an artefact of an inappropriate modelling.
In this work, we address this question in two steps: we first assume that the model used for inference is correct and present interesting features of classical estimators. Secondly, we study whether the eccentricity estimates are to be trusted when the data contain incorrectly modelled signals, such as missed planetary companions, non Gaussian noises, correlated noises with unknown covariance, etc. Our main conclusion is that data analysis via posterior distributions, with a model including a free error term gives reliable results provided two conditions. First, convergence of the numerical methods needs to be ascertained. Secondly, the noise power spectrum should not have a particularly strong peak at the semi period of the planet of interest. As a consequence, it is difficult to determine if the signal of an apparently eccentric planet might be due to another inner companion in 2:1 mean motion resonance. We study the use of Bayes factors to disentangle these cases.  Finally, we suggest methods to check if there are hints of an incorrect model in the residuals. We show on simulated data the performance of our methods and comment on the eccentricities of Proxima b and 55 Cnc f.

\end{abstract}

\begin{keywords}
methods: data analysis -- techniques: radial velocities -- planets and satellites: fundamental parameters -- planets and satellites: dynamical evolution and stability
\end{keywords}


\section{Introduction}
\label{sec:introduction}

The nearly coplanar and circular orbits of the Solar system have long been an argument in favour of Laplace and Kant's theory of formation of planets in a disk~\citep{swedenborg1734, kant1755, laplace1796}. The first observations of exoplanets suggested that such low eccentricities were rather the exception than the rule. The ``eccentricity problem'', along with the need to envision migration scenarios for hot Jupiters, triggered several theoretical studies which explored migration scenarios after the disk has dissipated. The prediction of these models were compared to measured eccentricities. For instance~\cite{juric2008} and~\cite{petrovich2014} evaluate the likelihood of formation scenarios of hot and warm Jupiters through their agreement with observed eccentricity distributions.

 For a radial velocity data set on a given star, one wants to extract two pieces of information about the eccentricity. First, a best candidate value (the estimation problem) and what are the eccentricity values that are incompatible with the data (the hypothesis testing problem). It is in particular interesting to test if an eccentricity is non zero. Both problems do not have completely obvious solutions. For instance it is known since~\cite{lucy1971} that \REWRITE{when the true eccentricity is small, its least square estimate is biased upwards}.  Other aspects of the estimation and hypothesis testing problems have been addressed in the exoplanet community.

  In~\cite{shenturner2008} the bias was found to depend on the signal to noise ratio \REWRITE{(denoted by SNR)} as well as on the time span of the observations. This was confirmed by~\cite{zakamska2011}, which further showed that the bias depends on the phase coverage, and updated the~\cite{lucy1971} null hypothesis test to determine if a null eccentricity can be rejected or not. 
  They also propose metrics for evaluating the quality of a data set. \cite{otoole2009a} showed that error bars on eccentricity from least square are under-estimated by a factor 5-10.   \cite{brown2017} shows that there might be orbits with very high eccentricities with similar goodness-of-fit as a low eccentricity one.   \cite{pont2011},~\cite{husnoo2011} and \cite{husnoo2012} used Bayesian Information Criterion to confirm non-zero eccentricities. More recently,~\cite{bonomo2017a,bonomo2017b} assessed the evidence in favour of eccentric solutions with Bayes factors. A Bayesian test with a physically motivated prior on eccentricity was devised by~\cite{lucy2013}. Also,~\cite{angladaescude2010}, \cite{wittenmyer2013} and~\cite{kurster2015} note that two planets in 2:1 mean motion resonances can appear as an eccentric planet, and propose ways to disentangle those cases. \REWRITE{This problem has also been studied in~\cite{boisvert2018, nagel2019, wyttenmeier2019}}.

 The fact that eccentricity estimates can be spuriously high for a given planet gives reasons for concern on the eccentricity distributions.  The estimation of those has been tackled in \cite{hogg2010}, which computes the posterior of the eccentricity probability distribution itself. 
 ~\cite{zakamska2011} consider the accuracy of the eccentricity catalogues obtained by Bayesian point estimates. They show that for single planet populations contaminated by white noise, estimating the eccentricity via the maximum of the marginalised posterior distribution of eccentricity with a free jitter term gives satisfactory retrieval of the input population.
 Furthermore, it has been noted by~\cite{cumming2004} that high eccentricity orbits $\gtrsim 0.6$ are more difficult to detect at fixed semi-amplitude. For a fixed mass, the detection bias is less strong~\citep{shenturner2008}.

 We contribute to this series of work by studying in depth the bias and robustness of eccentricity estimates from radial velocity data (note that our analysis also applies to astrometric measurements and to estimates of semi-amplitude and inclination). We proceed in two steps. First, we highlight key properties of classical estimators, in order to have a consistent view of eccentricity estimation. The  following questions are then considered:  is the eccentricity inference robust to modelling errors? By that, we refer to wrong noise models, planetary companion too small to be detected, etc. If not, how to mitigate the problem? One could encounter a situation similar to the spectroscopic binaries, where proximity effects or gas streams lead to spurious high-significance eccentricities if not properly accounted for~\citep{lucy2005}.

The article is structured as follows. In section~\ref{sec:originecc} we study the behaviour of eccentricity estimates when the model is correct. The least square estimates as well as Bayesian ones are studied, and is is shown that the latter are less biased at low eccentricity. \REWRITE{The problem of spurious local $\chi^2$ minima at high eccentricity is also tackled, in particular through the Proxima b case~\citep{angladaescude2016}. }
Section~\ref{sec:robustness} is devoted to studying the robustness of the estimates when the numerical method, the model or the prior is poorly chosen. Finally in section~\ref{sec:resanalysis}, we consider ways to check the validity of a noise model.
 Our methods are illustrated with the 55 Cnc HIRES data set in section~\ref{sec:application}. In section~\ref{sec:conclusion}, we conclude with a step-by-step procedure to obtain reliable eccentricities and  present perspectives for future work .

\section{Point and interval eccentricity estimates}
\label{sec:originecc}

\subsection{Problem definition}
\subsubsection{Point and interval estimates}
\label{sec:problemstatement}
Let us first define the problem under study precisely. Some generic symbols, used throughout the text, are summarized in table~\ref{tab:listofsymb}.
 \begin{table}\centering
	\caption{List of symbols. }
	\begin{tabular}{c|p{6cm}}
		$\btheta$ & Vector of parameters \\ \hline
		$\btheta_t$ & True value of the vector of parameters \\ \hline
		$\widehat{\btheta}$ & Estimator of $\btheta$ \\ \hline
		$\mathbfit{f}(\mathbfit{t}, \btheta)$ &  Deterministic model sampled at times $\mathbfit{t} = (t_k)_{k=1..N}$ and of parameters $\btheta$ (orbital parameters plus possibly offset, trend...) \\ \hline
		$p(\btheta)$ & Probability density of $\btheta$, $\mathrm{Pr} \{ \btheta \in \Theta \} = \int_{\Theta} p(\btheta) d\btheta$ for some measurable set $\Theta$  \\ \hline
		$\mathbb{E}\{\btheta\}$ & Mathematical expectancy of the random variable $\btheta$ \\ \hline
		$k,h$ & $k = e\cos \omega$, $h = e\sin \omega$ \\ \hline
	\end{tabular}
	\label{tab:listofsymb}
\end{table}
Let us consider a time series of $N$ observations, modelled as a vector $\mathbfit y = (y(t_k))_{k=1..N}$, such that
\begin{align}
\mathbfit{y}(\mathbfit{t})  = \mathbfit{f}(\mathbfit{t} ,\btheta)  + \bepsilon 
\end{align} 
where $\mathbfit{t} = (t_k)_{k=1..N}$ is the vector of measurement times, $f$ is a deterministic model depending on parameters $\btheta \in \mathbb{R}^p$ and $\bepsilon$ is a random variable modelling the noise.  An estimator of $\btheta$ is a function $\widehat{ \theta}$ of the data $\mathbfit{y}(\mathbfit{t})$ whose output is wanted to be close to the true value of $\btheta$, denoted by $\btheta_t$, in a sense chosen by the data analyst. If the mean value of $\widehat{ \theta}(\mathbfit{y})$ ($\mathbb{E}\{\widehat{ \theta}(\mathbfit{y})\}$) is not equal to $\btheta_t$, the estimator is said to be biased and 
\begin{align}
b_{\widehat{ \theta}}(\btheta_t) =  \mathbb{E}\{\widehat{ \theta}(\mathbfit{y})\} - \btheta_t
\label{eq:defbias}
\end{align}
is called the bias of the estimator $\widehat{ \theta}$ in $\btheta_t$. A common metric for the accuracy of an estimator is the mean squared error (MSE), linked to bias and the variance of the estimator, $\mathrm{Var}\{\widehat{ \theta}(\mathbfit{y})\}$, via
\begin{align}
\mathrm{MSE}(\widehat{ \theta}):=\mathbb{E}\{(\widehat{ \theta}(\mathbfit{y}) -\btheta_t )^2 \}= \mathrm{Var}\{\widehat{ \theta}(\mathbfit{y})\}  + b_{\widehat{ \theta}}(\btheta_t)^2.
\label{eq:defmse}
\end{align}

The other problem we consider is to have a testing procedure to reject or not certain values of the eccentricity. We are now interested in rejecting the hypothesis that $e\in C$ where $C$ is a subset of $[0,1]$. More precise definitions are given in the relevant sections (\ref{sec:intervalest} and~\ref{sec:bayessection}). 

\REWRITE{In the present section~\ref{sec:originecc}, we describe tools for the estimation and hypothesis testing problems and present some of their properties. In section~\ref{sec:robustness}, we study the reliability of these tools when the model is incorrect. \REWRITER{By convention, in the following, radial velocity signals are in m.s\textsuperscript{-1}. The analysis is unchanged for other units as long as the signal to noise ratio is identical.}
}

\subsubsection{Model}
\label{sec:modeldef}

The concern of the present work is the estimation of  eccentricity from radial velocity data. The model of a radial velocity planetary signal is recalled below
\begin{align}
f(t,e,K,P,\omega,M_0) & = K(\cos\left(\omega + \nu(t,e,P,\omega,M_0)\right) + e \cos \omega) \label{eq:vexprbis}\\
\cos \nu & = \frac{\cos E - e}{1 - e\cos E} \label{eq:cosnu2bis}\\
\sin \nu & = \frac{\sqrt{1-e^2}\sin E}{1 - e\cos E} \label{eq:sinnu2bis}\\
E - e \sin E &= M_0 + \frac{2 \pi}{P} t \label{eq:keplereq2bis}.
\end{align}
The symbols $t,e,K,P,\omega,M_0$ designate respectively the measurement time, eccentricity, semi-amplitude, period, argument of periastron and mean anomaly at $t=0$. \REWRITE{The symbols $E$ and $\nu$ denote the eccentric and true anomalies. }

We assume a Gaussian noise model, such that the likelihood function is
\begin{align}
p(\mathbfit{y} | \btheta, \bbeta) = \frac{1}{\sqrt{(2\pi)^N |\mathbf{V}(\bbeta) | }} \e^{ -\frac{1}{2} (\mathbfit{y} - \mathbfit{f}(\mathbfit{t} ,\btheta))^T  \mathbf{V}(\bbeta) ^{-1}  (\mathbfit{y} - \mathbfit{f}(\mathbfit{t} ,\btheta)) }
\label{eq:likelihood}
\end{align}
where $\mathbfit{f}(\mathbfit{t} ,\btheta)$ is a sum of Keplerian functions defined as~\eqref{eq:vexprbis} possibly plus some other model features (offset, trend...). The covariance matrix $ \mathbf{V}$ is parametrized by $\bbeta$ and the suffix $T$ denotes the matrix transposition. The explicit expression of $\mathbfit{f}(\mathbfit{t} ,\btheta)$ and $ \mathbf{V}(\bbeta)$  will be given in the relevant sections.

The features of least square and Bayesian estimates  are now studied respectively in sections~\ref{sec:leastsquare} and~\ref{sec:bayesianestimates}.

\subsection{Least square estimate}
\label{sec:leastsquare}

\begin{figure}
	
	\centering
	\hspace{-0.3cm}
	\includegraphics[width=8.1cm]{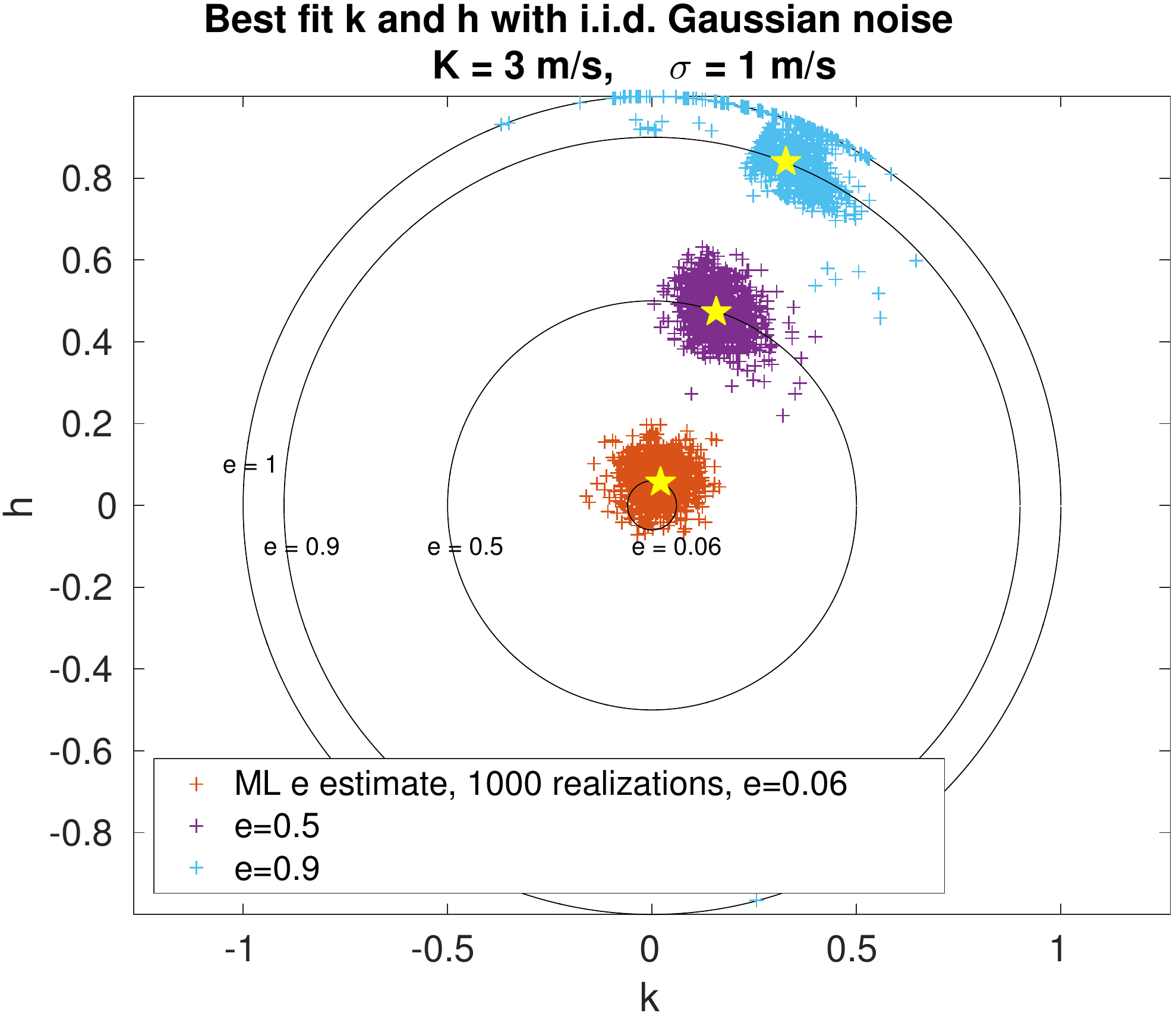}
	\caption{  Estimates of $k=e\cos\omega$ and $h=e\sin\omega$ for various true eccentricities. Each cross represents an estimate $\widehat{k}$ and $\widehat{h}$ obtained with a Keplerian model least square fit for a given noise realization. The yellow stars represents the true $(k,h)$. Estimates for $e$=0.06, 0.5 and 0.9 are respectively in red, purple and blue (3 $\times$ 1000 = 3000 estimates in total on each figure). $\sigma$ = 1 m.s\textsuperscript{-1} and $K$ = 3 m/s.
	}
	\label{target2}
\end{figure}
  \begin{figure}
 	\vspace{0.3cm}
 	
 	\hspace{-1.4cm}
	 	\includegraphics[width=9.5cm]{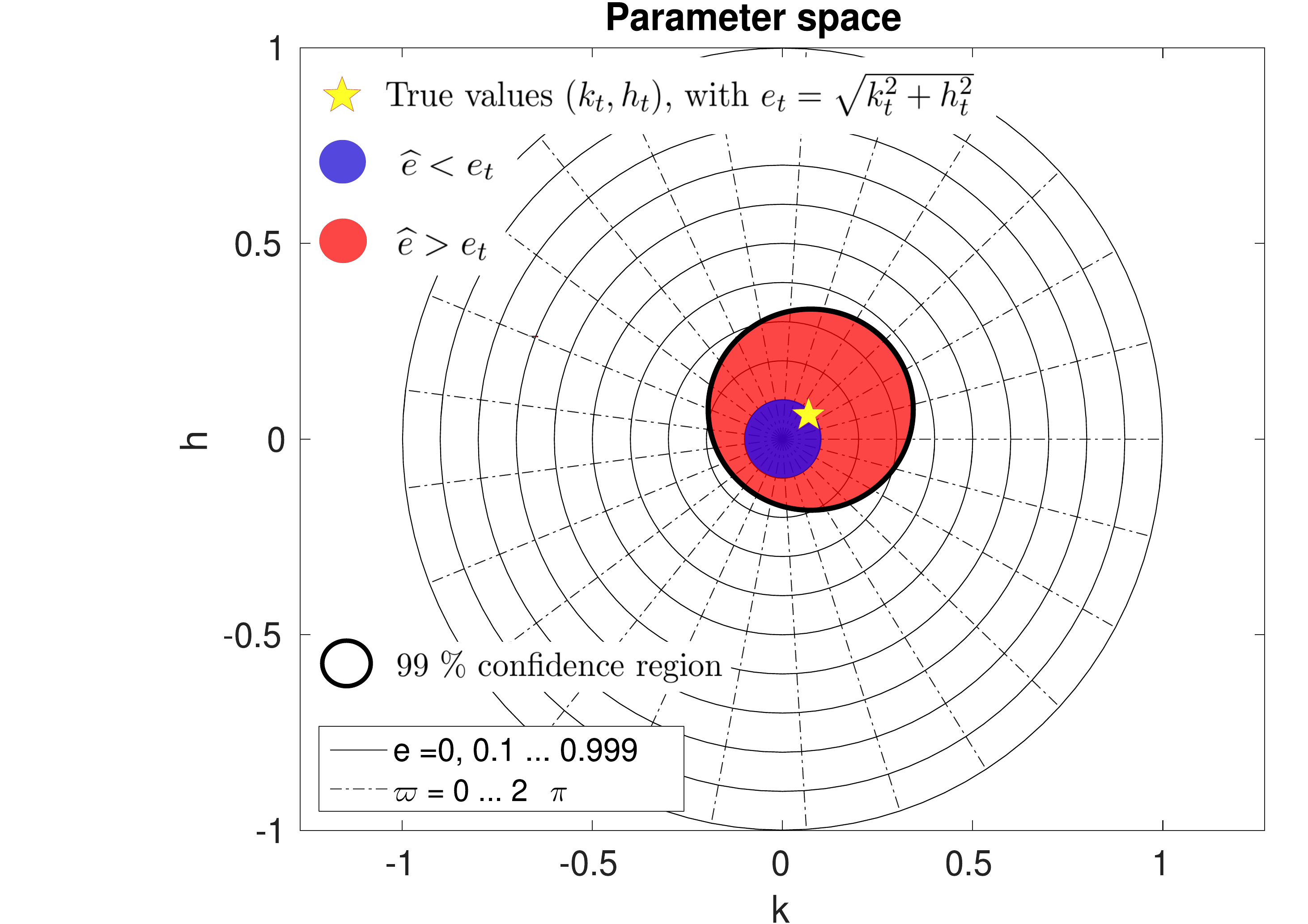}
 	\caption{Representation of the estimates $\widehat{k} = \widehat{e}\cos \widehat{\omega}$ and $\widehat{k} = \widehat{e}\sin \widehat{\omega}$ when $\widehat{k}$ and $\widehat{h}$ have a joint Gaussian distributions. The yellow star represents the true value of $k$ and $h$, the bold black line encircles the region where 99\% of the estimates are found. The red and blue region represent respectively the regions where the eccentricity estimates are overestimated and underestimated.}
 	\label{target1}
 \end{figure}

\subsubsection{Bias of the non linear least square}
\label{sec:nonlinls}
\REWRITE{
A common parameter estimator is the maximum likelihood $\widehat{ \theta}_{\mathrm{ML}}$. For the model of eq.~\eqref{eq:likelihood}, 
	\begin{align}
	\widehat{ \theta}_{\mathrm{ML}} =  \mathrm{arg} \max\limits_{\btheta \in \Theta, \bbeta \in B} p(\mathbfit{y}|\btheta, \bbeta).
	\label{eq:ml}
	\end{align}
When the parameters of the covariance, $\bbeta$, are fixed, maximising the likelihood comes down to the least square problem,  
\begin{align}
\widehat{ \theta}_{LS}(\mathbfit{y})  = \mathrm{arg} \min\limits_{\btheta \in \Theta} (\mathbfit{y} -  \mathbfit{f}(\btheta) )^T\mathbfss{V}^{-1}(\mathbfit{y}- \mathbfit{f}(\btheta) ) .
\label{eq:ls_estimate}
\end{align}
When the model $\mathbfit{f}(\btheta) $ depends linearly on $\btheta$, the least square estimate is unbiased. This is in general untrue when $\mathbfit{f}(\btheta)$ is non linear, which has been noted for instance by~\cite{hurwicz1950} and discussed in~\cite{hartley1964, bates1980, cook1985, firth1993}. The model we are concerned with (eq.~\eqref{eq:vexprbis} -~\eqref{eq:keplereq2bis}) is non linear, and indeed eccentricities obtained by least square are biased. 
}


 \subsubsection{Bias and uncertainty at low eccentricity}
 \label{sec:moremodels}
\label{sec:firstorder}

\REWRITE{
We begin with a numerical experiment. We generate Keplerian signals of eccentricity 0.06, 0.5 and 0.9 with fixed $\omega$, $M_0$ and $K = 3$ m.s\textsuperscript{-1}. The array of time $\mathbfit{t}$ is the 30 first measurements of GJ 876~\citep{correia2010}. We generate 1000 realisations of a white Gaussian noise with standard deviation of 1 m.s\textsuperscript{-1}.} For each realisation of the noise, a non-linear Keplerian model and a constant are fitted with a Levenberg-Marquardt algorithm. \REWRITE{The minimisation step is scaled so that the eccentricity never exceeds 0.999}.  The values of the estimates ($\widehat{k}:= \widehat{e\cos \omega},  \widehat{h}:= \widehat{e\sin \omega} $) are reported as crosses (red, purple and blue resp. for $e = 0.06$, 0.5 and 0.9) in Fig.~\ref{target2}.
The distributions of $\widehat{k}  $ and $\widehat{h}$ are fairly isotropic for $e_t=0.06$ and $e_t = 0.5$. For $e = 0.9$ there seems to be more complicated phenomena at work. The distribution has no circular symmetry and in some cases $\widehat{e}$ \REWRITE{is stuck at its maximal value, 0.999} (see section~\ref{sec:complicatedshape_body}). 

  In appendix~\ref{appendix_realformula}, it is shown that provided $e$ is small enough ($\leqslant 0.2$) and the number of observation is sufficient, $\widehat{k}$ and $\widehat{h}$,  follow independent Gaussian laws of same variance.  \REWRITE{This property allows us to understand the bias qualitatively.}
In Fig.~\ref{target1}, the distribution of eccentricity estimates is represented. The pair of true values $k_t,h_t$ is represented by a yellow star.  The bold black line delimits the region where 99$\%$ of the estimates are located. When the estimate falls in the blue-coloured region, the eccentricity is under-estimated. When it falls in the red-coloured region, the eccentricity is over-estimated.  As the volume of higher eccentricity models is larger in the vicinity of $k_t,h_t$, the eccentricity is more probably over-estimated. Informally, there are more and more models with eccentricity $e$ as $e$ grows.

\REWRITE{Furthermore, it is possible to obtain an analytical approximation of the bias.} Since $\widehat{k}$ and $\widehat{h}$ approximately follow a joint Gaussian distribution with same variances, $\widehat{e}= (\widehat{k}^2 + \widehat{h}^2)^{1/2}$ follows a Rice distribution, as noted in~\cite{shenturner2008}. Interestingly enough, the Rice distribution appears as a very good model for the eccentricity densities of the inner planets of the Solar System, resulting from chaotic diffusion~\citep{laskar2008}. Supposing that the measurement noise is white with standard deviation $\sigma$ and that $p$ parameters are fitted, within our approximation, $\widehat{k}$ and $\widehat{h}$ have a standard deviation $\sigma_k = \sigma_h = \sigma/K_t \sqrt{2/(N-p)}$. 
Defining the SNR $S$ as
\begin{align}
S := \frac{1}{\sigma_k } = \frac{K_t}{\sigma} \sqrt{\frac{N-p}{2}}
\label{eq:snr}
\end{align}
and denoting by $e_t$ the true eccentricity, the bias is
\begin{align}
\label{eq:bias1}
b(e_t,S) \approx  \frac{1}{S}\sqrt{\frac{\pi}{2}} L_{1/2}\left( -\frac{S^2 e_t^2}{2}\right) - e_t 
\end{align}
where $L_{1/2}$ is the Laguerre polynomial of order $1/2$. When $e_t = 0$, the eccentricity follows a Rayleigh distribution and eq.~\eqref{eq:bias1} reduces to a very simple expression,
\begin{align}
\label{eq:bias00}
b(0,S) \approx \sqrt{\frac{\pi}{2}} \frac{1}{S} = \frac{\sigma}{K_t} \sqrt{\frac{\pi}{N-p}} = \sqrt{\frac{\pi}{4-\pi}} \sigma_{\widehat{e}}.
\end{align}
Eq.~\eqref{eq:bias00} is identical to equation (18) of~\cite{lucy1971} except that  we are able to derive the effect of the correlations between parameters on the SNR through the term $-p$ (see Appendix~\ref{appendix_realformula} for justification).

Formula~\eqref{eq:bias00} is useful to see a few trends: the bias is proportional to the uncertainty on $k$ and $h$, which is proportional to the inverse of $K_t$ and $\sqrt{N-p}$. As a consequence, the bias increases as the SNR decreases, i.e. as $\sigma$ increases or as $K_t$ or $N$ decrease. This is also found by simulations in~\cite{shenturner2008} and~\cite{zakamska2011} for Bayesian estimates. We add that increasing the number of fitted parameters  $p$, increases the bias. 
 
 There are particular cases where the correlations between parameters drastically increase the uncertainties on $k$ and $h$ and therefore increase the bias, so that formula~\eqref{eq:bias1} should be taken as a lower bound.
\REWRITE{
	 However, the fact that the bias is approximately proportional to the standard deviation with a factor $\sqrt{\pi/(4-\pi)}$, as in eq.~\eqref{eq:bias00} stays true. This fact is remarkable because it means that the accuracy of the estimate (seen as the mean squared error~\eqref{eq:defmse}) is proportional to its precision (seen as the standard deviation $\sigma_{\widehat{ e}}$). In appendix~\ref{app:lindependency}, we show that poor phase coverage or short observational baseline affect the accuracy of the eccentricity estimate insofar as they decrease the precision of the estimate.  
  }

\subsubsection{Local minima at high eccentricities}
\label{sec:complicatedshape_body}

As shown in~\cite{baluev2015}, the number of local $\chi^2$ minima increases significantly in the high eccentricity region. These minima might lead a local minimisation algorithm or a Monte Carlo Markov Chain (MCMC) to be stuck in the wrong region of the parameter space. We here aim at quantifying  and understanding this feature. In this section, we provide a summary of our results, whose precise study is in appendix~\ref{app:complicatedshape}. These results are:
\begin{itemize}
\item An incorrect estimation of the noise level can lead to spurious deep local minima at high eccentricities.
\item As the SNR degrades, the probability of missing the global minimum by a non linear fit initialized on a circular orbit increases. 
\item There is a geometrical interpretation of the numerous local minima at high eccentricity: the set of models with fixed eccentricity explore more and more dimensions of the $N$-dimensional sample space as eccentricity grows. 
\end{itemize}

Let us illustrate the first point on Proxima b.
\cite{brown2017} re-analyses the data of Proxima b, a $\approx 1.27$ $\mathrm{M}_\oplus$ planet orbiting the M star Proxima Centauri with a period of 11.186 days~\citep{angladaescude2016},  and finds that there are local minima at eccentricity 0.75 and 0.95, the 0.95 eccentricity being the global least square fit. 

We compute a Keplerian periodogram~\citep{otoole2009, zechmeister2009, baluev2015} in the vicinity of the 11.186 period. That is Keplerian models (eq.~\eqref{eq:vexprbis} - \eqref{eq:keplereq2bis}) are fitted for a grid of periods, argument of periastron and eccentricity. We then represent per eccentricity the $\chi^2$ minimised over all other parameters. Using the nominal uncertainties, a single Keplerian model plus one offset per instrument, a linear and quadratic trend, we obtain the red curve in Fig.~\ref{fig:proximab}. 

The curve displays three local minima, the deepest being at eccentricity 0.92. However, let us note that 
the  best fit gives $\chi^2$ of 1057, while there are only 214 measurements. This gives a reduced $\chi^2$ of 5.16, which is unrealistic. We here simply \REWRITE{add a constant jitter term in quadrature with the nominal error bars} to obtain a reduced $\chi^2$ of 1 at the best fit. The minimum $\chi^2$ as a function of eccentricity so obtained is represented in blue in Fig.~\ref{fig:proximab}. The global minimum now occurs at $e = 0.17$. We interpret the global minima at eccentricity 0.92 as an artefact of an incorrect estimation of the error bars.

The same Keplerian periodogram calculations can be done on simulated data sets with different noise levels and different numbers of measurements. We simulate such systems with eccentricity sampled from a uniform distribution on $[0, 0.999]$, and count how many of them that have a SNR between 0 and 5, 5 and 10 etc. have $k=1,2,3...$ local minima. The histogram of Fig.~\ref{fig:nlminsfitcf_body} is obtained. Also, for each bin of SNR, we compute the proportion of  systems where the global minimum is not the closest to 0 (as in the case on Fig.~\ref{fig:proximab}, red curve) and therefore a local minimisation should miss the global minimum. \REWRITE{It appears that as the SNR increases, the fraction of cases where the global minimum is missed is decreasing, though not reaching zero.}

\begin{figure}
	\includegraphics[scale=0.37]{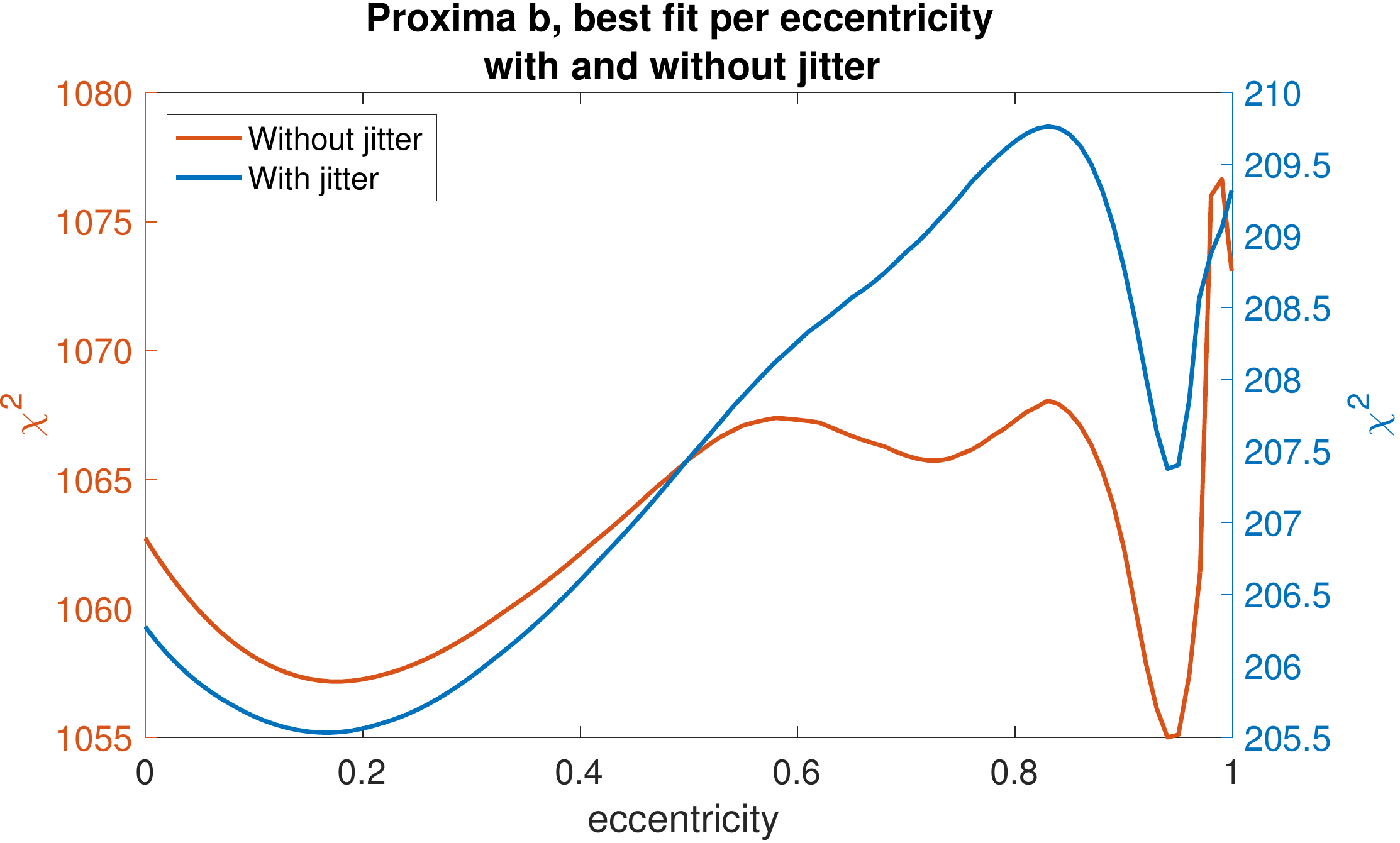}
	\caption{$\chi^2$ of the residuals of a Keplerian fit as a function of the eccentricity on Proxima b~\citep{angladaescude2016}. }
	\label{fig:proximab}
\end{figure}
\begin{figure}
	\includegraphics[width=8.5cm]{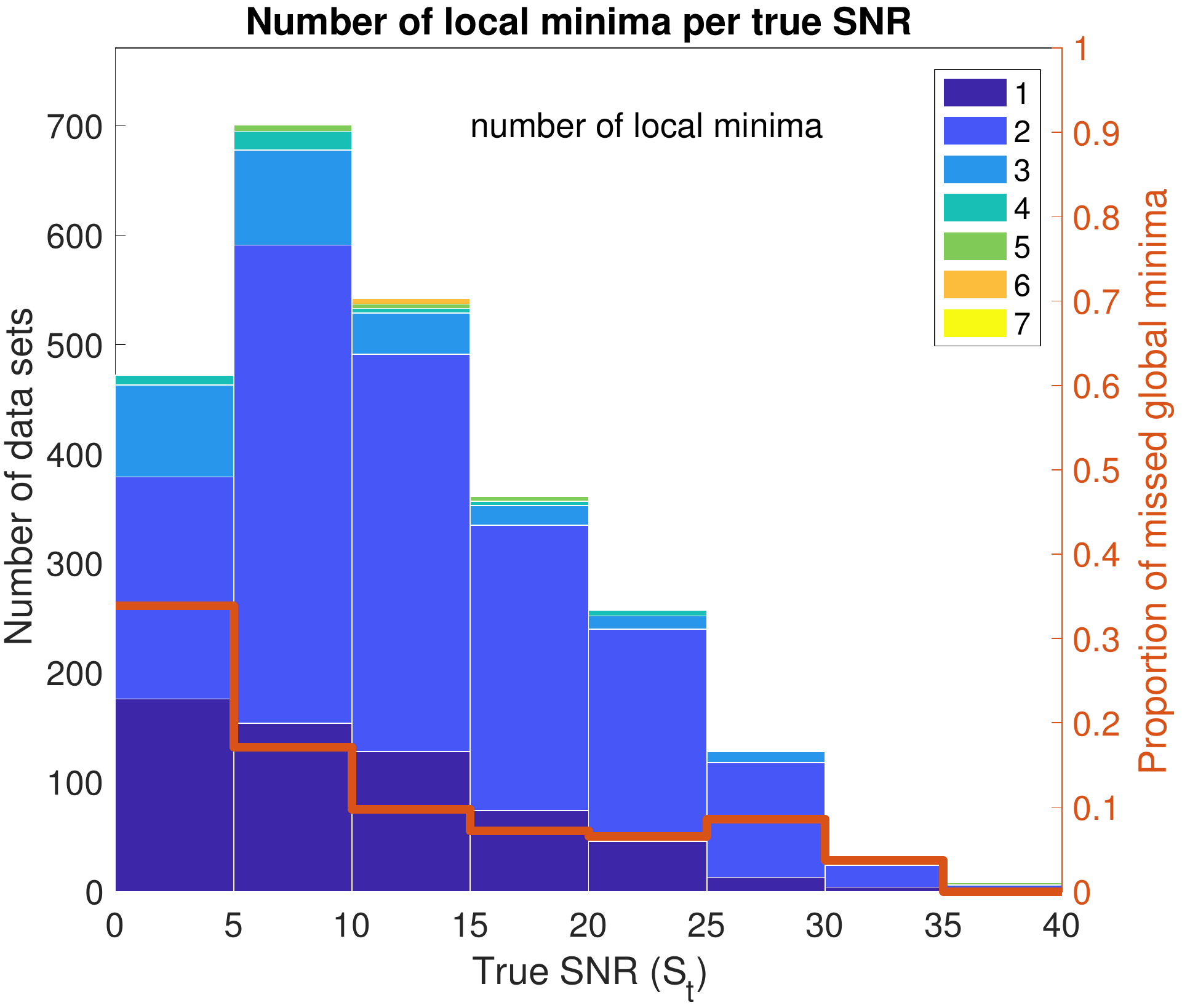}
	\caption{Blue bins: binned values of the number of systems that have a condition number lower than $10^7$ with 1,2,3,4,5, or 6 local minima. The bin size in fitted SNR $S_\mathrm{fit}$ of 5. Red curve: fraction of the binned systems where the global minimum is not attained at the one obtained with a linear fit.}
	\label{fig:nlminsfitcf_body}
\end{figure}

Let us now briefly comment on the geometrical interpretation of the higher number of local minima at high eccentricities. Finding the best fitting model  amounts to finding the closest model to the observation in a geometrical sense. 
We consider the figure drawn in $\mathbb{R}^N$ by all the  models that have an eccentricity $e$  and a period $P$, denoted by $\mathcal{M}_{e,P}$.  This figure might explore more or less dimensions. For instance, if it is close to a plane, it is nearly confined to a two-dimensional space. Otherwise, exploring many dimensions traduces a ``rough'' surface, which increases the chances of finding a local minimum of distance to the data. 
By a procedure based on singular value decomposition, detailed in appendix~\ref{app:complicatedshape} and in~Hara 2017 (PhD thesis), it is possible to obtain an approximate number of dimensions explored by $\mathcal{M}_{e,P}$ as a function of $e$. Table~\ref{table:dimension}  shows such calculations for the 214 measurement times of GJ 876~\citep{correia2010}. As eccentricity increases, $\mathcal{M}_{e,P}$ explores more and more dimensions.

\begin{table}
	\caption{Dimension of the models with fixed eccentricity as a function of the eccentricity, GJ876 measurement times}
	\begin{tabular}{p{1.3cm}|p{0.2cm}|p{0.2cm}|p{0.2cm}|p{0.2cm}|p{0.2cm}|p{0.2cm}|p{0.2cm}|p{0.2cm}|p{0.2cm}|p{0.2cm}}
		Eccentricity&  0.1 & 0.2 & 0.3 & 0.4 & 0.5 & 0.6 & 0.7 & 0.8 & 0.9 & 0.999 \\ \hline
		Dimension of $\mathcal{M}_{e,P}$ & 3 & 4 & 4 & 6 & 8 & 10& 14 & 24 & 46 & 91\\
	\end{tabular}
\label{table:dimension} 
\end{table}

\subsubsection{Interval estimates}
\label{sec:intervalest}

As said in section~\ref{sec:introduction}, one does not only want to obtain a value of the eccentricity with error bars, but also to test whether a certain value of the eccentricity is compatible with the data. This can be done in the frequentist setting with interval estimates. Since the focus will be put on Bayesian estimates in section~\ref{sec:robustness}, we simply here give their definition and refer the reader to appendix~\ref{sec:fisheriansection} for their derivation and detailed study. 

The hypothesis that the eccentricity has a certain value $e$ is rejected with a confidence level $\alpha$ if \REWRITE{the likelihood ratio $LR_e$ satisfies}
\begin{align}
&LR_e :=  \frac{\max\limits_{\btheta \in \Theta_e} p(\mathbfit{y}|\btheta)  }{\max\limits_{\btheta \in \Theta} p(\mathbfit{y}|\btheta) } \leqslant \e^{-\frac{1}{2}\beta} \label{eq:lrt_body}   \; \; \; \; \; \;\text{,where }\\
&\beta := F_{\chi^2_\rho}^{-1}(1 - \alpha) \label{eq:lrt2_body} \\
&\rho := 2 + 2S'^2\frac{e^2}{1+e^2} - \frac{\pi e}{1+e^2}L_{\frac{1}{2}}\left(-\frac{S'^2}{2}\right)
L_{\frac{1}{2}}\left(-\frac{e^2S'^2}{2}\right) \label{eq:lrt3_body}.
\end{align}
where  $\Theta_e$ is the set of parameters that have all eccentricity $e$, $p(\mathbfit{y}|\btheta)$ is the likelihood,  $F_{\chi^2_\rho}^{-1}$ is the inverse cumulative distribution function of a $\chi^2$ law with $\rho$ degrees of freedom, $S' = (\sigma /K_t) \sqrt{2/N}$  and $L_{\frac{1}{2}} $ is the Laguerre polynomial of order $1/2$. We also fit a free jitter term so that the reduced $\chi^2$ equals one. Conversely, for a certain measured $LR_e$ the FAP$(e)$ of the hypothesis $e_t=e$ is defined as
\begin{align}
\mathrm{FAP}(e) = 1 - F_{\chi_\rho}^2(-2\ln LR_e)
\label{eq:FAPe}.
\end{align}
The confidence interval of eccentricity is the set of $e$ with $\mathrm{FAP}(e)$ greater than a certain threshold (for instance $0.05$).

\subsection{Posterior distributions}
\label{sec:bayesianestimates}
\subsubsection{Point estimators}

The previous sections are devoted to the study the least square eccentricity estimate. 
 However, the standard practice in the exoplanet community is rather to compute the posterior probability  $p(\btheta|\mathbfit{y}) = p(\mathbfit{y}| \btheta) p(\btheta) /p(\mathbfit{y}) $ of the orbital elements $\btheta$ using Monte Carlo Markov Chains algorithms~\citep[e.g.][]{ford2005, ford2006}. 

From such posterior distributions, one can compute the orbital elements corresponding to the maximum \textit{a posteriori} (MAP)
\begin{align}
\widehat{ \theta}_{\mathrm{MAP}}  =  \mathrm{arg} \max\limits_{\btheta \in \Theta} p(\btheta|\mathbfit{y}).
\label{eq:map}
\end{align}
For a simple Keplerian fit $\widehat{ \theta}_{\mathrm{MAP}}  = (K,e,P,\omega, M_0)^{\mathrm{MAP}}$, an eccentricity estimate is then obtained by  $e^{\mathrm{MAP}}$.
 Alternately, one can compute the marginal distribution
 \begin{align}
  \label{eq:mapmarginal}
 p(e|\mathbfit{y}) &= \int_{\Theta_e} p(e,\tilde{\btheta}|\mathbfit{y}) \dd \tilde{\btheta}= \frac{1}{p(\mathbfit{y})}  \int_{\Theta_e} p(\mathbfit{y} |e, \tilde{\btheta}) p(\tilde{\btheta}) \dd \tilde{\btheta} 
 \end{align}
 and its maximum, mean and median
  \begin{align}
 \label{eq:maxmarginal}
 \begin{split}
 e_\mathrm{max} = \arg \max\limits_{e\in[0,1]} p(e|\mathbfit{y})  \; \;& ; \; \; e_\mathrm{mean} =\mathrm{mean}(p(e|\mathbfit{y}) ) \\  e_\mathrm{med} = & \mathrm{median}(p(e|\mathbfit{y}) ) 
 \end{split}
 \end{align}
 where $\tilde{\btheta}$ is the vector of parameters that are not eccentricity, and $\Theta_e$ is the space of parameters where the eccentricity is equal to $e$.  
 A standard result is that $e_{\mathrm{mean}}$ is the estimator that minimizes the mean squared error (see eq.~\eqref{eq:defmse}). Also, $e_{\mathrm{med}}$ minimizes the mean absolute error (MAE), defined as 
\begin{align}
\mathrm{MAE} := \mathbb{E}\{ |\widehat{ e} - e_t|\}.
	\label{eq:defmae}
\end{align}

The estimators~\eqref{eq:maxmarginal} are in general less than the maximum likelihood or the maximum a posteriori.  This is shown with a numerical experiment. We generate a circular planet of semi-amplitude 3.5 m/s and 100 realizations of Gaussian white noise at 2 m/s. The estimates $e_{\mathrm{ML}}$ and $e_{\mathrm{mean}}$ (eq.~\eqref{eq:maxmarginal})  are computed with a Monte-Carlo Markov Chain (MCMC) algorithm. The sampler is the same as in~\cite{delisle2018}, based on the adaptive Metropolis sampler of~\cite{haario2001}. \REWRITE{ The model consists of a Keplerian, an offset and a free jitter term, with uniform priors on all parameters. }
Fig.~\ref{fig:diffmlbayes} shows experimental distribution function of $e_{\mathrm{ML}} - e_{\mathrm{mean}}$. The condition $e_{\mathrm{ML}} > e_{\mathrm{mean}}$ is verified in 79 cases out of 100, with a mean value of $e_{\mathrm{ML}} - e_{\mathrm{mean}}$ equal to 0.0944.
  
  The efficiency of the estimates~\eqref{eq:maxmarginal} is understandable in terms of trade-off between model simplicity and agreement with the data. \REWRITE{ In section~\ref{sec:moremodels} and~\ref{sec:complicatedshape_body}, it appeared that for eccentricities $e_t$<0.2, in the vicinity of the true model, there is a larger volume of models with eccentricities $e>e_t$ than $e<e_t$.%
  }
  The integration over the domain $\Theta_e$ in eq.~\eqref{eq:mapmarginal} penalizes the models with high complexity, which here are the high eccentricity models. This is comparable to the penalization of models with too many planets by the marginal likelihood in the context of exoplanets detection~\citep[e.g.][]{nelson2018}.  In section~\ref{sec:bayessection}, we argue in favour of reporting $e_{\mathrm{mean}}$ \REWRITE{and/or $e_{\mathrm{med}}$}, as $e_{\mathrm{max}}$ is too biased towards low eccentricities.
 \begin{figure}
	\centering
	\includegraphics[width= 8.4cm]{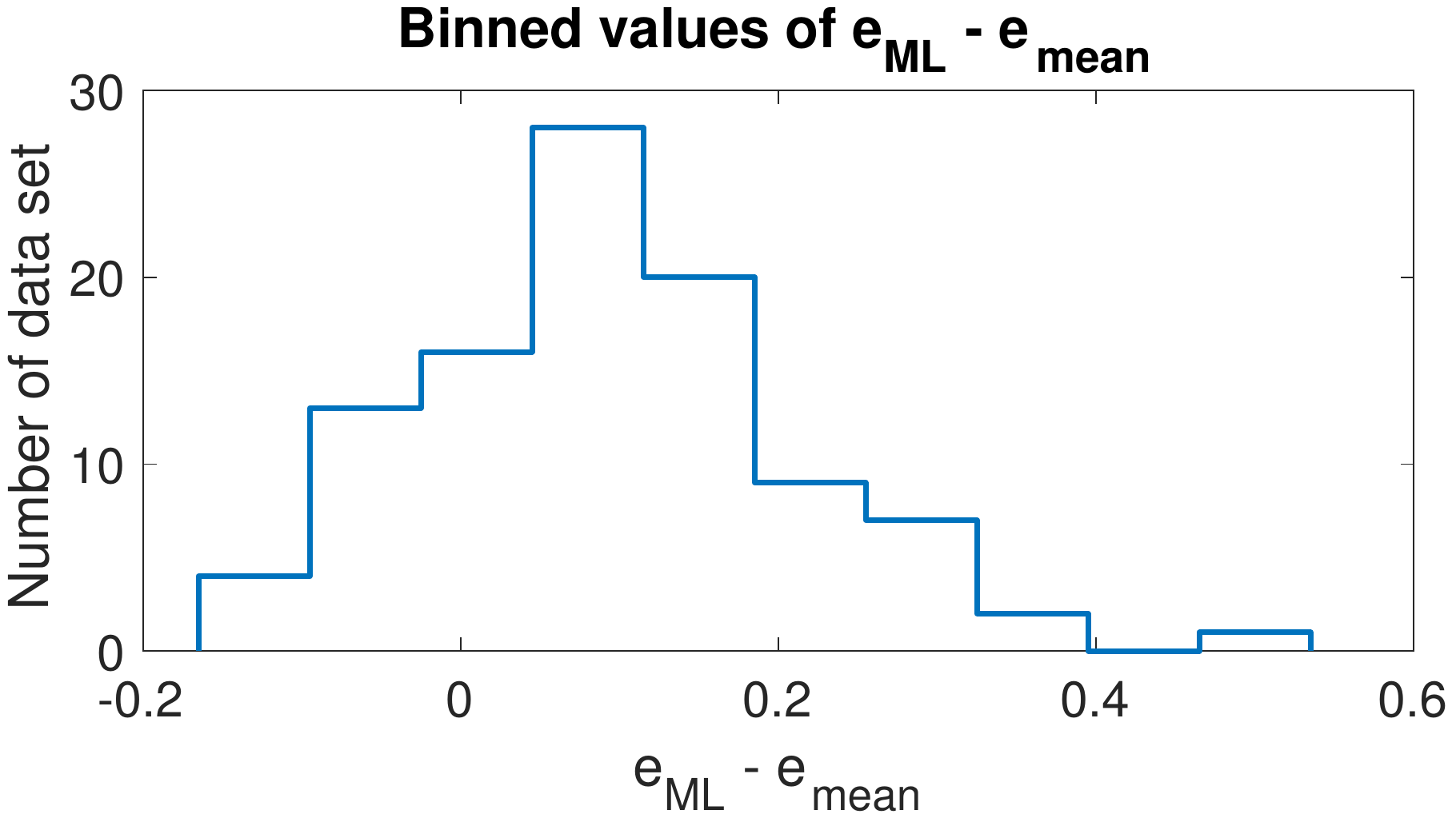}
	\caption{Binned values of the difference of maximum likelihood and posterior mean estimates, $e_{\mathrm{ML}}- e_{\mathrm{mean}}$ for 100 realisation of white noise and $K/\sigma=3.5$.}
	\label{fig:diffmlbayes}
\end{figure}

\subsubsection{Hypothesis testing}
\label{sec:bayessection}

Several of the works cited in section~\ref{sec:introduction} address the question of whether an eccentricity should be set to zero or not, which is a model selection problem. It can be addressed by computing the ratio of posterior likelihood, or odds ratio, of the models~\cite[e.g.][]{kassraftery1995},
\begin{align}
R = \frac{\mathrm{Pr}\{ e \neq 0 | \mathbfit{y}\}}{ \mathrm{Pr}\{ e = 0| \mathbfit{y} \} } = \frac{\mathrm{Pr}\{ \mathbfit{y}|e \neq 0 \}}{ \mathrm{Pr}\{\mathbfit{y}| e = 0\} } \frac{\mathrm{Pr}\{ e \neq 0 \} }{\mathrm{Pr}\{ e = 0 \}}
\label{eq:bayesratio}
\end{align}
where $\mathrm{Pr}\{ \mathbfit{y}|e \neq 0 \} = \int_{\btheta \in \Theta} p(\mathbfit{y}|\btheta) p(\btheta) d\btheta$. If this ratio is greater than a certain value, then one favours $e \neq 0$ over $e=0$. This methodology has been used in~\cite{bonomo2017a,bonomo2017b}. As the number of samples $N$ tends to infinity, assuming $\mathrm{Pr}\{ e \neq 0 \} =  \mathrm{Pr}\{ e = 0 \}$, the odds ratio is equivalent to the Bayesian Information Criterion~\citep[BIC,][]{schwarz1978}, as used by~\cite{pont2011, husnoo2011, husnoo2012}. 

More generally, one can compute a credible set, that is a set of $e$, denoted by $C\subset [0,1]$ such that
\begin{align}
\mathrm{Pr}\{ e \in C | \mathbfit{y} \}  = \int_C p(e|\mathbfit{y}) de  = \alpha 
\label{eq:credible}
\end{align}
where $\alpha \in [0, 1]$ is a probability. The set $C$ is in general taken as an interval but this need not be the case. Let us note that this approach is also a ratio of posterior likelihood as in Eq~\eqref{eq:bayesratio}. If $e=0$ is given a non null probability, then the prior probability takes the form $p(e) = p(0)\delta(0) + p(e)$ where $\delta$ is the Dirac function. 
We now focus on credible intervals, (eq.~\eqref{eq:credible}) since these are more widely used and give finer information.

\section{Robustness of eccentricity estimates}

\label{sec:robustness}
\subsection{Problem statement}
\label{sec:pbstate}

In section~\ref{sec:originecc}, several tools to make inferences on eccentricity were presented. We now study whether these are reliable even if the adopted model is incorrect.  What we call a model is a couple of prior and likelihood functions. We assume that the orbital elements are distributed according to a true prior $p_t(\btheta)$ and the observations have a true likelihood $p_t(\mathbfit{y}|\btheta)$ ($\mathcal{M}_t$) while the analysis is made with the model $\mathcal{M}$: $p(\btheta), p(\mathbfit{y}|\btheta)$.
The model $\mathcal{M}$ can be too simple: missed planetary signal, non modelled correlated noise or too complicated: for instance Gaussian processes are known to be very flexible, possibly too much.

The case where the model is too complicated will not be treated in detail. We simply point out that from section~\ref{sec:leastsquare}, we expect that the bias is higher than for a correct model. \REWRITE{Indeed, as the model grows in complexity the correlation between parameters increases, therefore the error on eccentricity $\sigma_e$ increases. This might be problematic as the bias is proportional to $\sigma_e$ (see eq.~\eqref{eq:bias00}. However, the error bars broaden, so that having too complex a model is unlikely to produce spurious conclusions. On the contrary, as we shall see, simplistic models can be problematic. }

\REWRITE{
In the following sections, we study the effect of the noise level estimate (\ref{sec:noiselevel}), numerical effects (\ref{sec:numericalerror}), incorrect noise models (\ref{sec:wronglikelihood}), priors (\ref{sec:priors}) and the comparison of two models: one eccentric planet or two planets in 2:1 mean motion resonance (\ref{sec:worstcase}). 
}

\subsection{Noise level}
\label{sec:noiselevel}

In section~\ref{sec:moremodels} and~\ref{sec:complicatedshape_body}, it appeared that an incorrect estimate of the noise norm leads to an underestimated bias and to spurious local minima at high eccentricity. 
As a consequence, it is key, as is standard practice, to adjust at least an extra jitter term $\sigma_J$ in the likelihood, 
\begin{align}
p(\mathbfit{y} | \btheta, \sigma_J) = \frac{1}{\sqrt{(2\pi)^N |\mathbf{V_0} + \sigma_J^2 \mathbf{I}| }} \e^{ -\frac{1}{2} (\mathbfit{y} - \mathbfit{f}(\btheta))^T  (\mathbf{V_0} + \sigma_J^2 \mathbf{I})^{-1}   (\mathbfit{y} - \mathbfit{f}(\btheta)) }
\label{eq:modeljitter}
\end{align}
where $\mathbf{V_0}$ is the nominal covariance, $\mathbf{I}$ the identity matrix and $\mathbfit{f}(\btheta)$ is the signal model, containing Keplerians and possibly other features. All the following analyses are made with the model~\eqref{eq:modeljitter}.

\subsection{Numerical effects}

\label{sec:numericalerror}

It appeared in section~\ref{sec:moremodels} that uncertainties on the eccentricity estimates increase the biases. This is also valid for the uncertainties stemming from the numerical methods used to compute the orbital elements.

As noted by~\cite{eastman2013}, there is a specific error in the implementation of the Metropolis-Hastings algorithm that worsens the bias, when the true eccentricity is \REWRITE{close to zero} and when the parameter space is parametrized by $(e,\omega)$ instead of $(k,h)$ or $(\tilde{k},\tilde{h})$. This error consists in not recording the value of a proposed parameter in the chain if it is rejected.

More generally, credible intervals or Bayes factors can be unreliable if the numerical schemes have not converged. 
~\cite{hogg2017} gives several ways to check for convergence of MCMC algorithms. In the following analyses, the convergence diagnostic is the number of effective samples, as computed in Appendix A of~\cite{delisle2018} (see also~\cite{sokal1997}). This number, $N_\mathrm{eff}$ is interpretable as an equivalent number of independent samples from the posterior distribution. Then, for instance, the numerical uncertainty on the mean of the marginal posterior $p(\theta_0|\mathbfit{y})$ of a parameter $\theta_0$ scales as $\sigma_p/\sqrt{N_\mathrm{eff}}$, where $\sigma_p$ is the standard deviation of $p(\theta_0|\mathbfit{y})$.

The first result we show is that at low SNR, the convergence is slower. This is likely due to the existence of local minima at high eccentricity  (see fig.~\ref{fig:nlminsfitcf_body}, section~\ref{sec:complicatedshape_body}). We simulate a one planet system on CoRoT-9 28 measurements~\citep{bonomo2017a}, the eccentricity is generated with a Beta distribution (a=0.867, b=3.03), angles are uniformly distributed and the period is fixed to 95 days.  The nominal measurement errors are normalised to obtain a mean variance of 1. A Gaussian white noise following the normalised nominal errors plus a one meter per second white noise is added. A hundred data sets with orbital elements and noise sampled according to their distributions are created. This is done for $K=2.5$ \REWRITE{m.s\textsuperscript{-1}} and $K=5$ \REWRITE{m.s\textsuperscript{-1}},  \REWRITE{which corresponds to SNR 4.2 and 8.5 (as defined in eq.~\eqref{eq:snr})}, so to obtain $2\times 100 = 200$ data sets.  Finally, a MCMC is performed on each of them with model~\eqref{eq:modeljitter} with 1,100,000 samples and a burn-in phase of a fourth of the total number of samples. The average number of independent samples is 7300 vs 12000, and the chain do no not reach a 1000 efficient samples in 28 and 5 cases respectively.  \REWRITE{In conclusion, low SNR signals should be treated with particular attention since the number of independent samples of the MCMC is smaller, while chains are initialized with a high eccentricity due to the higher bias of least square estimates at low SNR (see eq.~\eqref{eq:bias00}).} 

As an example of the problem associated with not having enough samples, this experiment is repeated with an injected eccentricity of 0. 
We consider that the ``small eccentricity'' hypothesis is rejected if $p(e\in [0,0.05]) < 0.05$. Let us consider the \REWRITE{samples obtained for the experiment described above with $K=2.5$ \REWRITE{m.s\textsuperscript{-1}}}. Depending on whether we take the first fourth of the samples or all of them, the fraction of rejection of the small eccentricity hypothesis goes respectively from  12 to 4 $\%$.  

In the following sections, simulations are taken into consideration if they reach  an effective number of independent samples greater than 1000, so that the mean of the eccentricity posterior (of variance $\sigma_e$) is known with at least a $\approx \sigma_e/\sqrt{N_{\mathrm{eff}}} \approx \sigma_e \times 3 \%$ accuracy. Since we compare different types of noise, we do not require an extremely good precision on the posterior distribution (also the following result stay approximately identical if we take runs with at least 2000, 5000 efficient samples). In practice, it is safer to ensure that 10,000 independent samples are reached to obtain a $\sigma_e/ \sqrt{10000} = \sigma_e \times 1\%$ accuracy on the posterior mean.

\subsection{Incorrect noise model}
 \label{sec:wronglikelihood}
 
 \subsubsection{Non Gaussian noise}
 \label{sec:nongausianbody}
 
 \REWRITE{
 The first question we address is whether, when the noise is non Gaussian, using the model~\eqref{eq:modeljitter} leads to spurious inferences on eccentricity. 
 For the sake of brevity, we here only report the results of the analysis done in appendix~\ref{sec:nongauss}.  The non Gaussianity of the noise can only lead to slightly underestimating or overestimating the error bars, when the noise distribution is very heavy-tailed. 
 We found that the estimates of the eccentricity and the error bars are mostly sensitive to the covariance of the noise and the following sections are focused on this aspect.
}

\subsubsection{Incorrect covariance: simulations }
\label{sec:incorrect_sim}

To study the effect of the noise covariance on eccentricity estimates, we proceed as follows. We consider the 28 measurement times of CoRoT-9, spanning on 1700 days, and generate a circular planet at 95 days, denoted by $y_\mathrm{planet}(t)$. We then generate 100 realisations of white noise, $y_\mathrm{noise}(t)$, and for each of them, a non modelled noise is added. Six different types of such noises, described below, are considered. In total, 6$\times$100 = 600 data sets are obtained. 
The signal generated is then of the form $y(t) = y_\mathrm{planet}(t)  + y_\mathrm{noise}(t)  + y_k(t)  $ with $k=0..5$. \REWRITE{ $ y_\mathrm{noise}(t) $ is generated according to the nominal uncertainties, which are CoRoT-9 uncertainties scaled so that their mean square is equal to 1 m.s\textsuperscript{-1}.}  
Such signals are generated in three contexts: with a semi-amplitude of the planet at 95 days of $2.5$, $3.5$ or $5$ \REWRITE{m.s\textsuperscript{-1}}. 
The methodology described aims at evaluating if, when the noise model is incorrect, circular planets tend to appear as eccentric. This experiment is also done with the eccentricity of the planet drawn from the same Beta distribution as the prior, to evaluate the impact of the noise on the estimates. The simulation where the eccentricity is fixed to zero and the one where it follows a beta distribution are respectively called $S_0$ and $S_e$.

In each simulation, on each of the 3$\times$600 data sets, the posterior distribution of the orbital elements is computed using the model~\eqref{eq:modeljitter} and priors given in table~\ref{tab:priorbayes}. $1,100,000$ samples are computed, the first fourth being the burn-in phase. The algorithm is an adaptive Metropolis algorithm as in~\cite{delisle2018}, and the convergence is checked by calculating the effective number of independent samples (see~\cite{delisle2018}, Appendix A).

Since our goal is to test the effect of the noise nature, and not its level, we impose that the norm of the non modelled noise is such that $\|\mathbfit{y}_0\| = \|\mathbfit{y}_1\| = ... \|\mathbfit{y}_5\| = \gamma $. For each of the 100 realisations of white noises, we draw $\gamma $ from a $\chi^2$ law with $N=28 $ degrees of freedom. The $y_k$ are defined as follows:
 
 \begin{itemize}
\item[$y_0$]: white, Gaussian noise identically distributed.
\item[$y_1$]: also white Gaussian noise but with different variances. The variances are drawn from a Laplace law so to obtain a wide range of value.
\item[$y_2$]: A circular planet that is too small to be fitted, the period of the added planet $P$ is drawn from a log-normal distribution until a period is found such that $P$ differs from 95 days of at least 20\%. This value is chosen to avoid the lowest probability region of period ratios of planet pairs found by Kepler~\citep{steffen2015}.
\item[$y_3$]: planet in resonance with the injected planet, in 1:2 or 3:2 resonance, inner or outer with probability 1/2. 
\item[$y_4$]: a Gaussian correlated noise with covariance $\kappa$
\begin{align}
\label{eq:kernel}
\kappa_{i,j} = \alpha^2 \exp\left[-\frac{1}{2}\left\{ \frac{\sin^2[\pi(t_i-t_j)/\tau]}{\lambda_p^2} + \frac{(t_i-t_j)^2}{\lambda_e^2}\right\}\right], 
\end{align}
as in~\cite{haywood2014}. We use the values of the Evidence Challenge~\citep{nelson2018}, $\alpha = \sqrt{3}$ m/s, $\lambda_e = 50.0$ days, $\lambda_p = 0.5$ (unitless), and $\tau = 20.0$ days.
\item[$y_5$]:  same as $y_4$ but with values $\alpha = \sqrt{3}$ m/s, $\lambda_e = 50.0$ days, $\lambda_p = 0.3$ (unitless), and $\tau = 30.0$ days.
 \end{itemize}
The rationale behind taking $y_1$ as such is to emulate the effect of mild outliers, not obvious enough to be completely discarded. As the Laplace distribution has heavy tails, it generates values of the variances that are very different from each other. $y_2$ and $y_3$ are chosen as such because the strongest resonances found in Kepler data are the 3:2 and 1:2 (or close to) resonances. In first approximation, the real period ratio distribution is a combination of these two artificial distributions. Finally, $y_4$ and $y_5$ are two types of stellar noises, the second one having a slightly stronger periodic component in the covariance. 

For the simulation where the eccentricity is fixed to zero, $S_0$, on each simulation we compute the probability $\mathrm{Pr}\{e \in [0, 0.05] \}$. We report the number of simulations where this quantity is below a threshold $\alpha=0.05$ and that have at least a thousand effective samples. 
The results are shown on Fig.~\ref{fig:wrongrej} for different values of the semi-amplitude of the input planet ($K = 2.5$, 3.5 or 5 \REWRITE{m.s\textsuperscript{-1}}, respectively blue, red and yellow markers). These values of $K$ were chosen to be close to relatively low SNRs. The small eccentricity rejection rate is approximately constant for the different noises except for a resonant companion, where the non zero eccentricity is, on average, rejected in 26\% of the cases, that is 13 times more often than for the other noise models. A non resonant companion also might induce an increased rate of rejected zero eccentricity (4.2\% on average vs 1.3\% for white noise).  The power of the stellar noise is stronger around 20 and 30 days, which is not on a harmonic of the period of the planet (95 days), and therefore they lead to an even lower rejection of the low eccentricity scenario (see section~\ref{sec:correlations}).

 In the simulation $S_e$, we compute the absolute value of the difference between the estimated eccentricity and the true one for the three estimators~\eqref{eq:maxmarginal}. In all cases, we find that the median and mean of the posterior distribution of eccentricity are very similar and largely outperform the other estimates in terms of mean squared error (MSE) (eq.~\eqref{eq:defmse}) and mean absolute error (MAE) eq.~\eqref{eq:defmae}). We have seen in section~\ref{sec:bayesianestimates} that for a correct model, the mean and median have respectively minimal MSE and MAE, this is indeed the case on the simulations with the white noise. The MSE and MAE over the 6 types of noise are reported in table~\ref{tab:estchoice}.  \REWRITE{The estimator $e_{\mathrm{mean}}$ and $e_{\mathrm{med}}$ are more accurate when the noise model is incorrect both for MSE and MAE, as a consequence we deem them as the best ones overall. }
\REWRITE{An argument in favour of $e_{\text{mean}}$ is that it has minimal MSE (eq.~\eqref{eq:defmse}). The MSE penalizes the square of the difference between the estimated and true value. Multiplying by two this difference multiplies by four the cost of the error. Therefore, estimators with small MSE are less likely to produce large errors.}
 
 This result seems in contradiction with~\cite{zakamska2011}, which finds the mode of the posterior to be less biased. However, they consider cases where the eccentricity is small. We also find that for small eccentricities,  $e_{\mathrm{max}}$ is less than $e_{\mathrm{med}}$ and $e_{\mathrm{mean}}$. Due to the Beta prior in eccentricity, it happens that the posterior is bi-modal and $e_{\mathrm{max}}=0$. When $e$ follows the prior distribution, $e_{\mathrm{med}}$ and $e_{\mathrm{mean}}$ are more accurate, at least in terms of MSE and MAE.

 We now only present in Fig.~\ref{fig:estimerr} the performance in terms of MAE of the estimator $e_{\mathrm{mean}}$ (circles) and the maximum likelihood estimator $e_{\mathrm{ML}}$ (crosses), for comparison (MSE behaves similarly). 
 \begin{table}
 \begin{tabular}{p{1.2cm}p{0.85cm}p{0.85cm}p{0.85cm}p{0.85cm}p{0.85cm}}
 		Estimator & $e_{\mathrm{ML}}$ & $e_{\mathrm{MAP}}$ & $e_{\mathrm{max}}$ &$e_{\mathrm{mean}}$& $e_{\mathrm{med}}$ \\ \hline
$\sqrt{\mathrm{MSE}}$&  0.1384  &  0.1362 &   0.1613  &  0.1088 &   0.1117 \\
 	MAE &	0.1071   & 0.1062  &  0.1239  &  0.0842 &   0.0849
 \end{tabular}
\caption{Root mean squared error ($\sqrt{\mathrm{MSE}}$) and mean absolute error (MAE) averaged over all types of noises of several estimators: maximum likelihood ($e_{\mathrm{ML}}$, max. \textit{a posteriori} $e_{\mathrm{MAP}}$, mode, mean and median of the eccentricity posterior $e_{\mathrm{max}}$ ,$e_{\mathrm{mean}}$ and $e_{\mathrm{med}}$ for the $K=3.5$\REWRITE{m.s\textsuperscript{-1}} simulation. }
\label{tab:estchoice}
 \end{table}
  In all cases $e_{\mathrm{mean}}$ shows better performance. The MAE does not heavily depend on the type of noise. However, we do observe a slight increase of the error for the unseen resonant companion. 
  Let us also note that as the signal semi-amplitude increases, the difference of performance between the maximum likelihood and $e_{\mathrm{ mean}}$ becomes less clear. 
 
 \begin{figure}
 	\includegraphics[width=8.4cm]{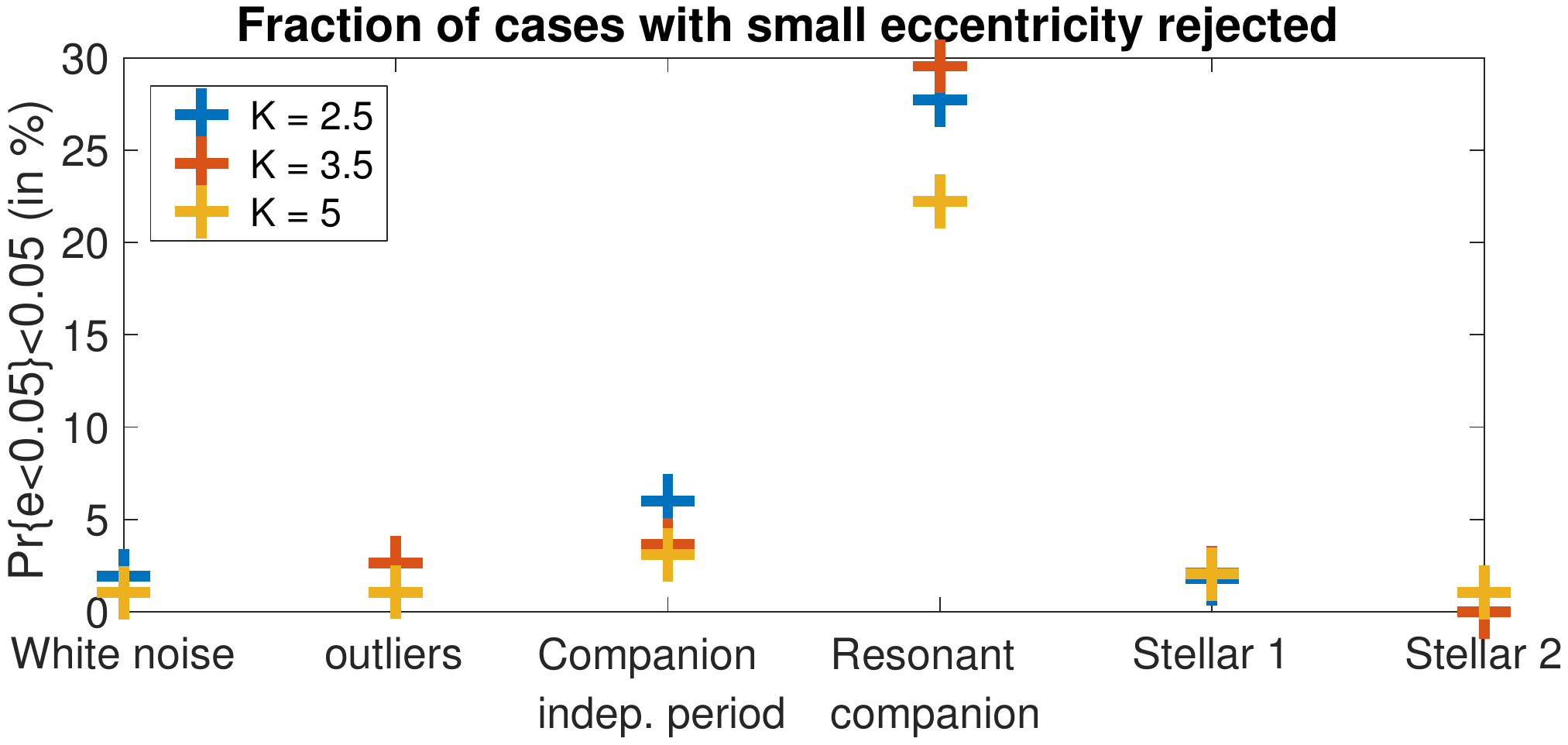}
 	\caption{Fraction of cases where the posterior probability of $e \in [0,0.05]$ is below 0.05 for an injected circular signal and different noises described in section~\ref{sec:incorrect_sim}. The blue, red and yellow points correspond to the experiment for $K$ = 2.5, 3.5 and 5 m.s\textsuperscript{-1} respectively. The posterior distribution is computed with the model~\eqref{eq:modeljitter}.}
 	\label{fig:wrongrej}
 \end{figure}
 \begin{figure}
 	\includegraphics[width=8.4cm]{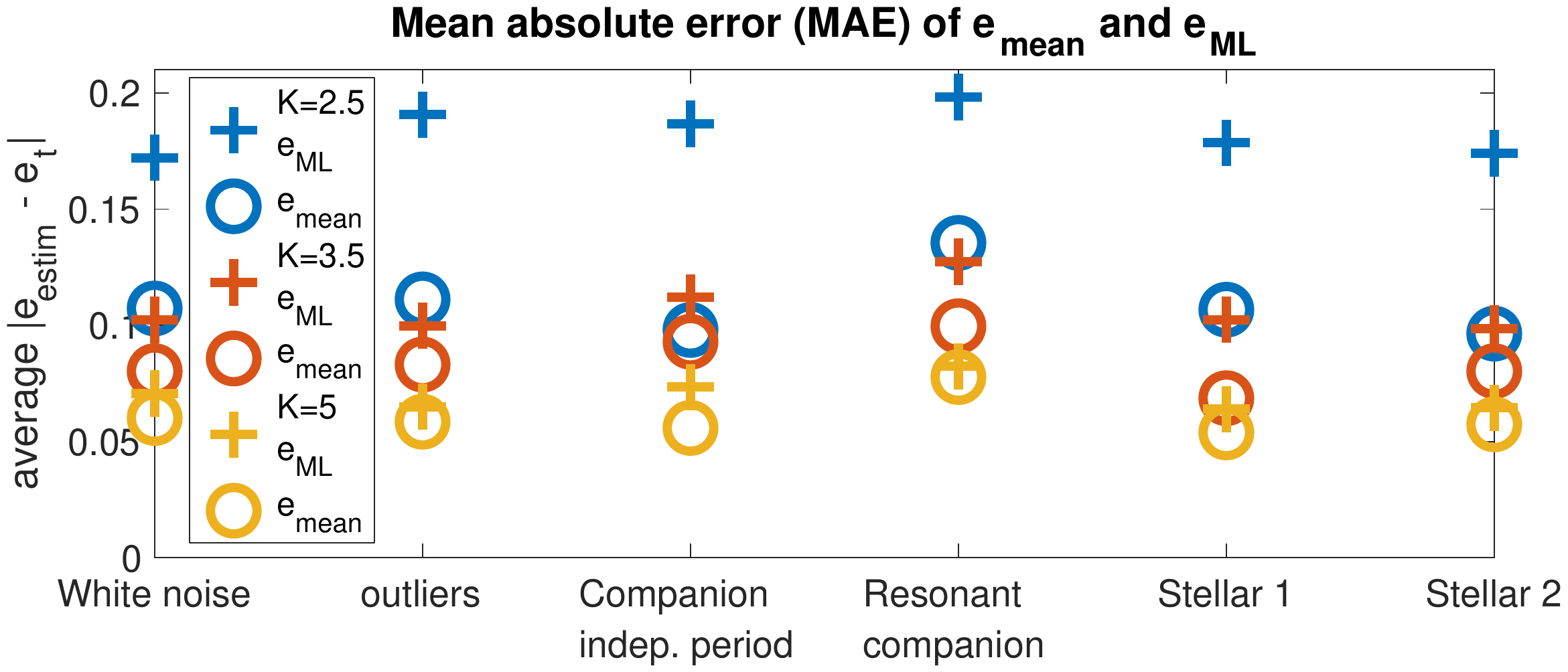}
 	\caption{Mean absolute error on eccentricity of the posterior mean ($e_{\mathrm{mean}}$, eq.~\eqref{eq:maxmarginal}) and maximum likelihood ($e_{\mathrm{ML}}$) eccentricity estimator. \REWRITE{The values of $K$ are given in m.s\textsuperscript{-1}.}}
 	\label{fig:estimerr}
 \end{figure}

\subsubsection{Interpretation: noise power at the planet semi-period}
\label{sec:correlations}

\REWRITE{	
In sections~\ref{sec:nongausianbody} we showed that the error on eccentricity is mainly determined by the true covariance of the noise. In section~\ref{sec:incorrect_sim} however, the simulated stellar noises did not yield particularly high errors on the eccentricity. We now show that the property of correlated noises most impacting eccentricity estimates is their power at the semi-period of the planet of interest.}

Let us consider a signal $\mathbfit{y} = \mathbfit{y}_0+ \bepsilon$, where $\mathbfit{y}_0$ is a circular orbit of period $P$, and $\bepsilon$ is an unknown stochastic signal, that the data analyst supposes to have covariance $\mathbfss{V}$. \REWRITE{Denoting by $\omega_0 : = 2\pi/P$, we define the $N\times 2$ matrix $ \mathbfss{M}_{k} = [\cos(k \omega_0 \mathbfit{t} ), \sin( k \omega_0 \mathbfit{t} ) ]$   and $\mathbfss{P}:= \mathbfss{M}_{k}  (\mathbfss{M}_{k}^T \mathbfss{V}^{-1} \mathbfss{M}_{k})^{-1} \mathbfss{M}_{k}^T \mathbfss{V}^{-1}$  the projection matrix onto the space spanned by the columns of $\mathbfss{M}_{k}$}. More generally, for a projection matrix $\mathbfss{P} $ onto a vector space $\mathbfss{M}$ we define
\begin{align}
R_\bepsilon(\mathbfss{M}) &=  \frac{(\mathbfss{P} \bepsilon)^T   \mathbfss{V}^{-1}   (\mathbfss{P} \bepsilon)  }{\bepsilon^T   \mathbfss{V}^{-1}   \bepsilon } \frac{N}{2} \label{eq:qepsilon} \\
Q_\bepsilon(\mathbfss{M})  &=  \mathbb{E}\left\{ Q_\bepsilon  \right\} \label{eq:repsilon}.
\end{align}
The rationale of defining these quantities is to identify if, assuming a covariance $\textbf{V}$ a noise is more correlated to a certain space than to its orthogonal. If $\bepsilon$ is a white noise, then it is not particularly correlated to any particular space, so that $R=1$. 
In the limit cases where $\bepsilon$ lies in, or is orthogonal to the space spanned by $ \mathbfss{M}_{k}$, then $R = N/2$ resp. 0. 

To test the influence of $Q_\bepsilon(\mathbfss{M}_{2})$ on the eccentricity error we proceed as follows. We consider an array of measurement times from a real system and generate a circular signal $y_0$  \REWRITE{plus a noise with a certain true covariance.  In order to obtain noises with very different spectral contents, we proceed as follows. }
 For a given frequency $\omega$, we draw thirty realizations of $\bepsilon = \cos( \omega \mathbfit{t} + \phi)$ , $\phi$ following a uniform distribution on $[0, 2 \pi]$. For each of them we compute $R_\bepsilon$, the least square estimate of the orbital elements $\widehat{\theta}$, the estimate of the noise level $\|\mathbfit{y} - \mathbfit{f}(\widehat{\theta}\|/\sqrt{N}$.
The $R_\bepsilon$ are averaged to have an estimate $\widehat{Q_\bepsilon}$ of $Q_\bepsilon$ and the average error on eccentricity $\langle |\widehat{e}- e_t|\rangle$. For each type of noise $\bepsilon$, we plot $(\widehat{Q_\bepsilon}, \langle |\widehat{e}- e_t|\rangle)$, which corresponds to a blue point in Fig.~\ref{fig:errorreps}. The point obtained with $\bepsilon$ being a white Gaussian noise model is represented with a yellow cross. We then bin the values with a constant step in $\log Q_\bepsilon$ and compute the average error, as well as its standard deviation (purple stair curve). We also compute the mean value of the estimated jitter, \REWRITER{and divide it by the value of the jitter estimated for $Q_\bepsilon = 1$. The normalized jitter so obtained is represented in green, with its scale on the right $y$ axis}. 

Fig.~\ref{fig:errorreps} is obtained with the time array of Gl 96 SOPHIE measurements (67 measurements), $e_t=0$, an assumed covariance matrix $\mathbfss{V}$ equal to identity, a period of 40 days and fixed ratio of the norm of the Keplerian signal and the input noise of 10 (\REWRITE{which corresponds to a very high SNR = 78}), to have as little influence as possible of the noise level. As the noise becomes more correlated with the $\mathbfss{M}_{2}$ space ($Q_\bepsilon$ increases) it is absorbed in the fit and the RMS of the residual decreases. Indeed in Fig.~\ref{fig:errorreps} it is apparent that as $Q_\bepsilon$ increases the error on eccentricity grows while the estimated level of the noise decreases. \REWRITER{The error on $e$ for $e_t = 0.9$ is consistently higher, which is likely due to local minima at high eccentricity (see section~\ref{sec:complicatedshape_body})}

One can then wonder how the SNR $S$ affects the bias on eccentricity for correlated noises. \REWRITER{Defining the noise ``quality factor'' as $q:=\sqrt{Q_\bepsilon(\mathbfss{M}_2)}$}, in the linear approximation used for~\eqref{eq:snr}, the uncertainty on $k$ and $h$ becomes $q\sigma_k$. With notations of equation~\eqref{eq:bias00}, the bias at $e_t=0$ is therefore approximately 
\begin{align}
	b(0,S) \approx \sqrt{\frac{\pi }{2}} \frac{q}{S}.
\end{align}
 For a given $q$, as the SNR increases, the bias decreases. 
 
To check if the power at semi period is also correlated with the error on eccentricity if the true eccentricity is high, we perform the same experiment with a value of the eccentricity equal to $0, 0.3,0.5,0.7,0.9$. Again, the ratio of the norm of the Keplerian signal and the input noise is fixed to 10. For each eccentricity the experiment is done 20 times with periods randomly drawn with a log-normal law. The results are shown in Fig.~\ref{fig:errorrepsmean}. We first remark that as $e$ increases, the bias, and therefore the error on eccentricity, decreases. Secondly, from $e = 0.5$, $Q_\bepsilon$ is less relevant to predict the effect of the noise on eccentricity estimates. This result is in accordance with~\cite{wyttenmeier2019},  who found that planets with e>0.5 are very unlikely to be mistaken for a two circular planet model. As eccentricity increases, the harmonics of order greater than two increase, so that a noise localised at the semi period cannot mimic a higher eccentricity. 

The same experiment is performed with a ratio of norm of the signal and the noise of three and an injected circular orbit. We compute $Q_\bepsilon(\mathbfss{M}_k)$  with $k=1/2, 2/3, 1, 3/2, 2, 3, 4, 5$. The results are plotted on Fig.~\ref{fig:errorharmo}, with a colour code for each $k$. It appears that a strong component of the noise on the harmonic 2, 3, 4 or 3/2 and 5 lead to an increased error on eccentricity (in decreasing order of effect). The noise level is notably underestimated in the $k=2$ case. On the contrary, a strong component of the noise on $k=0$ leads to a smaller error, which is easy to understand. Indeed, since the eccentricity of the injected signal is zero, the noise reinforces the signal.

\begin{figure*}
	\begin{minipage}[l]{0.48\textwidth} 
		\includegraphics[width=8.2cm]{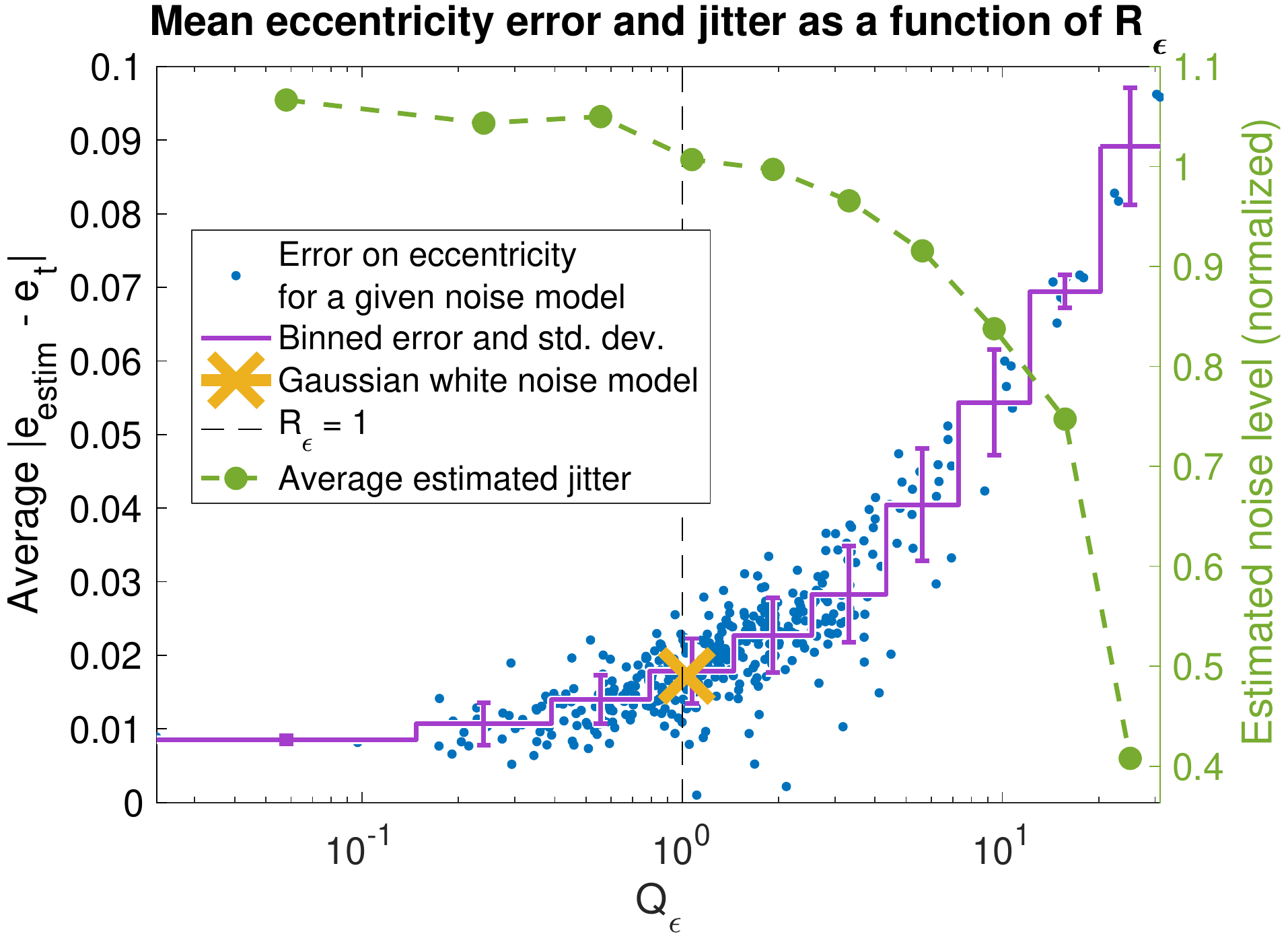}
		\caption{Error on eccentricity plotted against the estimated $Q_\bepsilon$ (see eq.~\eqref{eq:repsilon}) for different noise types (blue points) and Gaussian white noise (yellow cross). Same value averaged per interval of $Q_\bepsilon$ with standard deviations (purple stair curve). The estimates of noise level, averaged per bin, are represented in green.}
		\label{fig:errorreps}
	\end{minipage} \hfill
	\begin{minipage}[c]{0.48\textwidth}
		\includegraphics[width=8.2cm]{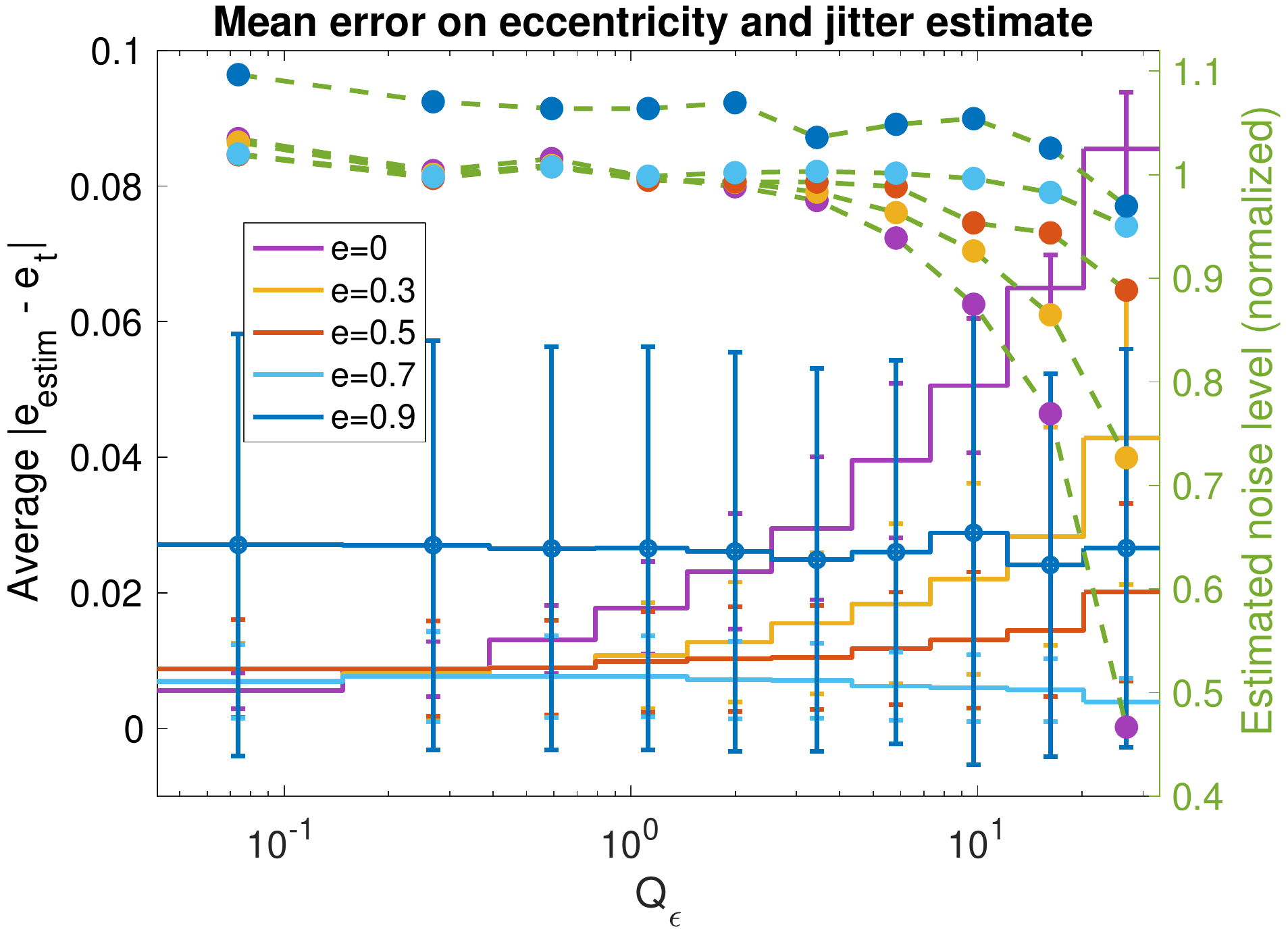}
		\caption{Error on eccentricity plotted against the estimated $Q_\bepsilon$ (see eq.~\eqref{eq:repsilon}) averaged for 10 different periods. The average error per $Q_\bepsilon$ and the estimated jitter for the different values of the true eccentricity are represented: $e=0,0.3,0.5,0.7,0.9$ (resp. purple, yellow, red, light blue, dark blue).}
		\label{fig:errorrepsmean}
	\end{minipage}\\ [0.6cm]
\end{figure*}
\begin{figure}
	\includegraphics[width=8.2cm]{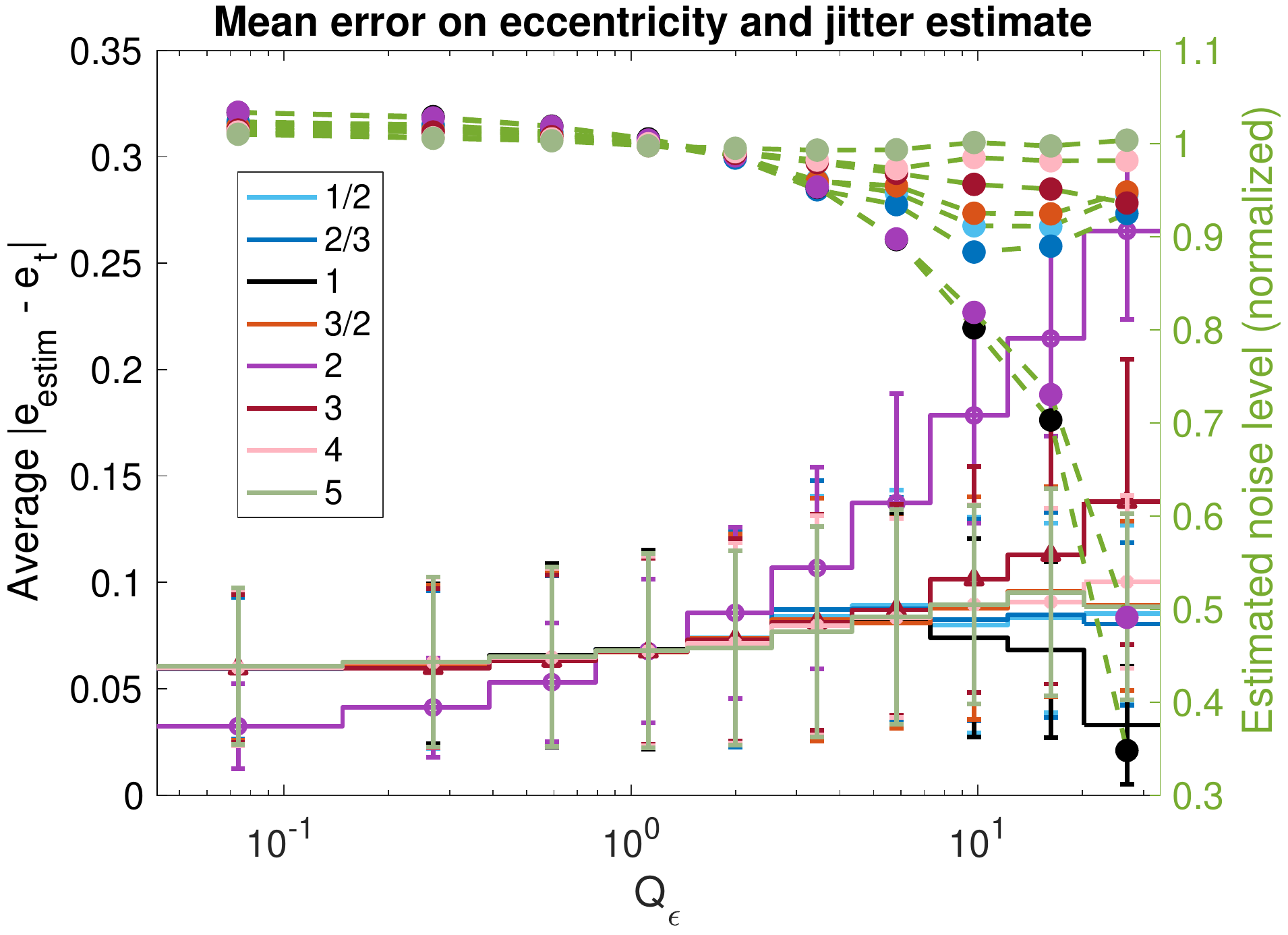}
	\caption{Error on eccentricity (solid stairs) and estimated error (round markers) plotted for a true circular orbit against the estimated $Q_\bepsilon(\mathbfss{P}_k)$ (see eq.~\eqref{eq:repsilon}) where $\mathbfss{M}_k$ are the vector spaces =$ (\cos k\omega_0 \mathbfit{t}, \sin k\omega_0 \mathbfit{t}) $ for $k$=0.5...5 (see legend for color code) and $\omega_0$ is the frequency of the input planet. }
	\label{fig:errorharmo}
\end{figure}

\REWRITE{
The results of section~\ref{sec:incorrect_sim} are interpretable with the analysis above. The metric $Q_\bepsilon$ defined in eq.~\eqref{eq:repsilon} is computed (here  $Q_\bepsilon(\mathbfss{M}_{2})$) for the noises of section~\ref{sec:incorrect_sim}, that are noises generated with nominal uncertainties plus $y_0,...y_5$.  For $K = 2.5$, 3.5 and 5 m.s\textsuperscript{-1}, we generate 10,000 realisations of these noises, and inject a circular planet at period $P$ with a random phase. We do this simulation for $P = 5$ to 100 days per step of 5 days. in each of the $ 6 \times 10,000 \times 20 $ simulations, we compute the FAP associated with $e=0$ with formula~\eqref{eq:FAPe}, and adopt as a convention that $e=0$ is rejected if the $p$-value is below 0.05. We also compute $R_\bepsilon(\mathbf{M}_2)$ as defined in eq.~\eqref{eq:qepsilon}. For the 6 noises and 20 periods, we average the values of the $R_\bepsilon$ to obtain $Q_\bepsilon$~\eqref{eq:repsilon}.  
 Fig.~\ref{fig:repsilon_norm} shows the proportion of false eccentricity rejected as a function of $Q_\bepsilon$ for $K = 5$ m.s\textsuperscript{-1}. It clearly appears that there is a strong correlation with the power of the noise at the semi period.  
}

Obviously, one can test on a given system if a specific type of noise has a particular impact on a planet with injected parameters, the goal of this section was to identify some generic properties of the noise that lead to spurious inferences.

\begin{figure}
	\hspace{-0.15cm}
		\includegraphics[width=7.5cm]{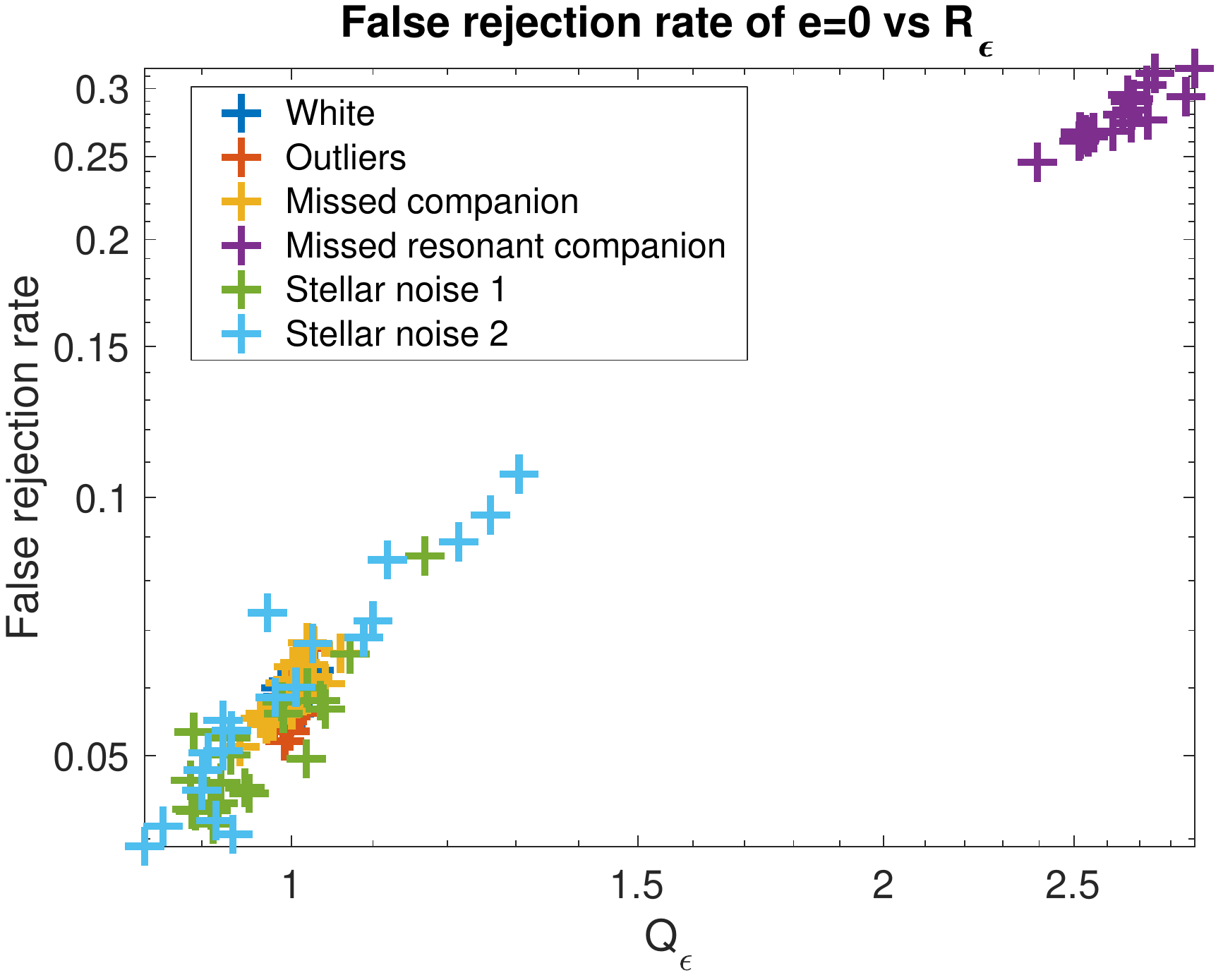}
		\caption{False rejection rate estimated with formula~\eqref{eq:lrt2_body} as a function of $Q_\bepsilon$ as defined in eq.~\eqref{eq:repsilon} for the six noises generated ($y_0,...y_5$) and period of the true planet equal to 5..5..100 days.}
	\label{fig:repsilon_norm}
\end{figure}

\subsection{Robustness to prior changes}
\label{sec:priors}
\subsubsection{Simulations}
\label{sec:sensitprior}
In the previous sections we have studied the impact of having a wrong likelihood function. We now turn to the sensitivity of the estimate on the prior probability, here with a numerical experiment.
With the formalism of section~\ref{sec:pbstate}, the data are generated with prior and likelihood $p_t(\btheta), p_t(\mathbfit{y}|\btheta)$, and the analysis is done with $p(\btheta), p_t(\mathbfit{y}|\btheta)$. The likelihood is correct, but the prior is not, which corresponds to having an incorrect idea of the population distribution. Note that the prior probability of all parameters have an effect on the eccentricity estimate, but we focus on the prior probability chosen for the eccentricity.

Two distributions are considered. We generate eccentricities according to the distribution Beta(a=0.867, b=3.03) and compute posterior probabilities with a uniform prior. Otherwise the priors are taken as in table~\ref{tab:priorbayes} and the data are generated as in section~\ref{sec:wronglikelihood}. The results are shown in table~\ref{tab:prior} as a function of the input amplitude. The errors, measured by MSE and MAE are systematically worse when using the incorrect prior, however \REWRITE{with an extra error not exceeding 15\%}. It seems like an error of $\approx 0.01$ on $e$ could be counted as uncertainty on the prior distribution.

The uncertainty on the prior seems not to be a major concern, at least for the estimation of eccentricity. However, in some cases, one might want to recompute the credible interval with another prior, which is the object of the next section.

\begin{table}
	\begin{tabular}{p{1.2cm}| p{1.2cm}|p{1.2cm}|p{1.2cm}p{1.2cm}}
		Estimator & prior	& K=2 m.s\textsuperscript{-1}& K=3.5 m.s\textsuperscript{-1} & K= 5 m.s\textsuperscript{-1} \\ \hline
		$\sqrt{\mathrm{MSE}}$& correct 	&0.1325 & 0.0994& 0.0791 \\ 
		$\sqrt{\mathrm{MSE}}$& incorrect 	&0.1530 & 0.1118& 0.0899 \\ \hline\hline
		MAE& correct 	&0.1073 & 0.0803& 0.0601 \\
		MAE&  incorrect & 0.1188&  0.0851&  0.0713 
	\end{tabular}
	\caption{Mean absolute and mean square error (MAE and MSE) of the estimate when the eccentricities are generated with a Beta distributions, and the analysis is done with the same Beta distribution as prior (correct model) or done with a uniform prior (incorrect model)}
	\label{tab:prior}
\end{table}

\subsubsection{Recomputing the posterior without new sampling}
\label{sec:recomputepriors}
The most straightforward way to explore the dependency of the posterior on the prior is to recompute it with another prior distribution. However, this might be lengthy to do it systematically on several systems.
We here propose an alternative which consists in multiplying the prior by a constant on a subset of its domain of definition, and to scale it elsewhere. In so doing, the output of the posterior sampler can be used straightforwardly without doing any sampling. In the following we illustrate the process with the prior on eccentricity.

We consider a measurable subset $D$ of $[0,1]$ and its complement  $\bar{D}$ in $[0,1]$. For instance an interval $D = [0, e_0]$ for some arbitrary $e_0 \in (0,1]$. Let us denote by $\tilde{\btheta}$ the model parameters other than eccentricity, so $ \btheta = (e,\tilde{\btheta})$.
We define a new prior $p'$ such that for $e \in D$, $p'(e, \tilde{\btheta}) = x p(e, \tilde{\btheta})$. To ensure that $\int_0^1 \int p'(e, \tilde{\btheta}) \dd \tilde{\btheta}\dd e= 1$, we take
\begin{align}
\label{eq:a}
a &= \int_D \int p(e, \tilde{\btheta})  \dd \tilde{\btheta} \dd e. \\
x &\in  [0, (1-a)/a] \\
p'(e,\tilde{\btheta}) &= (1-ax)/(1-a) p(e, \tilde{\btheta}) \; \;  \text{ for } \; \; e \in\bar{D} 
\end{align}
We now want to compute the probability that $e\in C$ for a prior distribution $p'(e,\tilde{\btheta})$. This one is given by replacing $p$ by $p'$ in equation~\eqref{eq:credible}. With the notations  
\begin{align}
z &:= (1-ax)/(1-a)\\
I_E &:= \int_{e \in E} \int_{\tilde{\btheta}}  \frac{p(\mathbfit{y}|e, \tilde{\btheta}) p(e, \tilde{\btheta})}{p(y)} \dd \tilde{\btheta} \dd e
\end{align}
for $E \subset [0,1]$, we compute
\begin{align}
\begin{split}
\mathrm{Pr}\{ e \in C | \mathbfit{y}, p' \}  &=  \int_{C} \int_{\tilde{\btheta}} p'(e|\mathbfit{y})    \dd \tilde{\btheta}\dd e\\
& =  \frac{ x \; I_{C\cap D} + z \; I_{C\cap \bar{D}} }{ x \; I_D + z \;  I_{\bar{D}}} 
\label{eq:credible2}
\end{split}
\end{align}
where $p'(e|\mathbfit{y})$ is the posterior distribution when the prior is $p'(e)$ and $p(\mathbfit{y} |e)$ is the likelihood marginalized on all parameters but eccentricity.
When $x=0$, all the prior probability goes to the complementary of $D$ and $\mathrm{Pr}\{ e \in D | \mathbfit{y} \}$ goes to 0. If $x = (1-a)/a$, $e$ is certainly in $D$ therefore $\mathrm{Pr}\{ e \in D | \mathbfit{y} \}=1$. 

The advantage of this calculation is that the integrals can be computed from the posterior samples. Denoting, $| E |$ the number of the MCMC samples that are such that $e\in E \subset [0,1]$ and $N_{\mathrm{tot}}$ the total number of samples, an estimate of $I_E$ is $\widehat{I_E} = |E|/N_{\mathrm{tot}}$, so that an estimate of~\eqref{eq:credible2} is
\begin{align}
\widehat{\mathrm{Pr}}\{ e \in C | \mathbfit{y} , p' \} = \frac{x \; |C \cap D |+ z \; | C \cap \bar{D} | }{x \; |D  |+ z | \bar{D}|}.
\label{eq:credible3}
\end{align}
The reasoning can be extended straightforwardly to credible regions $D$ and $C$ in the parameter space, and to prior region subdivisions in $D_1...D_q$ with disjoint $(D_i)_{i=1..q}$ whose union is the whole parameter space. 

Apart from the Markov chain samples, the only quantity needed to use~\eqref{eq:credible3} is $a$ as given by~\eqref{eq:a}. This expression might be difficult to compute in general, but in the case where $p(e,\tilde{\btheta}) = p(e)p(\tilde{\btheta}) $, $a = \int_D p(e) \dd e $, which is a one-dimensional integral. Analytic expressions might exist and a Riemann integration is always possible.

Since the integrals to be evaluated from posterior samples are random variables, it must be ensured that they have a controlled uncertainty. When breaking the posterior in many domains $D_1...D_q$, the procedure outlined may become unreliable if there are not enough independent samples in each $D_k$. One can easily compute the effective number of samples in each $D_k$, $N_{\mathrm{eff},k}$ by counting how many samples are in that region and dividing by the correlation time-scale. A number of effective samples greater than $n$ gives an accuracy of $\approx 1/\sqrt{n} \times 100 \%$ on the probability $\mathrm{Pr}\{e \in D_k | \mathbfit{y}\}$.  Further investigation is left for future work.

\subsection{Model comparison: one eccentric planet or 2:1 mean motion resonance}
\label{sec:worstcase}
 
\REWRITE{A system of two planets in 2:1 mean motion resonance can be mistaken for one eccentric planet, and vice versa.}
We here study the possibility to disentangle the two  cases via Bayes factor as a function of the SNR.  Two models are considered, $\mathcal{M}_e$ and $\mathcal{M}_{1:2}$, respectively an eccentric planet and two circular planets in mean motion resonance:
\begin{align}
\mathcal{M}_e: \;\;\;   y(t) &= K(\cos (\nu + \omega) + e\cos(\omega) )+ g(\tilde{\btheta}) + \  \epsilon \\ 
\mathcal{M}_{1:2}: \;\;\;    y(t) &= K_1\cos \left(\frac{2\pi}{P_1} t + \phi_{01}\right)  + K_2\cos \left(\frac{2\pi}{P_2} t + \phi_{02}\right) \\ & \; \; \; \; + g(\tilde{\btheta}) + \  \epsilon \nonumber
\end{align}
Denoting by $g(\tilde{\btheta})$ a deterministic model encapsulating other planets, offsets, trends etc.. We also let vary a jitter term $\sigma_J$ as in eq.~\eqref{eq:likelihood}. The Bayes factor of the two models is defined as
\begin{align}
B = \frac{p(\mathbfit{y} | \mathcal{M}_e)}{p(\mathbfit{y}  | \mathcal{M}_{2:1})} .
\end{align}
where 
\begin{align}
p(\mathbfit{y}  | \mathcal{M})   = \int  p(\mathbfit{y}| \btheta)     p(\theta) \dd \btheta
\end{align}
with $\btheta = (K,k,h,P,M_0,\tilde{\btheta}, \sigma_J)$ or $\btheta = (K_1,P_1,M_{01}, K_2,P_2,M_{02},\tilde{\btheta}, \sigma_J)$ for $\mathcal{M} = \mathcal{M}_e$ and $\mathcal{M} = \mathcal{M}_{2:1}$ respectively.
\REWRITE{We expect the two models to be distinguishable if the amplitude of the second harmonic of the signal can be resolved~\citep[][see eq. 5]{angladaescude2010}.}

In order to determine at which SNR the Bayes factor allows to disentangle resonant planets and eccentric ones, we perform a numerical experiment. We select a semi-amplitude $K$ and a period $P$, then generate a Keplerian signal with $e=0.25$, random $M_0$ and $\omega$. $K$ is chosen on a grid (2,5,8,11,14 m.s\textsuperscript{-1}), to see how the ability to disentangle scenarios evolves with the true SNR.

On the other hand, we generate a two planet system. The outer planet has period $P$ with random phase and semi-amplitude $K_1 = K$. The inner planet is circular with $K_2 = K_1/4$, so that $K_2/K_1 = e$ of the single planet.  The phase of the second planet is chosen uniformly. The simulation is performed in two different settings. In the first one, the period ratio is fixed to $P_2 = P_1/2$. In the second,
 $1/P_2$ is chosen uniformly between $[(1-\alpha)2/P_1 , (1+\alpha)2/P_1 ]$ with $\alpha = 0.1$. The rationale behind the choice of $P_2$ is that the period ratios of Kepler planets are located within a neighbourhood of 2:1~\citep{steffen2015}. 

	The priors chosen to compute the Bayes factor are summarized in table~\ref{tab:priorbayesf}.
The Bayes factor is computed with the nested sampling algorithm PolyChord~\citep{handley2015b, handley2015}. The performance of the algorithm was checked on the data sets of the Evidence Challenge~\citep{nelson2018}. 
\REWRITE{For each data set, the algorithm is ran at least five times. The $\ln \mathcal{Z}$ estimate is taken as the median of the different runs and the error bars are given by the variance of the empirical median as provided by~\cite{kenney1962}. }
The error on the log Bayes factor $\ln$BF = $\ln \mathcal{Z}_2 - \ln \mathcal{Z}_1 $ is taken as $(\sigma_{\mathcal{Z}_1}^2 + \sigma_{\mathcal{Z}_2}^2)^{1/2}$.

Fig.~\ref{fig:bayesfact} and ~\ref{fig:bayesfact2} show the results of the simulation, where the measurement time arrays are those of CoRoT-9~\citep{bonomo2017a} and Gl 96~\citep{hobson2018}. The $\ln$ of the Bayes factor is represented, such that the correct model is always at the numerator. The black dashed lines indicate a Bayes factor equal to 150 and 1/150. Bayes factors above 150 and below 1/150 correspond respectively to very strong evidence in favour or against the correct model. 

Fig.~\ref{fig:bayesfact} and ~\ref{fig:bayesfact2}  show that distinguishing the resonant and eccentric models is possible, especially if the period of the inner planet varies, which is the case in Kepler data. \REWRITE{Nonetheless, at low SNR, there are cases of decisive Bayes factor against the correct model. In the low SNR regime, the evidence is dominated by the prior, and the parameter space is hard to explore. Both effects can account for these spurious results. }

We point out that the values reported in Fig.~\ref{fig:bayesfact} and ~\ref{fig:bayesfact2} are very sensitive to the prior on $K$. The narrower it is, the ``cheaper'' it is to add a planet, such that the two planets model is be favoured. The same experiment as above is done with a flat prior on $K$ on $[0, 40]$ m/s. In that case, the two planet model is systematically favoured, \REWRITE{as shown on Fig.~\ref{fig:bayesfact3}. }
	
	\REWRITE{
	As a conclusion, it seems good practice to check the influence of the prior on $K$. Secondly, since we expect that as the period ratio of resonant planets not to be exactly one half, if the two planet hypothesis is true then it will be strongly favoured by the Bayes factor. It therefore seems reasonable to consider the eccentric planet as the null hypothesis and not to reject it if there is no strong evidence for the two planet model.}

\begin{table}
	\caption{Priors used for the numerical experiments.}
	\begin{tabular}{p{1.1cm}|p{6cm}}
		Parameter & prior \\ \hline
		$K$ & Uniform on $[0,10000]$\\
		$P$ &  $1/P$ uniform on $[0, 20] $ \\
		$e$ & Beta(a=0.867,b=3.03) as in~\cite{kipping2014}\\
		$\omega$ & uniform on $[0,2\pi]$\\
		 $M_0$ &  uniform on $[0,2\pi]$ \\
	\end{tabular}
	\label{tab:priorbayes}
\end{table}
\begin{table}
	\caption{Priors used for the Bayes factor of eccentric model vs 2:1 mean motion resonance. The symbol $T_{\mathrm{obs}}$ denotes the observation time, $P_0 = 60$ days and $\alpha =0.1$.}
	\begin{tabular}{p{1.1cm}|p{6cm}}
 Parameter & prior \\ \hline
 $\sigma_J^2 $ & Uniform on $[0, 100]$ m/s \\ 
 offset & Uniform on $[-100, 100]$ \\ \hline
 $K,K_1$ & Uniform in $\ln K$ on $[-1,9]$\\
 $P,P_1$ &  $1/P$ uniform on $[1/P_0- 1/T_{\mathrm{obs}},1/P_0+ 1/T_{\mathrm{obs}}] $ \\
 $e$ & Uniform on $[0,1]$\\
 $\omega$ & uniform on $[0,2\pi]$\\
 $M_0$ &  uniform on $[0,2\pi]$ \\ \hline 
 $K_2$ & Uniform in $\ln K$ on $[-1,9]$\\
 $P_2$ & $1/P$ uniform on $[(1-\alpha)\frac{2}{P_0}, (1+\alpha)\frac{2}{P_0}    ]$ \\
 $\phi_1$ & uniform on $[0,2\pi]$\\
 $\phi_2$ & uniform on $[0,2\pi]$\\
	\end{tabular}
\label{tab:priorbayesf}
\end{table}

\begin{figure*}
\begin{minipage}[c]{0.49\textwidth}
	\includegraphics[width=8.2cm]{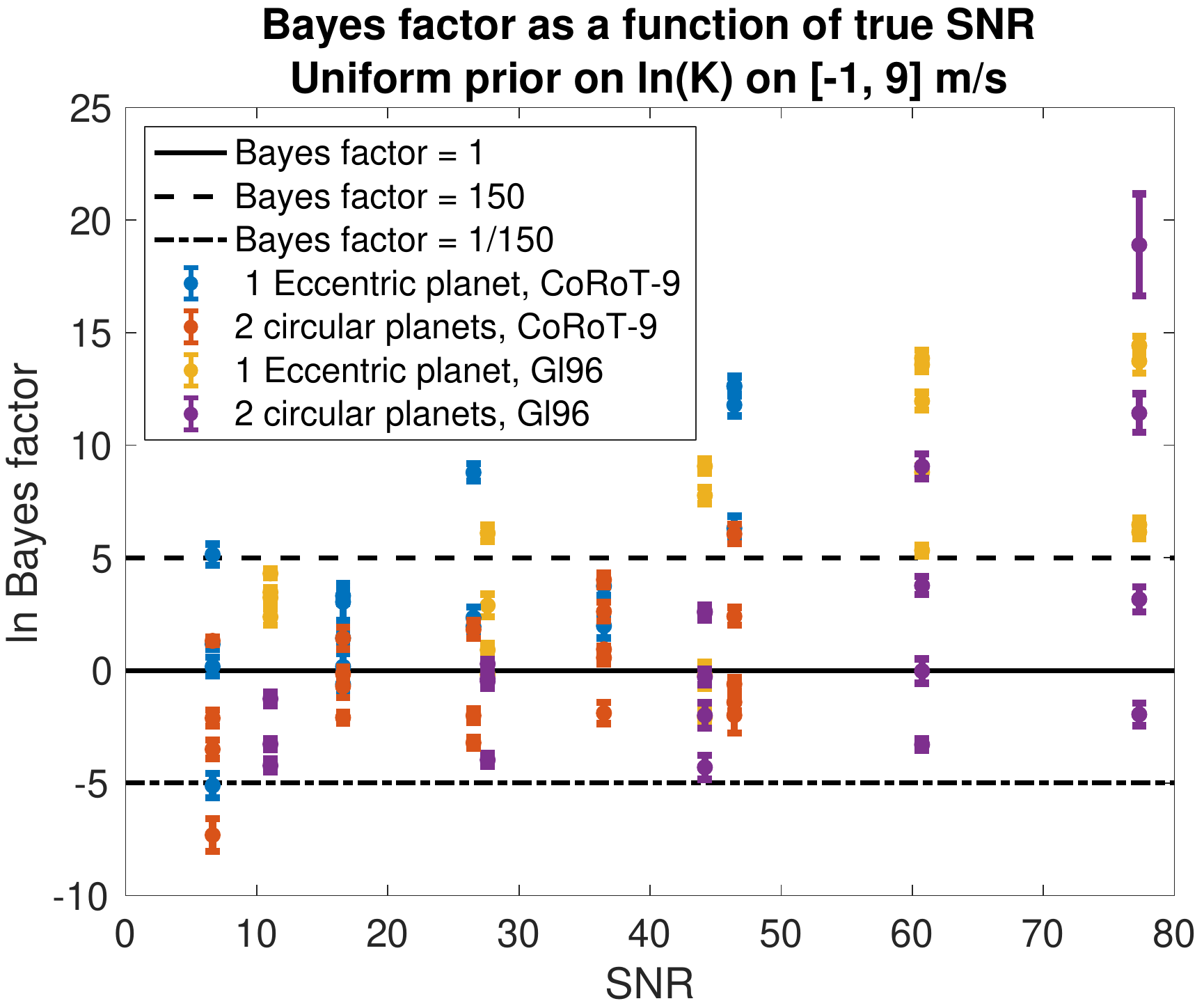}
	\caption{Difference of the log evidences of the correct and incorrect models. In the two planet case, the inner planet is generated with period exactly half of the outer planet. The color code corresponds to the description of the true data set (1 eccentric or 2 circular planets) and the array of measurement times used (CoRoT-9 or Gl96). }
	\label{fig:bayesfact}
\end{minipage}\hspace{0.2cm}
\begin{minipage}[c]{0.49\textwidth}
	\includegraphics[width=8.2cm]{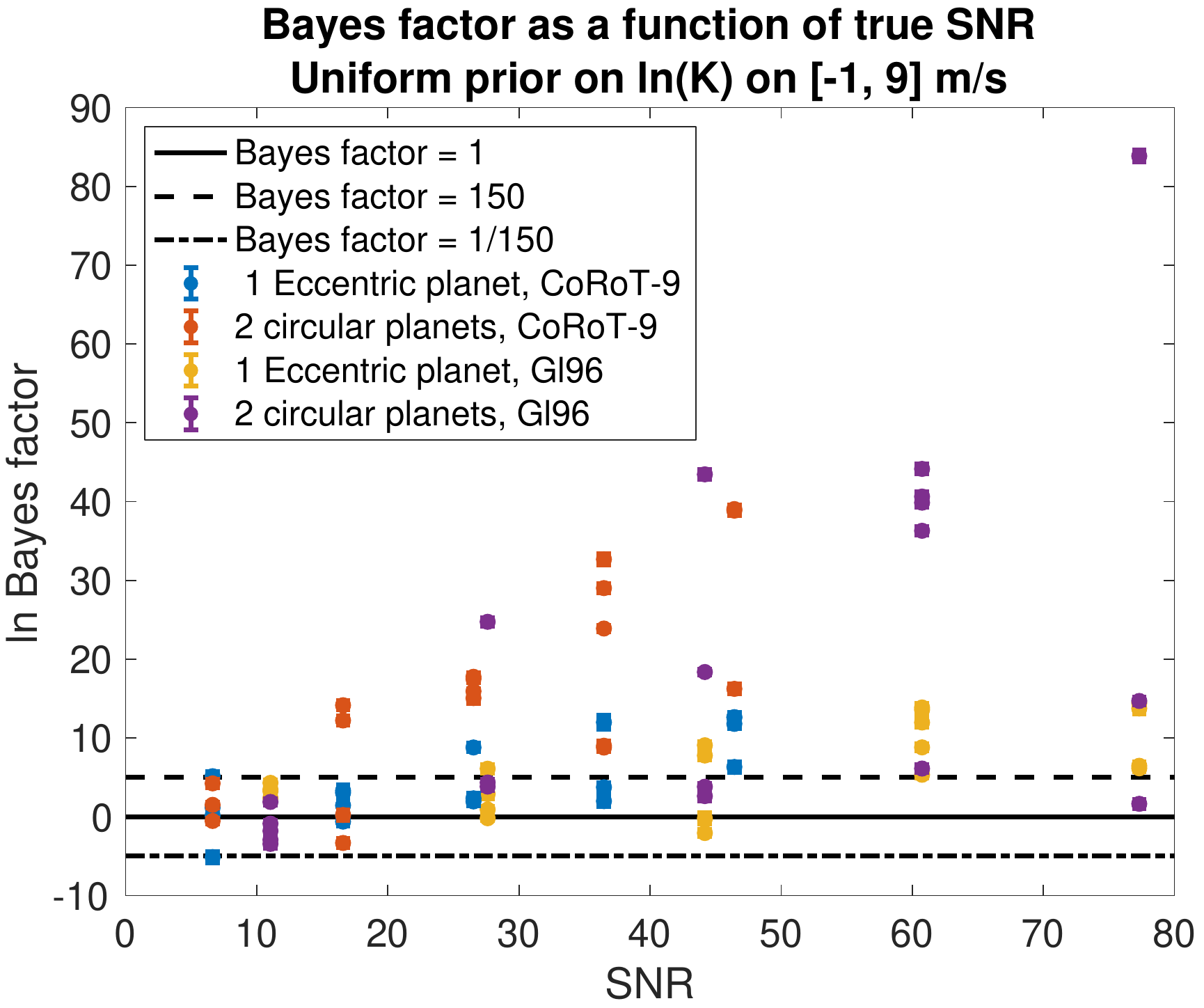}
	\caption{Difference of the log evidences of the correct and incorrect models, when the inner planet is generated with a frequency uniformly drawn on the same interval as the prior (table~\ref{tab:priorbayesf})). }
	\label{fig:bayesfact2}
\end{minipage}\hfill
\end{figure*}
\begin{figure}
	\includegraphics[width=8.2cm]{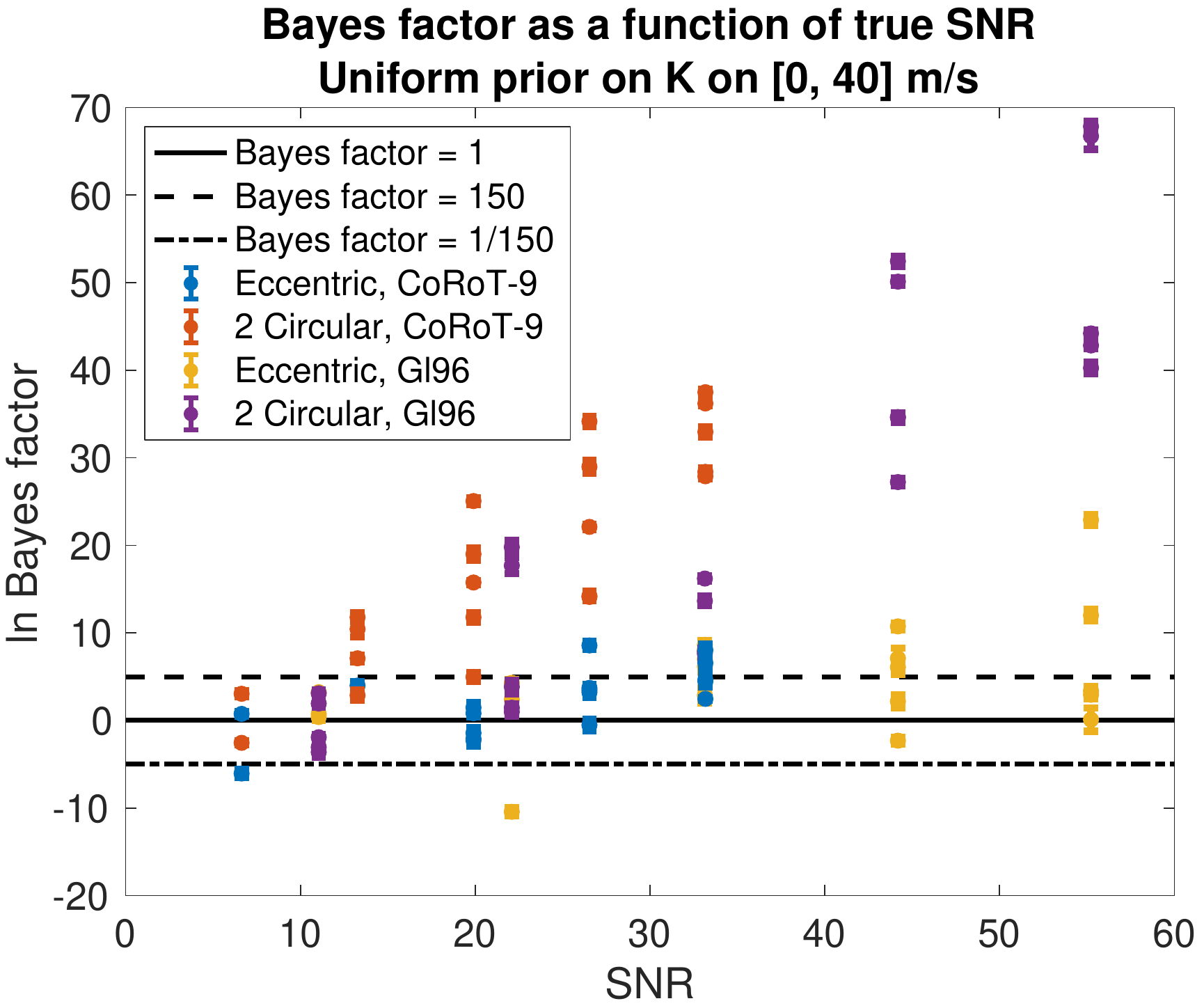}
	\caption{Difference of the log evidences of the correct and incorrect models, when the inner planet is generated with a frequency uniformly drawn on the same interval as the prior (table~\ref{tab:priorbayesf})) except for the semi amplitude, where the prior is uniform on [0, 40] m.\textsuperscript{-1}. }
	\label{fig:bayesfact3}
\end{figure}

\section{Detecting model errors: residual analysis}
\label{sec:resanalysis}
\subsection{Objective}

 In the previous section, we have seen that adjusting a jitter term is satisfactory in most cases, however we did not envision all possible errors. It is good practice to check if the models considered are plausible at all. One approach to take into account model uncertainty is to consider many models~\citep{jones2017} and rank them via cross validation, Akaike Information Criterion (AIC) or Bayesian Information Criterion (BIC), or even Bayes factor if possible. Alternately, we can test the hypothesis that the residuals are consistent with the model in an absolute sense. This problem is sometimes referred to as the goodness-of-fit problem, and is in general difficult~\citep[see][chap. 14]{lehmannromano2005}.

We reason as follows: if the set of models is appropriate to describe the data, then the residuals of the best fit must verify certain properties. If they do not, then we reject the hypothesis that there is one acceptable model that explains entirely the data set, among the set of models considered. Furthermore, we would like to obtain hints on the origin of a model misspecification. We expect outliers to change the  distribution of the residuals, and astrophysical or instrumental noise to introduce correlations.
 As a consequence, we consider two types of checks: is the distribution of the residuals approximately Gaussian? and: is there remaining time correlations in the residuals? 

\subsection{Distribution of the residuals}

As in section~\ref{sec:moremodels}, we first examine the linear case and show that the results are still helpful in the non linear setting. Let us suppose that we have a linear model $\mathbfit{y} = \mathbfss{A} \btheta + \bepsilon$ where $\mathbfss{A}$ is a $N\times p$ matrix and $\bepsilon$ is a Gaussian noise of covariance matrix $\mathbfss{V} =: \mathbfss{W}^{-1}$.  Let us denote by $\widehat{\mathbfit{y}}$ the least square fit model, and suppose the model ($\mathbfss{A},\mathbfss{V}$) is known. Then the weighted residual 
\begin{align}
\label{eq:rw}
\mathbfit{r}_{W} := \mathbfss{W}^{1/2} (\mathbfit{y} - \widehat{\mathbfit{y}})
\end{align}
is a vector of $N$ random variables that are approximately independent, Gaussian of null mean and variance one. To obtain a weighted residual that is a vector of independent Gaussian variables, let us define $\mathbfss{Q}$, the matrix such that $\mathbfss{J} = \mathbfss{Q}^T (\mathbfss{I}_N - \mathbfss{W}^{1/2}\mathbfss{A}^T(\mathbfss{A}^T \mathbfss{W}\mathbfss{A})^{-1}\mathbfss{A}^T \mathbfss{W}^{1/2}) \mathbfss{Q}$ is diagonal (it exists). Then the re-weighted residual $\mathbfit{r}_{QW}' = \mathbfss{Q}^T\mathbfss{W}^{1/2} (\mathbfit{y} - \widehat{\mathbfit{y}})$ has $p$ null components. The $N-p$ others are Gaussian variables of mean 0 and variance 1. In what follows, we denote by $\mathbfit{r}_{QW}$ the vector made of the $N-p$ components of $\mathbfit{r}_{QW}'$.  These two results are proven in Appendix~\ref{appendixresiduals}.

In practice, $\mathbfss{A}$ and $\mathbfss{V}$ are unknown, and we choose models $\mathbfss{A}'$ and $\mathbfss{V}'$. The two above properties can be used to test if $(\mathbfss{A},\mathbfss{V}) = (\mathbfss{A}',\mathbfss{V}')$ because if so, then the weighted residuals $\mathbfit{r}_W$ and $\mathbfit{r}_{QW}$ have a known distribution. 

We  compute an experimental cumulative distribution function (CDF) of $\mathbfit{r}_W$ and $\mathbfit{r}_{QW}$. If our model is correct, then it should be close to the CDF of a Gaussian variable of mean zero and variance one. 

\subsection{Correlations in the residuals}
\label{sec:corrres}

\REWRITE{The test suggested in the previous section is relevant to check the distribution of the residuals without temporal information, and thus is not most adapted to spot correlations.
We here adapt the variogram~\citep{matheron1963} for unevenly sampled time series, \REWRITER{similarly to ~\cite{baluev2013_gj581}}}. 
The quantity $d(t_i,t_j) = \mathbfit{r}_W(t_i) - \mathbfit{r}_W(t_j)$ is plotted as a function of $t_i-t_j$ for $t_i > t_j$. If $\mathbfit{r}_W$ is indeed independent and Gaussian, $d(t_i,t_j) $ should not depend on the time interval. 

\REWRITE{Secondly, we consider $n$ time bins with constant spacing in $\log t$. For each bin $B$, we compute the sample variance of the $d(t_i,t_j)$ such that $t_i - t_j \in B$. We expect that if there are correlations, these variances should grow as $t_i-t_j$ increases.
	To obtain an error bar on the variances, we add an independent Gaussian noise of mean 0 and variance 1 to $r_W$ and re-compute the variances for the same time bins. One can alternately add a Gaussian noise of covariance $\mathbf{V}$ to the data, and re-compute the residuals. The error bars are taken as $\pm \sigma$ where $\sigma$ is the standard deviation of the variances estimates per bin.   }

\subsection{Example}

Let us now show how it can be used in practice. We take the 214 measurement times of Proxima b~\cite{angladaescude2016}. $\mathbfss{A}$ is made of six columns as defined in Appendix~\ref{appendix_realformula} and fix $\mathbfit{x}_t$. We then generate three series of a thousand realisation of $\mathbfit{y} = \mathbfss{A}\mathbfit{x}_t + \bepsilon$. The covariance matrix of the noise has a kernel $\e^{-|\Delta t|/\tau}$ where $\Delta t$ is the duration between two samples. The three series are generated with a noise time-scale $\tau = 0$, 10 and 100 days. For each of the $3 \times 1000$ signals generated, we compute the least square fit with the correct matrix $\mathbfss{A}$, but with a weight matrix $\mathbfss{W}$ equal to identity, so our model is entirely correct only in the first case. First, we pick randomly one realization among the 1000 available in each series, and perform the first test whose result is plotted Fig.~\ref{testres1}. One clearly sees a pattern: the higher the correlation, the smaller is the difference between residuals. \REWRITE{We then consider 6 time bins, and compute the variances of the data and their uncertainties within each bin.  The results of these calculations are represented by the stair curves. For correlated noises, the variance increases with the time interval, while it stays compatible with a constant for the white noise. }
Fig.~\ref{testres2} shows the 1000 empirical CDFs in the three cases. 

The plots~\ref{testres1} and~\ref{testres2} are useful indicators of remnant correlations in the residuals and non Gaussianity. However, they do not constitute metrics with known statistical properties. One can potentially test the hypothesis that $\mathbfss{r}_{QW}$ is a realization of such a law with a Kolmogorov-Smirnov test or other metrics such as~\cite{andersondarling1954},~\cite{shapirowilk1965}, etc. We have tested the Anderson-Darling metric which did not show a high statistical power (see Hara 2017, thesis) and that is not discussed further.
The visual inspection, though less quantifiable, seems more accurate. 

Study of correlations in RV residuals have already been undertaken for instance by~\cite{baluev2011,baluev2013} with a smoothed residual periodogram~\citep{baluev2009}. The systematic comparison of these statistics is left for future work.

\begin{figure}
\centering
\begin{minipage}[c]{0.45\textwidth}
\includegraphics[width=8cm]{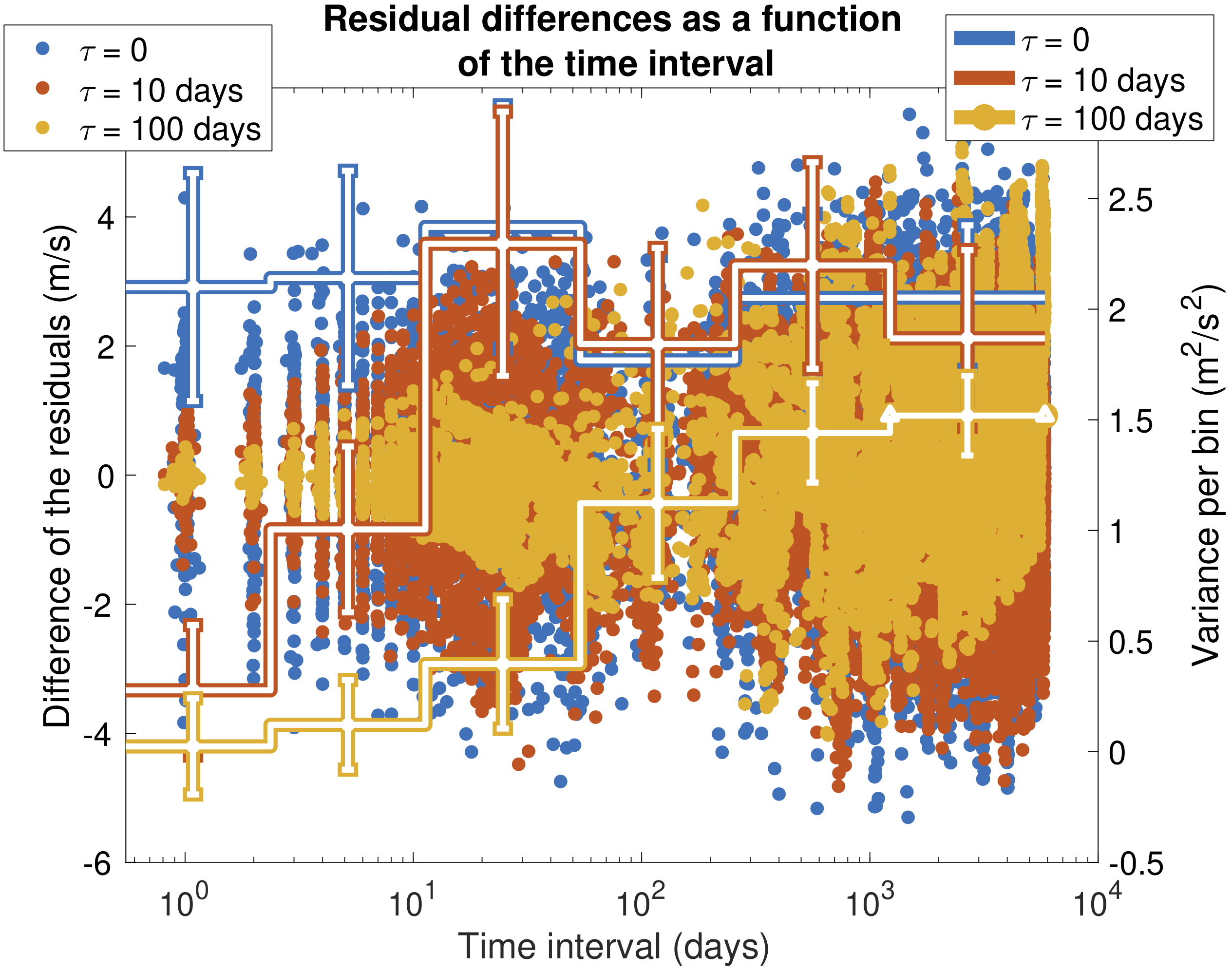}
\caption{Dots: difference between the residuals at two different time as a function of the time interval between them in three cases: when the noise has a time-scale of 0, 10 and 100 days, units on the left $y$-axis. \REWRITE{Stair curves: variance of the data points per time bin, with uncertainty, units on the right $y$-axis.}}
\label{testres1}
\end{minipage}
\hspace{0.5cm}
\begin{minipage}[l]{0.45\textwidth}
\includegraphics[width=7.5cm]{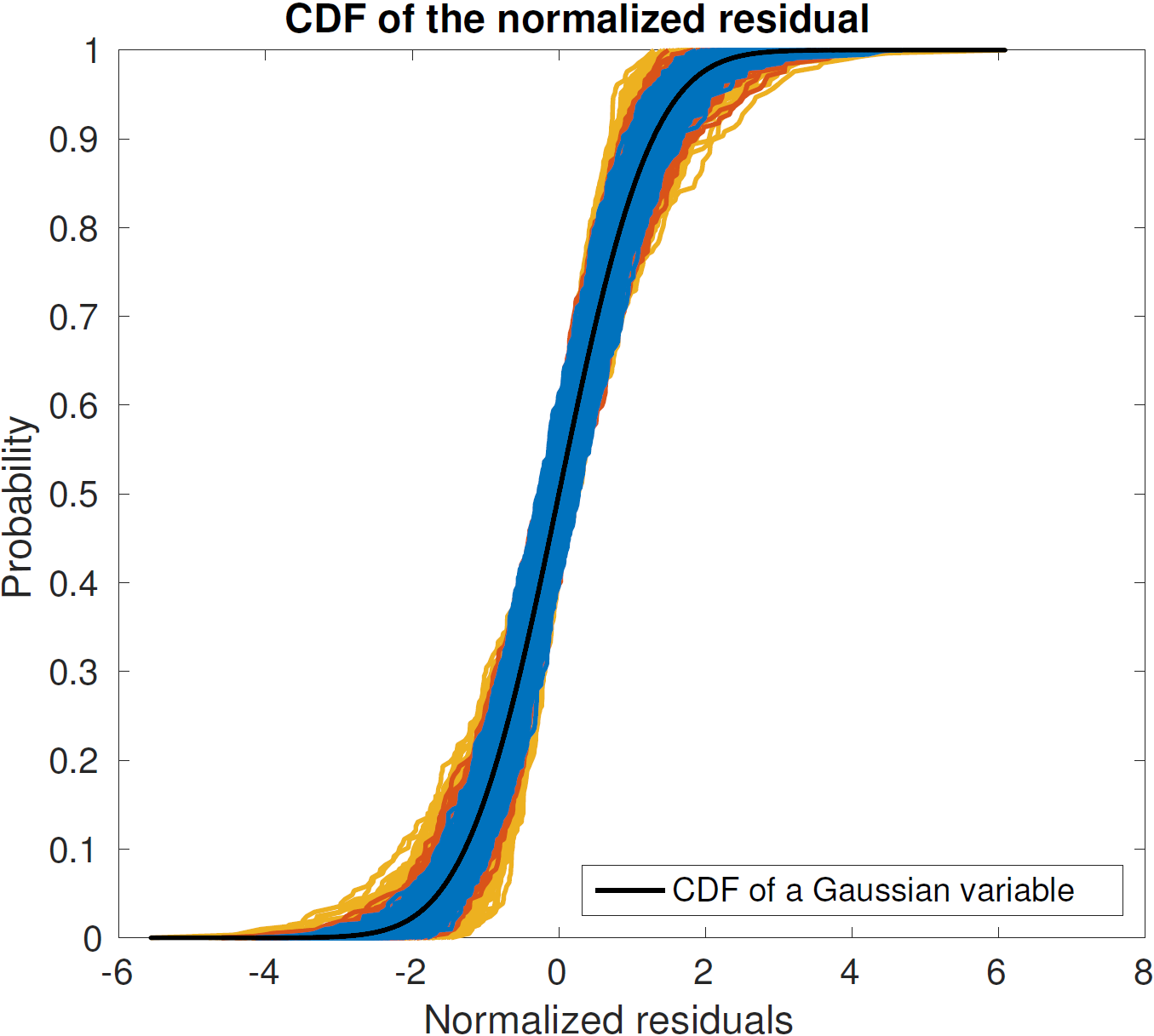}
\caption{A thousand realizations of the cumulative distribution functions of the normalized residual in three cases: when the noise has a time-scale of 0, 10 and 100 days. }
\label{testres2}
\end{minipage} \hfill
\label{testres}
\end{figure}

\section{Application: 55 Cancri}
\label{sec:application}

\begin{figure}
	\vspace{0.33cm}
	\includegraphics[width=7.8cm]{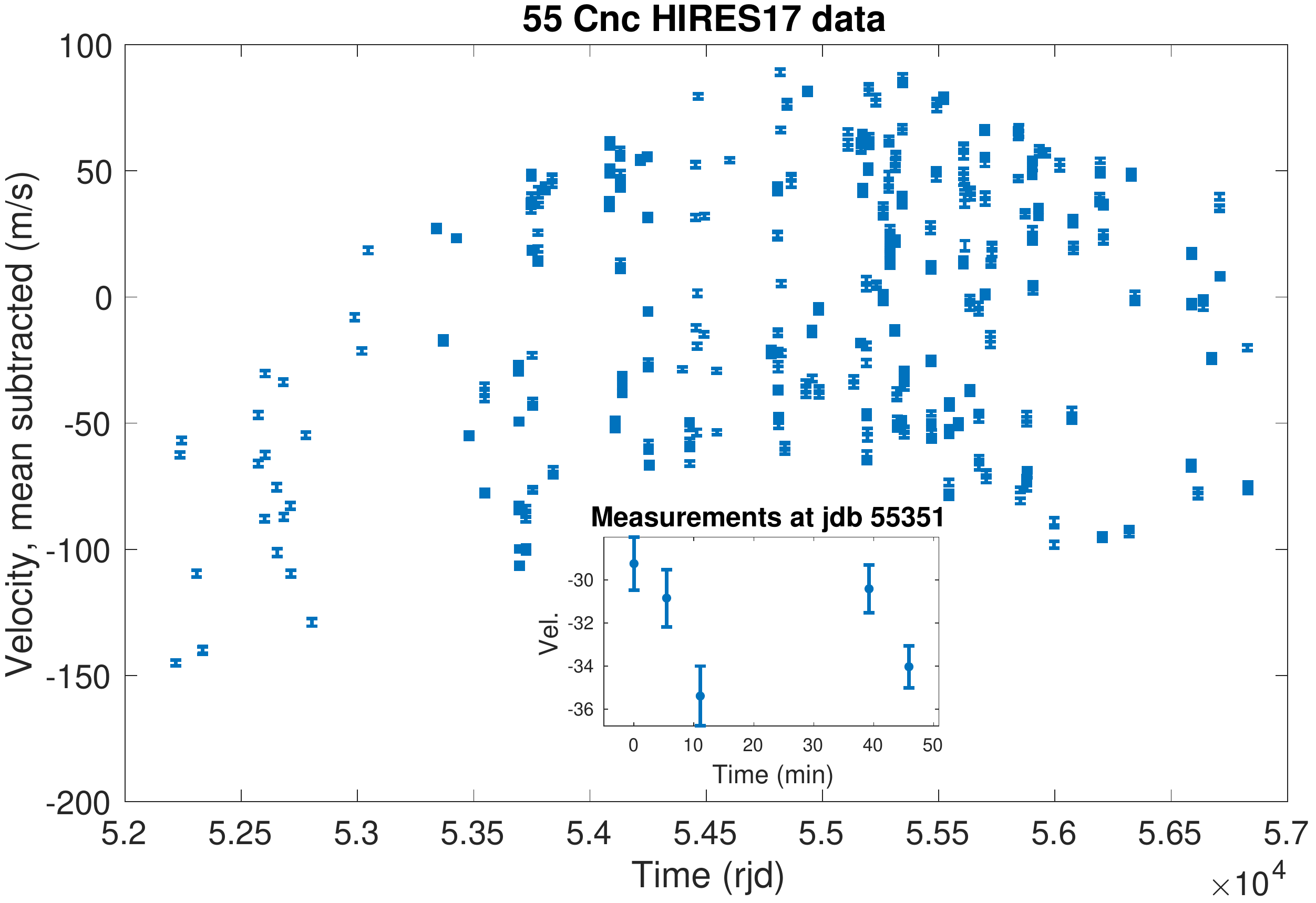}
	\caption{55 Cnc HIRES17 data with nominal error bars. The box on the bottom is a zoom on measurements taken at rjd 55351. }
	\label{fig:hires17data}
\end{figure} 
\begin{figure}
	\includegraphics[width=8.6cm]{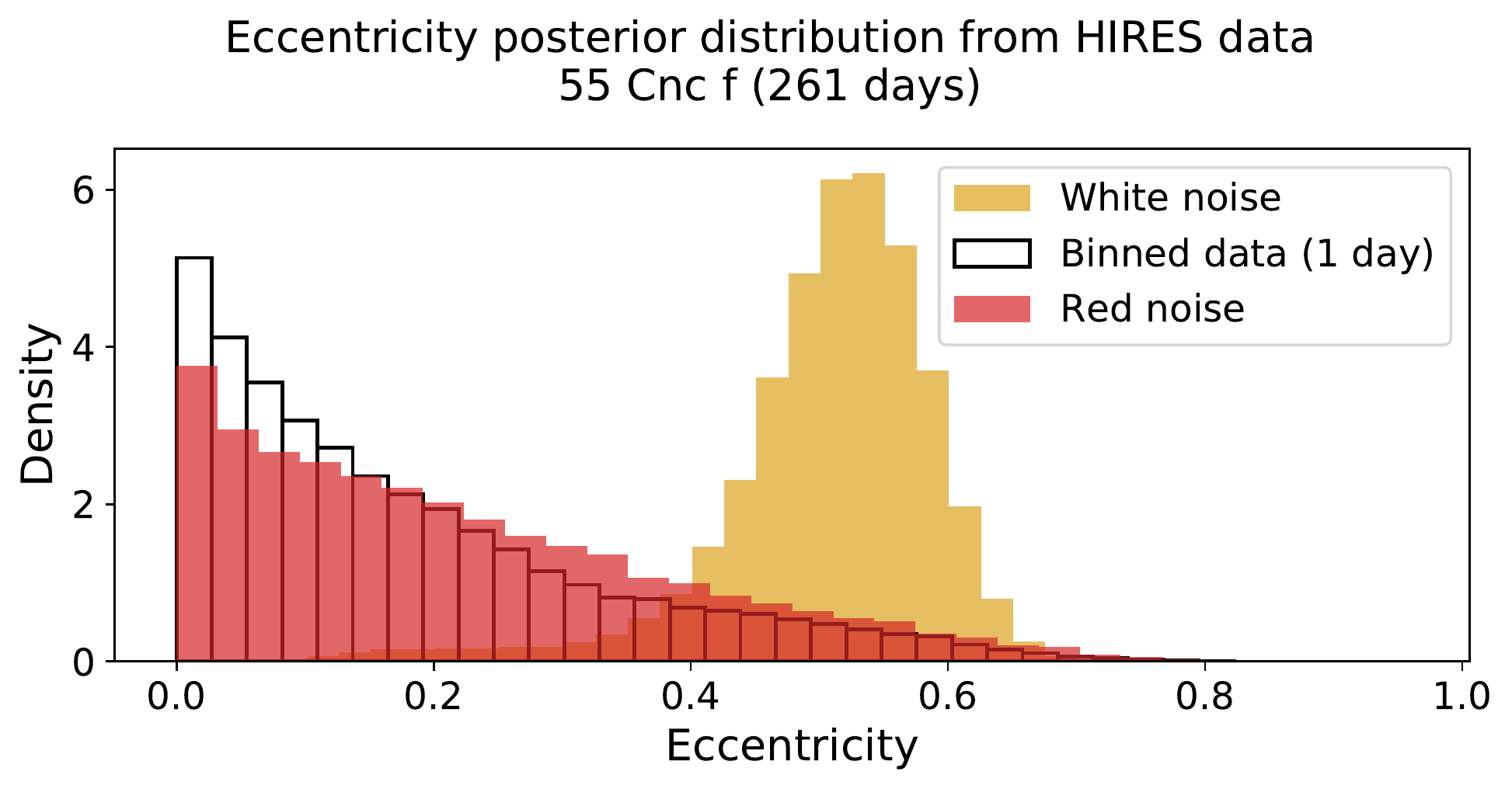}
	\caption{Posterior distributions and confidence intervals of the eccentricity of 55 Cnc f for different models of noise, raw and binned data, HIRES17 data. The yellow, red and white histograms represent the posterior distributions for a white noise and red + white noise model for the raw data, and white noise model on the binned data. }
	\label{fig:posterior55cnc}
\end{figure}
\begin{figure}
	\includegraphics[width=8.4cm]{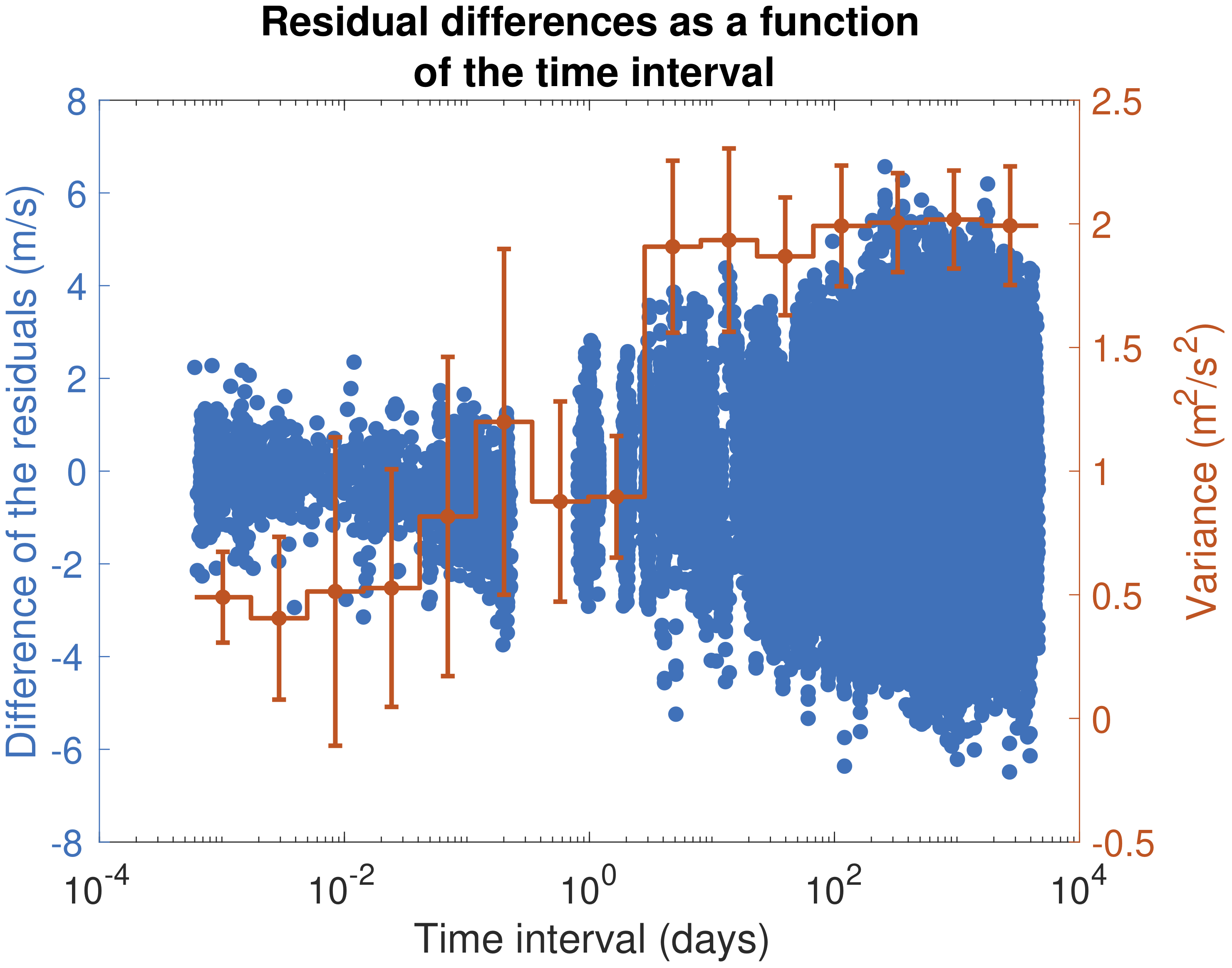}
	\caption{Difference between all couples of residuals with a 5 planets + offset and white noise fit on the HIRES17 raw data as a function of the time difference between residuals (see sec.~\ref{sec:corrres}). The red stair curves represents the variance of the residuals on a constant step in $\log$ time with an estimate of the error on this variance.  }
	\label{fig:res55cnc}
\end{figure}
\begin{figure}
	\includegraphics[width=8cm]{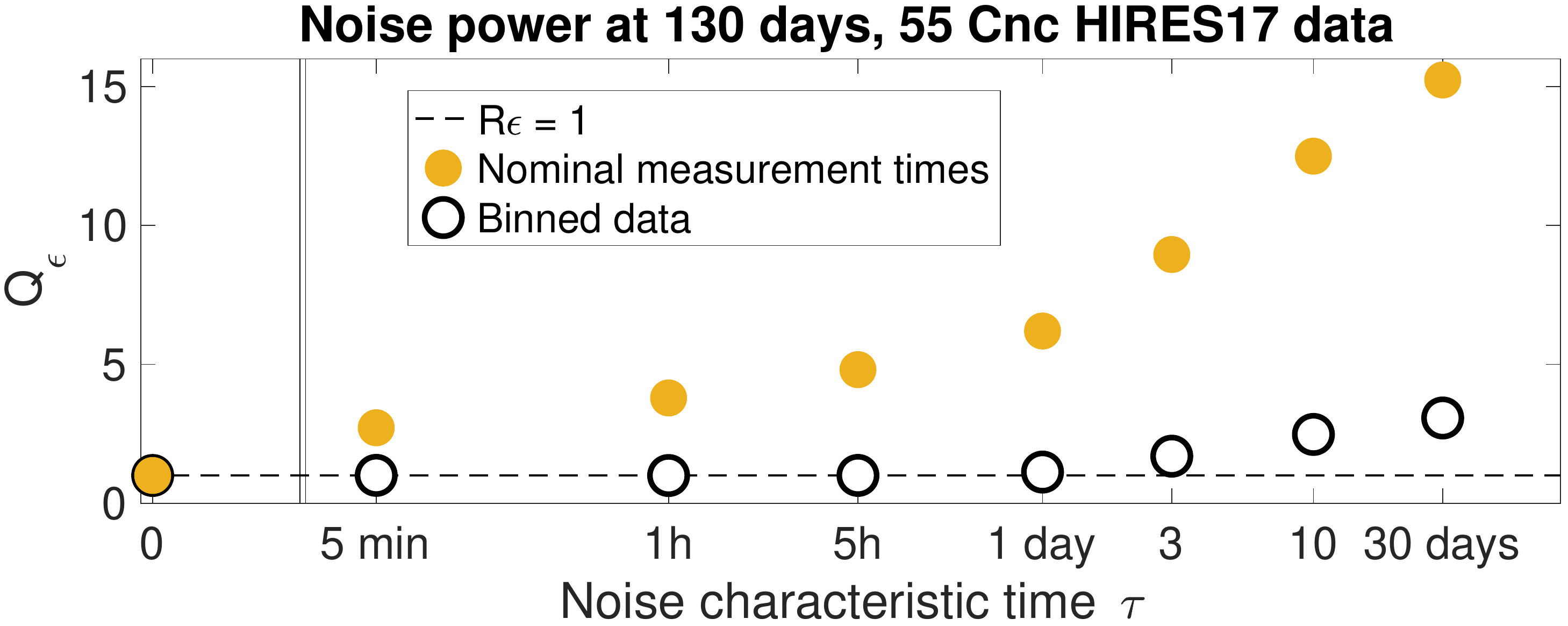}
	\caption{$Q_\bepsilon$ as defined in eq.~\eqref{eq:repsilon} at 130 days for the measurement times of HIRES17 data, as a function of the assumed noise characteristic time. The yellow dots correspond  the nominal measurements, and the white dots correspond to the data binned per day. }
	\label{fig:55cncrepsilon}
\end{figure}

\REWRITE{
To illustrate the methods above, the estimation of the eccentricities of the 55 Cnc system is discussed. This system has been extensively studied and many measurements, from several spectrographs, are publicly available: Hamilton~\citep{marcy2002, fischer2008}, ELODIE~\citep{naef2004}, HRS~\citep{mcarthur2004,endl2012}, HIRES~\citep{butler2017}.  We here focus on the HIRES~\citep{butler2017} data set, from now on denoted by HIRES17, which illustrates the importance of the noise model choice for reliable eccentricities. 
}

The HIRES17 data set is made of 607 velocity measurements, with on average 4 data points per night. Fig.~\ref{fig:hires17data} shows these data with nominal error bars. The inset is a zoom on five points taken within a time interval of 50 minutes at reduced Julian day (RJD) 55351. We first consider the raw data. The posterior distributions of the orbital elements is computed with a model including the five known planets plus a free jitter term as defined in section~\ref{sec:noiselevel}, and priors defined as in table~\ref{tab:priorbayes}. \REWRITER{We now focus on 55 Cnc f, orbiting at 260.9 days  with a minimum mass of $0.1503^{+0.0076}_{-0.0076}$ $M_J$ and a period of $259.88^{+0.29}_{-0.29}$ days~\citep{bourrier2018}}. In Fig.~\ref{fig:posterior55cnc}, the posterior distribution of the eccentricity is represented in yellow. The estimate is 0.5 with a 68\% credible interval equal to $[0.44, 0.58]$. 
This estimate is very different from the one in~\cite{bourrier2018}, \REWRITER{$e = 0.08^{+0.05}_{-0.0.04}$}, where orbital elements estimates are based on all the available radial velocity data.

\REWRITE{
To check for correlations in the residuals, these are studied with the method of section~\ref{sec:corrres}. We define $\mathbf{W} = (\mathbf{V} + \widehat{\sigma}_{ML}^2)^{-1}$, where $\mathbf{V}$ is the nominal covariance matrix, and $\widehat{\sigma}_{ML}^2$ is the maximum likelihood estimate of the jitter. The weighted residuals $r_W$ as defined in eq.~\eqref{eq:rw} are computed. For all combinations of measurement times $t_i>t_j$ we represent $d_{ij} := r_W(t_j) - r_W(t_i)$ as a function of $t_j - t_i$. We then compute the variance of the $d_{ij}$ such that $t_j - t_i$ is in a certain time bin. Fifteen such intervals are considered, with a constant length in $\log$ scale. The uncertainty on this variance $\sigma^2$ is estimated by bootstrap. The results are shown  in Fig.~\ref{fig:res55cnc}, where the blue points represent a couple $(t_j - t_i, d_{ij})$, and the red stair curve represents the variance per bin and its uncertainty. As the time difference grows, the residuals are more and more dispersed, which is indicative of correlated noise . 
}

\REWRITE{
To account for this correlated noise, we adopt two strategies. First,  we include a red noise term in the model of section~\ref{sec:noiselevel}. This one is assumed Gaussian with a covariance kernel 
\begin{align}
k(t,t') = \sigma_R^2 \e^{-  \frac{(t-t')^2 }{2 \tau^2} }, 
\label{eq:covarkern}
\end{align}
and $\sigma_R$ and $\tau$ are included in the posterior sampling \REWRITER{with a normal prior on $\ln \tau$ of mean and variance 1}. The resulting eccentricity posterior for 55 Cnc f is shown in red on Fig.~\ref{fig:posterior55cnc}. \REWRITER{The posterior median of $\tau$ is 2.3 days, with a 68\% confidence interval of $[0.5, 9.6]$ days.} The second strategy we adopt is to bin the velocity measurements per day. Such a strategy is actually what likely lead to the measurement pattern chosen, as binning data per night averages out stellar oscillations~\citep{dumusque2011i}. The posterior so obtained is represented by the white histogram on Fig.~\ref{fig:posterior55cnc}. In both cases it appears that that the eccentricity is in fact much less constrained. Note that the posterior distribution favours a null eccentricity.  We attribute this to the choice of the beta prior, which favours small eccentricities, but also to the fact described in section~\ref{sec:bayesianestimates}: high eccentricity models represent a larger volume of models, and are therefore be penalized by Bayesian estimates.
}

\REWRITE{
In section~\ref{sec:correlations}, we argued that the main characteristic of a noise that influences eccentricity estimation is, for a particular time sampling,  its power at the semi period of the planet. The results above are indeed interpretable in this term. The quantity $Q_\bepsilon(\mathbf{M_2})$ defined in eq.~\eqref{eq:repsilon} is computed for noises with a kernel defined in eq.~\eqref{eq:covarkern}. The calculation is made with the nominal time sampling and the one corresponding to a one day binning. Fig.~\ref{fig:55cncrepsilon} shows the result of these calculations. With the nominal times, the power at 130 days is much more sensitive to correlated noise, even at time-scales as short as 5 minutes,  which results in a clear eccentricity overestimation. This example illustrates that due to the sampling, some noises might create spurious eccentricities if not properly accounted for. }

\REWRITE{
As a remark, some measurement dates correspond to the observation of the 55 Cnc e transit. As noted in~\cite{bourrier2018}, the effect of the Rossiter-McLaughlin effect has been put in question in~\cite{lopezmorales2014}, and should not exceed 0.5 m.s\textsuperscript{-1}, which is well below HIRES precision and does not impact the analysis above. %
}

\section{Conclusion}
\label{sec:conclusion}
\subsection{Summary and step-by-step method}
\label{sec:discussion}

In section~\ref{sec:originecc}, we have seen that at low eccentricity, the bias of the least square estimate can be approximated analytically (eq.~\eqref{eq:bias1}). This equation as well as numerical simulations show that the bias is approximately proportional to the uncertainty on eccentricity. We show that an incorrect estimate of the noise level can create spurious global minima at high eccentricity through the Proxima b example, and that at high SNR it is less likely, though not impossible, to miss the global minimum by doing a least square fit initialized at $e=0$.
In section~\ref{sec:bayesianestimates}, it appeared that the maximum of the marginal distribution of eccentricity (eq.~\eqref{eq:maxmarginal}) is less biased than the maximum likelihood.

In section~\ref{sec:robustness}, we explored the reliability of the maximum likelihood and posterior distributions when the noise model is incorrect, along with the sensitivity of the inference to numerical methods. 
In summary, we recommend the following data analysis method for robust inference of eccentricities.
\begin{itemize}
\item Computing the posterior distribution of eccentricity leads to correct inference in general (see section~\ref{sec:bayessection}), if one includes in the noise model at least a free jitter term (eq.~\eqref{eq:modeljitter}). Without such a term the inference is very likely to be spurious. \REWRITE{The mean and median of the marginal posterior distribution of eccentricity}, defined in section~\ref{sec:bayesianestimates}, constitute good point estimates (see section~\ref{sec:incorrect_sim}). 
 Credible intervals (eq.~\eqref{eq:credible}) allow reliable hypothesis testing, confidence intervals~(see section~\ref{sec:intervalest}) are an alternative for probing the low eccentricity region especially. 
\item \REWRITE{Using a white, Gaussian noise model might however lead to spurious inferences in some cases. Eccentricity estimates are weakly sensitive to non Gaussianity of the noise, but are impacted by its true covariance, especially by the noise power at \REWRITER{$P/2$ and to a lesser extent at $P/3$ where $P$ is the period of the planet of interest} (see section~\ref{sec:correlations}). Stellar signals might have a modulation at the rotation period, but as shown in section~\ref{sec:application}, correlated noise combined with the sampling can also create spurious eccentricities. Inappropriate models leave signatures in the residuals, these ones can be studied with techniques presented in section~\ref{sec:resanalysis}.}
\item It is key to check the convergence of the MCMC used for posterior calculation, for instance with the effective number of independent samples (we recommend at least 10,000). An unreliable numerical method increases the bias (see section~\ref{sec:numericalerror}).
\item \REWRITE{At low SNR, posteriors are dominated by the prior and the parameter space is harder to explore. This might lead to spurious inferences (e.g. in section~\ref{sec:worstcase}). The influence of the prior can be assessed with the method of section~\ref{sec:recomputepriors}. }
\item The most degenerate case is an inner planet in 2:1 resonance. As discussed in section~\ref{sec:worstcase}, Bayes factors can disentangle those cases.  \REWRITE{However, the value of the Bayes factor strongly depends on the prior used for the semi amplitude. We suggest to select a $\log$ prior on semi amplitude and to consider the eccentric scenario as the null hypothesis. }
\item In multiplanetary systems, checking the system stability might also help ruling out some values of the eccentricity~\cite[e.g.][]{hebrard2016, delisle2018}.
\item As more measurements are obtained, least square and Bayesian estimates get closer, so that eq.~\eqref{eq:bias00} can also be used to approximate the number of measurement $N(\epsilon_e)$ needed to have an average bias on eccentricity $\epsilon_e$ for a planet with semi-amplitude $K$, $p$ parameters fitted in total and a measurement root mean square RMS,
\begin{align}
N(\epsilon_e) = p + \pi  \frac{\mathrm{RMS}^2}{K^2 \epsilon_e^2}.
\label{eq:nepsilone}
\end{align}
This formula is for $\epsilon_e\leqslant 0.05$ and subtends that there is a good phase coverage.
\end{itemize}
This procedure should be familiar to observers, since it is mainly a formalized and tested version of common practices. It is applicable to other purposes, especially the methods of section~\ref{sec:resanalysis}, since checking the model validity improves the inference robustness in general.

\subsection{Perspectives}

The main point of the present work is that modelling errors might have systematic impacts on the estimates of orbital elements, and thus on our understanding of planetary systems. It is yet to be determined in which extent this has been the case in past studies. 

In section~\ref{sec:resanalysis}, we presented tools to measure the absolute adequacy between a model and the data, which is complementary to comparing models to one another. There remain many such adequacy metrics to explore. These ones could prove useful in the context of exoplanets characterization but also exoplanets detection.

\section*{Acknowledgements}

N. Hara and J.-B. Delisle acknowledge the financial support of the National Centre for Competence in Research PlanetS of the Swiss National Science Foundation (SNSF). N. Hara thanks  J. J. Zanazzi for his interesting inputs. We thank the anonymous referee for his/her insightful suggestions.

\bibliography{biblio.bib} 
\bibliographystyle{mnras}

\appendix

\section{First order approximation}
\label{appendix_realformula}

In this section, we developp the Keplerian model to first order in eccentricity to obtain an analytical expression of the bias. Within this approximation, the distribution of the least square fit knowing that $e=0$ is given in \cite{lucy1971}. This section extends their formula to small $e$, and takes into account the number of fitted parameters. First, we develop~\eqref{eq:vexprbis} to order one in $e$, obtaining
\begin{align}
y(\lambda,K,P,e,\omega) = K(\cos(\lambda) + e \cos(2 \lambda - \omega))
\end{align}
where $\lambda = nt + \omega + M_0= \lambda_0 + 2\pi t/P$ is the mean longitude, $\lambda_0$ being its value at $t=0$. Denoting by $n = 2 \pi/P$ the mean motion, the above expression can be re-written
\begin{align}
\label{eq:v1}
y^{(1)}(t,A,B,C,D,n) = A \cos nt + B \sin nt + C \cos 2nt + D \sin 2nt 
\end{align} 
where $A = K \cos \lambda_0$, $B = -K \sin \lambda_0$, $C = K e \cos(2\lambda_0 - \omega)$, $D = -K e \sin(2\lambda_0 - \omega)$.
When other parameters are fitted, the uncertainties on $A,B,C,D$ increases as well. To quantify this effect, we consider the problem of fitting the period and a constant. 
\begin{align}
\begin{split}
&\mathbfit{y}^{(2)}(t,A,B,C,D,E,F) = \\
&A \cos nt + B \sin nt + C \cos 2nt + D \sin 2nt + E \frac{\partial y}{\partial n}(t) + F .
\label{eq:model2}
\end{split}
\end{align}
which in a matrix form gives
\begin{align}
\label{eq:model2b}
\mathbfit{y}^{(2)}(\mathbfit{t},A,B,C,D,E,F) = \mathbfss{M}(P) \mathbfit{x}.
\end{align}
Let us assume that the observations are $\mathbfit{y}(\vec t) = \mathbfss{M}(P) \vec x_t + \vec \epsilon$, where $\vec \epsilon$ is a Gaussian noise, independent and identically distributed with variance $\sigma^2$. The least square estimate of $\mathbfit{x}$ is $\widehat{\mathbfit{x}} = (\mathbfss{M}^T  \mathbfss{M})^{-1} \mathbfss{M}^T  \mathbfit{y}$, and the estimate of eccentricity is 
\begin{align}
\label{eq:abcd}
\widehat{e} = \sqrt{\frac{\widehat{C}^2 + \widehat{D}^2}{\widehat{A}^2 + \widehat{B}^2}} = \frac{\sqrt{\widehat{C}^2 + \widehat{D}^2}}{K_t} \left( \frac{\sqrt{\widehat{A}^2 + \widehat{B}^2}}{K_t}  \right)^{-1}  ,
\end{align}
where $K_t$ is the true semi-amplitude. By change of random variable we can obtain the law followed by $\widehat{e}$. If we assume that $N$ is large enough then the columns of $\mathbfss{M}(P)$ are approximately orthogonal,  the components of $\widehat{x}$ are independent Gaussian variables.
Since the modulus of a sum of independent Gaussian variables follows a Rice distribution,
\begin{align}
U \equiv \frac{\sqrt{\widehat{C}^2 + \widehat{D}^2}}{K_t} \sim g(u) =  S^2 u \e^{-\frac{S^2}{2} (u^2+e_t^2)} I_0(S^2 e u) \\
W \equiv \frac{\sqrt{\widehat{A}^2 + \widehat{B}^2}}{K_t} \sim h(w) =S^2 w \e^{-\frac{S^2}{2} (w^2+1)} I_0(S^2 w)
\end{align}
where $I_0$ is a modified Bessel function of first kind, $S = K_t/\sigma$ is the SNR, where $\sigma$ is the standard deviation of $\widehat{A},\widehat{B},\widehat{C}$ and $\widehat{D}$. If $K$ is sufficiently large, $W$ is close to 1 and $g(u)$ gives a good approximation of the law followed by the eccentricity fitted. Within this approximation, one can obtain analytical formula for the bias $b$ of the eccentricity that only depends on the true eccentricity and the SNR,
\begin{align}
b(e_t,S,n) = \frac{1}{S}\sqrt{\frac{\pi}{2}} L_{1/2}\left( \frac{S^2 e_t^2}{2}\right) - e_t .
\end{align}
where $L_{1/2}$ is the Laguerre polynomial of order $1/2$.
 In case $K_t$ is small, one must use the formula for the law followed by the quotient of two random variables:
\begin{equation}
\label{bias2}
\widehat{e} = \frac{U}{W} \sim f(e) = \int_{-\infty}^{+\infty} g(u) h(ue) |u| du
\end{equation}
but no simple analytical expression was found.

When fitting model~\eqref{eq:model2b} to $\mathbfit{y}(\vec t)$, the estimate $\widehat{\vec \theta}$ have a covariance matrix $\boldsymbol{\Sigma}^{-1}$ where $\boldsymbol{\Sigma} = \sigma^2 (\mathbfss{M}(P)^T \mathbfss{M}(P))$ (this is a classical statistical result, see for example~\cite{pelat}). The variances of the components of $\widehat{ \vec x}$ are given by the diagonal elements of $\boldsymbol{\Sigma}^{-1} $. Their approximate calculation is the object of the next section.

\subsection{Average error}

First we consider the estimation of the error on $A,B,C,D$ when averaging over the mean motion $n$. At little cost, we can generalize our claim to the fitting of model~\eqref{eq:model2} plus fitting other linearised Keplerian model. This approximately corresponds to fitting a multi-planetary system starting closely from the correct local minimum of $\chi^2$. Again, the model can be written as a linear one, $\mathbfit{y} = \mathbfss{M} \mathbfit{x}$ but where $\mathbfss{M}$ has $p = 6 + 5k$ columns, $k$ being the number of additional planets. To facilitate the discussion, we normalize the columns of $\mat M$. To have an expression of the model of the form~\eqref{eq:model2b}, we have multiplied the  $k$\textsuperscript{th} component of $\theta$ by the norm of the $k$\textsuperscript{th} column of $M$. The variances of these new model parameters are still given by the diagonal elements of $\sigma^2\boldsymbol{\Sigma}^{-1}$ where $\boldsymbol{\Sigma} = (\mathbfss{M}^T \mathbfss{M})$, but now $\boldsymbol{\Sigma}$ has only ones on its diagonal.

Calculating precisely the uncertainty on $A,B,C,D$ averaged over $n$ and the phase of the signal as a function of the instant of observations $t$ is complex since it requires the inversion of $\sigma$ which is a $6 + 5k \times 6 + 5k$ matrix. Instead, we use an approximation that grasps the effect we want to estimate: how the uncertainty worsens as more parameters are added to the model. We consider that the elements of $\mathbfss{M}$ are drawn from independent Gaussian laws that have a variance $1/N$. To avoid confusion with the true model, the so defined random matrix is denoted by $\tilde{\mathbfss{M}}$ and its covariance matrix by $\tilde{\boldsymbol{\Sigma}}$.

This approximation seems to be rough at first but  turns out to be surprisingly accurate as a lower bound in practice. A few arguments to justify that it is a reasonable guess are listed below.
\begin{itemize}
\item The variances of the entries were chosen such that the expectancy of a squared norm of a column is one, which is the value of $\boldsymbol{\Sigma}$ diagonal elements.
\item The columns are cosines and sines, which are approximately orthogonal, and in the Gaussian case decorrelation implies independence. Furthermore, the average of the spectral window is equal to the expected value of a correlation between two random Gaussian variables. 
\item The normed vectors $\cos \nu \vec t$ and $\sin \nu \vec t$ are approximately distributed uniformly on the sphere of $\mathbb{R}^N $ when $ \nu$ is distributed uniformly between 0 and $2\pi/T_{\mathrm{obs}}$. 
\end{itemize}

The expected value of the variance of any parameter is the expected value of any diagonal element of $\boldsymbol{\Sigma}^{-1}$, since all the columns of $\tilde{\mathbfss{M}}$ follow the same law. 
To tackle that problem, we rewrite $\tilde{\boldsymbol{\Sigma}}$ as
\begin{align*}
\tilde{\boldsymbol{\Sigma}} = \sigma^2 \left(  
\begin{array}{cc}
\boldsymbol{\Sigma}_{11} & \boldsymbol{\Sigma}_1^T \\
\boldsymbol{\Sigma}_1 & \boldsymbol{\Sigma}_c
\end{array}
\right)
\end{align*}
Where $\boldsymbol{\Sigma}_{11}$ is $\tilde{\boldsymbol{\Sigma}}$ element at first row and first column and $\boldsymbol{\Sigma}_1$ is a column vector with $N-1$ entries. We now have 
\begin{align*}
\mathbb{E}\{\tilde{\boldsymbol{\Sigma}}^{-1}_{11}\} = \frac{1}{\sigma^2} \mathbb{E}\left\{ \frac{1}{ \boldsymbol{\Sigma}_{11} - \boldsymbol{\Sigma}_1^T \boldsymbol{\Sigma}_c^{-1} \boldsymbol{\Sigma}_1}  \right\}
\end{align*} 
By Jensen inequality~\citep{jensen1906}, since $x \rightarrow 1/x$ is convex,
\begin{align*}
\mathbb{E}\{\tilde{\boldsymbol{\Sigma}}^{-1}_{11}\} \leqslant \frac{1}{\sigma^2} \frac{1}{ \mathbb{E}\left\{\boldsymbol{\Sigma}_{11} - \boldsymbol{\Sigma}_1^T \boldsymbol{\Sigma}_c^{-1} \boldsymbol{\Sigma}_1\right\}}  .
\end{align*} 
Now since for two independent variables $X$ and $Y$, $\mathbb{E}\{XY\} = \mathbb{E}\{X\}\mathbb{E}\{Y\} $,
\begin{align*}
\boldsymbol{\Sigma}_1^T \boldsymbol{\Sigma}_c^{-1} \boldsymbol{\Sigma}_1 = \sum_{k=2}^p \mathbb{E}\{\boldsymbol{\Sigma}_{1k}^2\} \mathbb{E}\{\boldsymbol{\Sigma}_{c,kk}^{-1}\} \leqslant \sum_{k=2}^p \mathbb{E}\{\boldsymbol{\Sigma}_{1k}^2\} = 1 - \frac{p-1}{N}
\end{align*} 
As by construction $\mathbb{E}\left\{\boldsymbol{\Sigma}_{11}\right\}=1$, we finally obtain 
\begin{align*}
\mathbb{E}\{\tilde{\boldsymbol{\Sigma}}^{-1}_{11}\} \leqslant \frac{1}{\sigma^2} \frac{1}{1-\frac{p-1}{N}}
\end{align*} 
where the inequality follows again from Jensen's inequality applied to matrix inversion. 
Finally, the standard deviation on $I = A,B,C,D$ is
\begin{align}
\sigma_I \geqslant \sigma \sqrt{\frac{1}{1-\frac{p-1}{N}}}
\end{align}
With the approximation $ \|\cos \nu \mathbfit{t} \| \approx \|\sin \nu \mathbfit{t} \| \approx \sqrt{N/2}$, the errors on $k = C/\sqrt{A^2+B^2}$ and $k = D/\sqrt{A^2+B^2}$ then verify
\begin{align}
\sigma_k \gtrsim \frac{\sigma}{K_t} \sqrt{\frac{2}{N}} \sigma_I = \frac{\sigma}{K_t} \sqrt{\frac{2}{N-p+1}} \approx \frac{\sigma}{K_t} \sqrt{\frac{2}{N-p}} =: 1/S.
\label{eq:sigmak}
\end{align}
As $k$ and $h$ approximately follow a Gaussian law, $e = \sqrt{k^2+h^2}$ follows a Rice distribution, whose mean is given by 
\begin{align}
\label{eq:formule}
\mathbb{E}\{\widehat{e}\} & = \frac{1}{S}\sqrt{\frac{\pi}{2}} L_{1/2}\left( -\frac{S^2 e_t^2}{2}\right) . \\
\label{eq:bias00_app}
\mathbb{E}\{\widehat{e}|e_t=0\} & =  \frac{\sigma}{K_t}\sqrt{\frac{\pi}{ N-p}} \; \; \; \; 
\end{align}
$L_{1/2}$ being the Laguerre polynomial of degree 1/2.
The relevance of formula~\eqref{eq:formule} is checked on numerical examples next section. As we shall see, the lower bound is tight when $p$ does not exceeds $\approx N/2$.

Let us finally stress that formula~\eqref{eq:formule} approximates the bias averaged on the mean motion, that is the frequency of the orbit, not the period. Averaging on the period would give more weight to the bias at low frequencies, which is high, and would therefore lead to a greater average value of the bias.

\subsection{Precision and accuracy}
\label{app:lindependency}

\REWRITE{
Eq.~\eqref{eq:defmse} expresses the mean squared error (MSE) as a function of the bias and the standard deviation. Assuming the model is correct, the MSE is an accuracy metric (dispersion about the true value), while the variance captures the precision of the estimate (dispersion of the estimate about its mean value). \cite{shenturner2008, zakamska2011} have shown that the estimates accuracy degrades -- all other parameters being fixed -- as the SNR decreases, as the period of the planet is longer, and as phase coverage degrades. }

\REWRITE{
In this appendix, we argue that theses effects can be seen as degrading the precision of the eccentricity estimates. More precisely, we show that the bias is proportional to the standard deviation of the estimate, so that the MSE is also proportional to the standard deviation. }

\REWRITE{
	We proceed with a numerical simulation. We consider the 74 measurement times of HD 69830~\citep{lovis2006}, spanning on 800 days, as they are spaced in a typical manner. We then inject a simulated planet in circular orbit and a white, Gaussian noise of standard deviation 1 m.s\textsuperscript{-1}. In all the following simulations, the phase is uniformly random. By default, the semi-amplitude of the the planet is $K$ = 3 m.s\textsuperscript{-1} and the period is 31.56 days period (like HD 69830 c).}
		\begin{enumerate}
	\item \REWRITE{Data points are taken off two by two, from 74 to 14.}
	\item \REWRITE{The semi amplitude of the planet varies from 0.5 to 6.3 m.s\textsuperscript{-1} by a step of 0.2 m.s\textsuperscript{-1}.}
	\item \REWRITE{The period is drawn from a log-normal law, where $\log_{10} P \sim G(1,1)$. Thirty different periods are drawn.}
	\item \REWRITE{The phase coverage is degraded. We consider the 25 times $t_k = k \times P$ where $P = 31.56$ is the planet period. For each $t_k$, three epochs of measurements are drawn uniformly between $t_k$ and $t_k + \Delta t$. We choose thirty different lengths for $\Delta t$, equispaced from $P/4$ to $P$. The rationale is to generate observations more or less localised around the same time when folded in phase at $P$.}
	\end{enumerate}

	\REWRITE{
	Each of the four simulations is made with thirty different parametrizations. For each of the $30 \times 4$ of these, we generate 500 realisations of white noise and report the average value of eccentricity (the bias) and the standard deviation of the estimate. The results are reported in Fig.~\ref{fig:mseproptosigma}, where it appears that the bias is proportional to the standard deviation. The points obtained with simulations 1 to 4 described above correspond to the blue, red, yellow and purple points on the  graph. 	The analytical approximation~\eqref{eq:bias00} suggests that the bias should be proportional to the standard deviation of the estimate with a factor $\sqrt{\pi/(4 - \pi)}$. The black line, which represents $y = \sqrt{\pi/(4 - \pi)} x$ is in close agreement with the scatter observed. }

	\REWRITE{
The $\sigma_e$ reported in Fig.~\ref{fig:mseproptosigma} is computed as the standard deviation of the estimate. Note that the bias is also proportional to the uncertainty on eccentricity computed from the correlation matrix, obtained from the least square fit.}

\begin{figure}
	\centering
	\begin{minipage}[l]{0.45\textwidth}
		\centering
		\includegraphics[width=8.1cm]{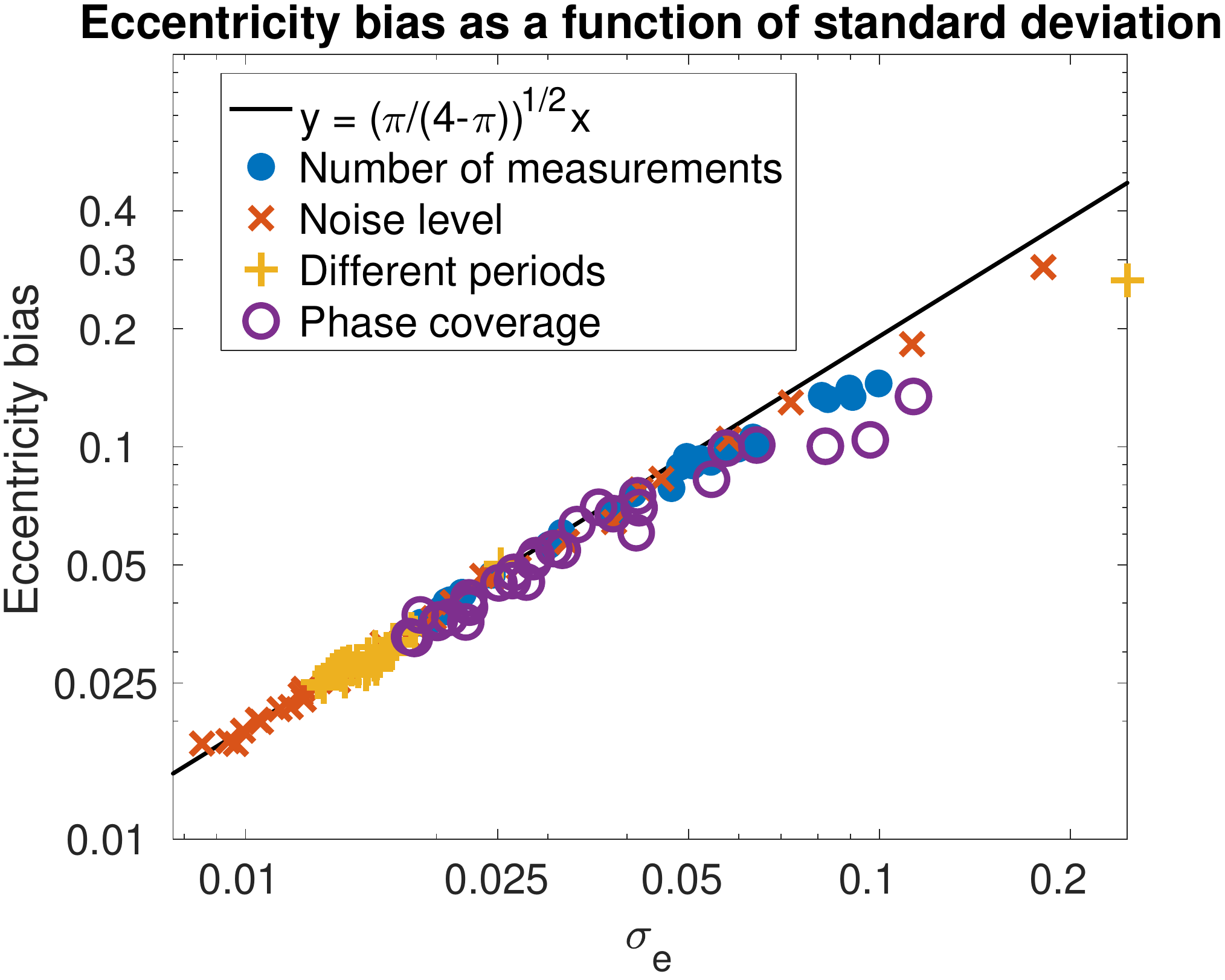}
		\caption{Bias as a function of the standard deviation of the eccentricity estimates in different configurations.} 
		\label{fig:mseproptosigma}
	\end{minipage}
	\hspace{5mm}
\end{figure}

\section{Local $\chi^2$ minima at high eccentricities}
\label{app:complicatedshape}

As shown in~\cite{baluev2015}, the number of local minima increases significantly in the high eccentricities region. These minima might lead a local minimisation algorithm or a Monte Carlo Markov Chain (MCMC) to be stuck in the wrong region of the parameter space. We here aim at quantifying  and understanding this feature.

In section~\ref{sec:moremodels}, we defined  a notion of SNR (eq.~\eqref{eq:snr}).  Interestingly enough, one can adapt this notion to predict the number of local minima at high eccentricity, provided the value of the SNR is checked to be reliable.

\begin{figure*}
	
	\centering
	\hspace{-0.3cm}
	\begin{minipage}[l]{0.49\textwidth}
		\includegraphics[width=8.7cm]{nlmin_strue.pdf}
		\caption{Blue bins: binned values of the number of systems with 1,2,3,4,5,6 or 7 local minima, with a bin size in true SNR $S_\mathrm{t}$ of 5. Red curve: fraction of the binned systems where the global minimum is not attained at the one obtained with a linear fit.}
		\label{fig:nlminst}
	\end{minipage} \hfill
	\begin{minipage}[c]{0.49\textwidth}
		\includegraphics[width=8.7cm]{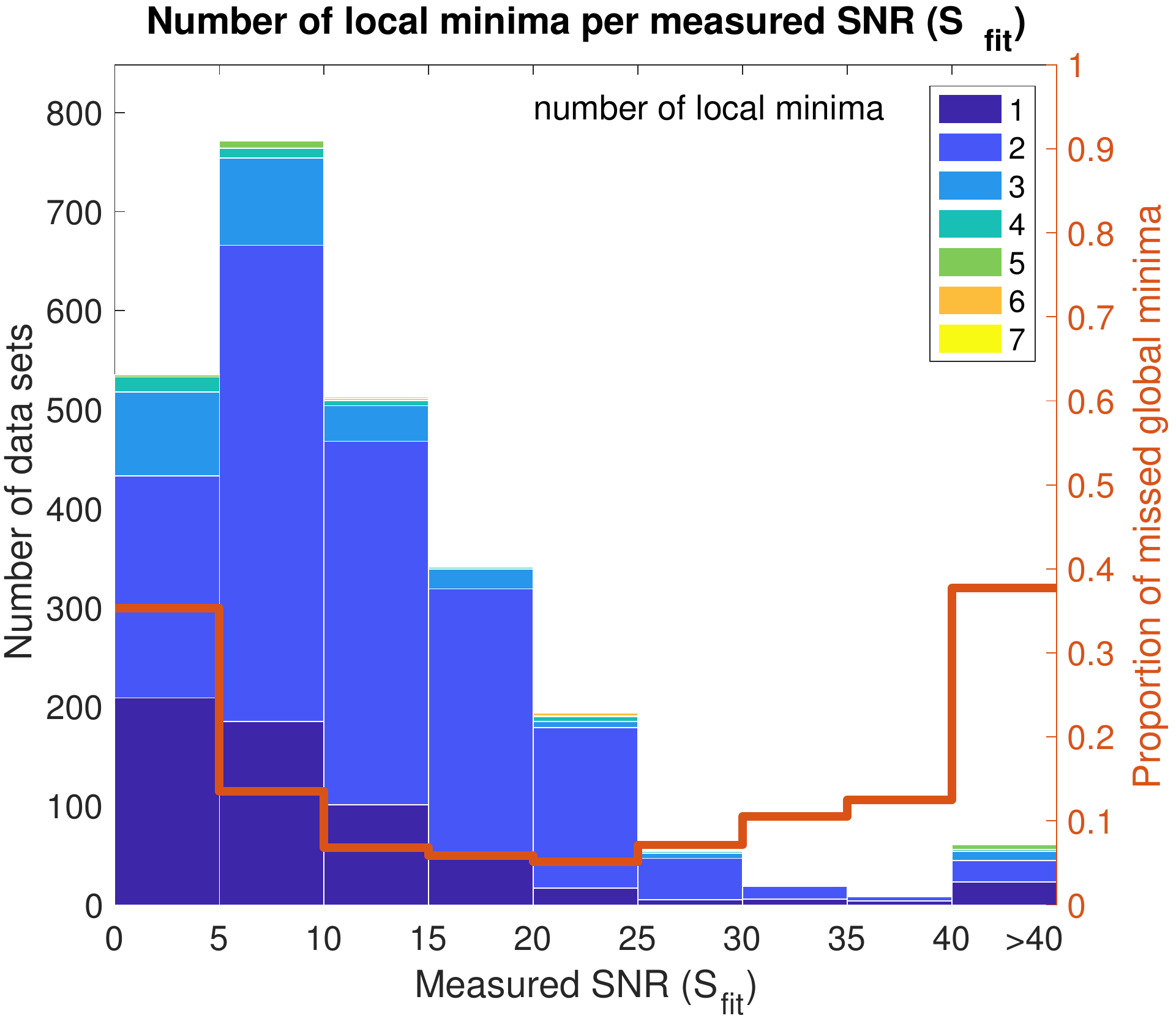}
		\caption{Blue bins: binned values of the number of systems with 1,2,3,4,5,6 or 7 local minima, with a bin size in fitted SNR $S_\mathrm{fit}$ of 5. Red curve: fraction of the binned systems where the global minimum is not attained at the one obtained with a linear fit.}
		\label{fig:nlminsfit}
	\end{minipage}\\ [0.6cm]
\end{figure*}
\begin{figure}
 	\includegraphics[width=8.7cm]{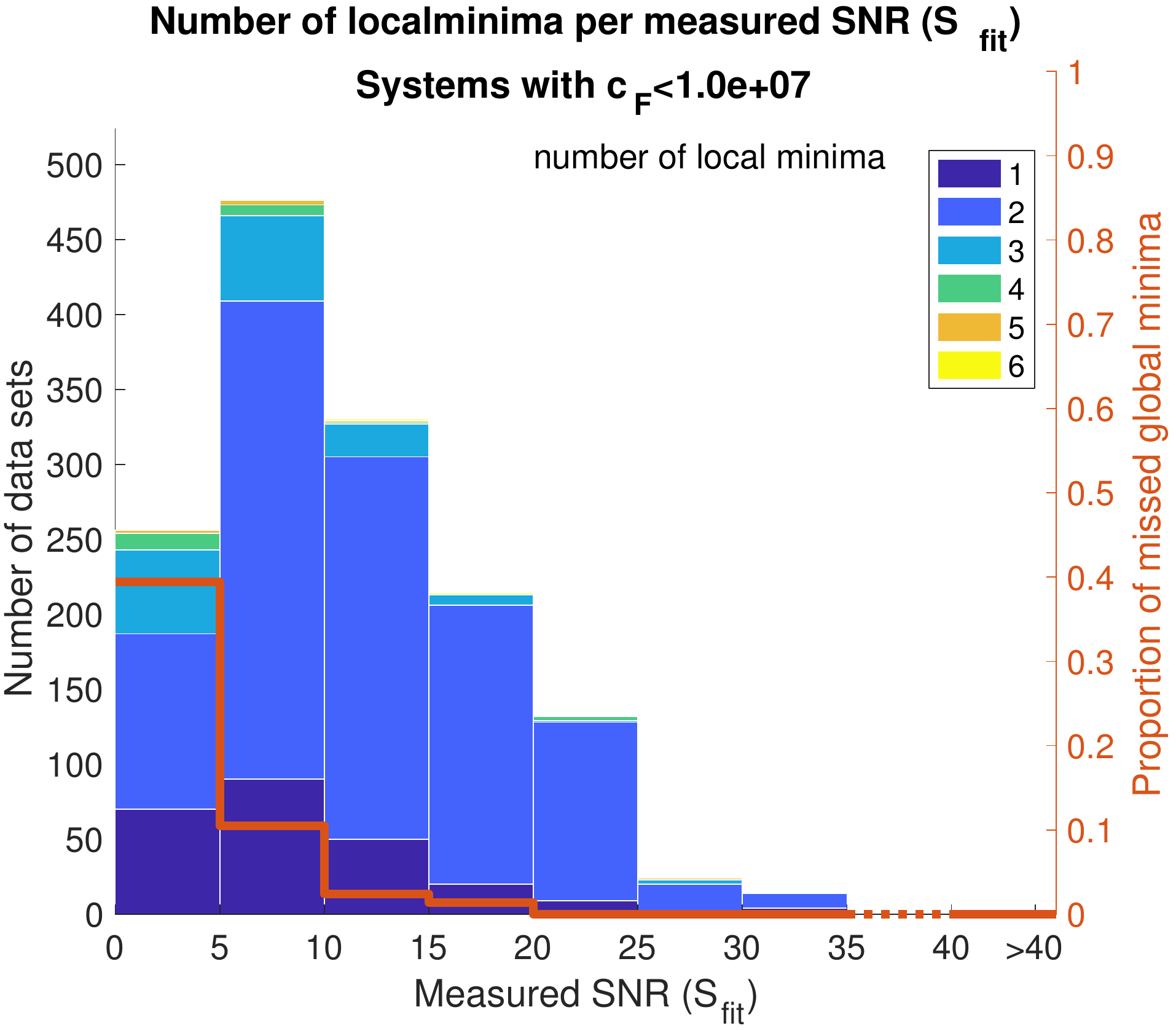}
	\caption{Blue bins: binned values of the number of systems that have a condition number lower than $10^7$ with 1,2,3,4,5, or 6 local minima. The bin size in fitted SNR $S_\mathrm{fit}$ of 5. Red curve: fraction of the binned systems where the global minimum is not attained at the one obtained with a linear fit.}
	\label{fig:nlminsfitcf}
\end{figure}

\subsection{Number of local minima per SNR}
	\label{app:highecc_snr}
We explore through simulations how many local minima should be expected, and how frequently the local minimum given by a non-linear fit starting at a circular orbit is not the global minimum.
Our simulation is structured as follows. We consider the measurement times of CoRoT-9~\citep{bonomo2017a}, Gl96~\citep{hobson2018}, of HD69830~\citep{lovis2006},  HD40307~\citep{mayor2009}, which have respectively 28, 67, 74 and 129 observations and  generate a planet with uniform distribution in $e$, $\omega$, $M_0$ and $K$ (the latter on [0,3]). The period is drawn from a  log normal distribution ($\log_{10} P \sim G(1.5,1)$). The uncertainties on the measurements are taken as the nominal ones that are normalised to obtain a mean variance of 1. We then obtain the fitted uncertainty $\sigma_{\mathrm{fit}}^2$.

For each realisation of the true planet and the noise, we perform first a local minimisation, initialized at a circular orbit. We then compute three quantities at this local minimum: the condition number of the Fisher matrix $\mathbfss{F}$, as well the SNR, similarly to equation~\eqref{eq:snr},
\begin{align}
c_F &= \frac{\max_{i=1..N} \lambda_i}{\min_{i=1..N} \lambda_i} \; \; \; \text{with} \; \; \; (\lambda_i)_{i=1..N} =\mathrm{eigenvalues}(\mathbfss{F}) \\
S_{t} &= \frac{K_t}{\sigma_t} \sqrt{ \frac{N-p}{2} } \label{eq:st} \\
S_{\mathrm{fit}} &= \frac{K_{\mathrm{fit}}}{\sigma_{\mathrm{fit}}} \sqrt{ \frac{N-p}{2} } \label{eq:sfit} 
\end{align}
where $K_t$ and $K_{\mathrm{fit}}$ are respectively the true and fitted values of the semi-amplitude, $N$ is the number of measurements, $p$ the number of fitted parameters.  $\sigma_{\mathrm{fit}}$ is defined as follows.  We adjust a term $\sigma_J^2$ so that the reduced $\chi^2$ is equal to one and take $\sigma_{\mathrm{fit}}$ as the mean of $ (\sigma_J^2 + \sigma_k^2)_{k=1..N}$. The rationale of taking this definition instead of the true SNR is to have a quantity that does not require to know the true orbital elements  and can be computed on a real data set. 

Secondly, we perform a Keplerian periodogram on a closely spaced grid of $e \in [0, 1]$, $\omega \in [0, 2\pi)$ and $n \in [n_t-1/T_\mathrm{obs}, n_t+1/T_\mathrm{obs}]$, where $n_t$ is the true mean motion. We compute the value of the eccentricity where the global minimum of $\chi^2$ is attained, as well as the number of local minima. For instance, in Fig.~\ref{fig:proximab}, the red curve displays four local minima at $e =0.17, 0.75, 0.92$ and $1$, and the global minimum is attained at $e=0.92$. In that case, a local minimisation would not give the global minimum. 

For both definitions of the SNR, we compute the number of system that have a SNR between $5k$ and $5(k+1)$, $k=0..7$ and greater than 35. In each bin, we compute the number of systems with 1,2,3,4,5,6 or 7 local minima. The results are represented in Fig.~\ref{fig:nlminst} and~\ref{fig:nlminsfit} for $S_{t}$ (eq.~\eqref{eq:st}) resp. $S_{\mathrm{fit}}$ (eq.~\eqref{eq:sfit}) by the blue histograms. We then compute in each bin the fraction of system where the global minimum is not the local minimum closest to 0, which means that most likely, a local minimisation does not yield the global minimum (red stair curve, see the right $y$ axis scale). For instance, in Fig.~\ref{fig:nlminst}, we see that out of the 2500 systems
simulated, 472 had a St between 0 and 5, 176 of which had
only one local minimum, 203 had two, 84 had 3, 9 had 4 and
none had more. Among those 472 systems, 34\% of them had
a local minimum that is not the global minimum.

Note that in Fig.~\ref{fig:nlminsfit}, the systems with $S_{\mathrm{ fit}}>35$ present the highest proportion of missed global minima. This is due to the fact that no systems with very high SNR where generated. As a consequence, all the very high values of $S_{\mathrm{ fit}}$ result from datasets with very low $S_t$ where $K$ was very overestimated. To obtain a more reliable diagnostic, we need to determine if a fitted SNR can be trusted. The criterion we used is to select only the data sets where $c_F < 10^7$. The figure~\ref{fig:nlminsfitcf} shows the number of local minima per bin of $S_\mathrm{fit}$ so obtained. The predominance of systems with two local minima is due to the fact that at very high eccentricity, there is in general a decrease of $\chi^2$. As a consequence, the second local minima is attained at the maximum eccentricity used for the calculation of the global periodogram. For instance in the case of Proxima b (Fig.~\ref{fig:proximab}), there is such a minimum at $e  = 0.999$. Note if we do the same analysis per data set (CoRoT-9, Gl96, HD69830, HD40307), we obtain very similar figures, which shows that indeed the SNR is a reliable metric for the number of local minima. 
As a conclusion, we expect that the exploration of the parameter space to be more difficult at low SNR, since there are more local minima to expect.

\subsection{Interpretation}
\label{app:highecc_inter}

We now give a geometrical interpretation of why the high eccentricity region is prone to having local minima. 

Finding the best fitting model amounts to finding the model closest to the observation in a geometrical sense. 
We consider the figure drawn in $\mathbb{R}^N$ by all the  models that have an eccentricity $e$  and a period $P$, $\mathcal{M}_{e,P}$.  This figure might explore more or less dimensions. For instance, if it is close to a plane, it is nearly confined to a two-dimensional space. Otherwise, exploring many dimensions traduces a ``rough'' surface, which increases the chances of finding a local minimum of distance to the data. 

By a procedure based on singular value decomposition, detailed in~\cite{harathese} section 4.3, we obtain an approximate number of dimensions explored by $\mathcal{M}_{e,P}$ as a function of $e$. We here provide  a brief description of our methodology. The measured velocity can be expressed as a linear combination of the velocity components in the orbital frame $\dot{X},\dot{Y}$. With these variables, equation~\eqref{eq:vexprbis} becomes
\begin{align}
y(t,A,B,\omega,e,P) = A\dot{X}(t,\omega,e,P) +B\dot{Y}(t,\omega,e,P).
\label{eq:keplerab}
\end{align} 
The components $\dot{\mathbfit{X}}(P,e,\omega) = (\dot{X}(t_k,P,e,\omega))_{k=1..N}$, $\dot{\mathbfit{Y}}(P,e,\omega)= (\dot{Y}(t_k,P,e,\omega))_{k=1..N}$ are
computed for a grid of $\omega$, $(\omega_k)_{k=1..n}$. Those vectors are concatenated to form a matrix $ M(e,P) = [\dot{\mathbfit{X}}(P,e,\omega_1)..\dot{\mathbfit{X}}(P,e,\omega_n)\dot{\mathbfit{Y}}(P,e,\omega_1)..\dot{\mathbfit{Y}}(P,e,\omega_n)] $, whose columns are  normalized to obtain $\tilde{M}(e,P)$. We  compute  the number of singular values of  $\tilde{M}(e,P)$ that are greater than a tenth of the maximum singular value. This constitutes a proxy for the number of dimensions explored by models with $e,P$ fixed.

 As an example, we perform this calculation on the 214 measurement times of GJ 876~\citep{correia2010}. 
The number of explored dimensions are shown in table~\ref{table:dimension}. As eccentricity increases, the models explore more dimensions.
Since the models at high eccentricity occupy a very large volume in many dimensions, there will often be at least one high eccentricity model closely fitting the data. 

\section{Frequentist methodologies}
\label{sec:fisheriansection}
\subsection{Presentation}

Testing possible eccentricities can also be done in a frequentist framework. This one offers confidence intervals, which are not as easy to interpret as Bayesian credible intervals but have the advantage of being quicker to compute. Furthermore, the associated algorithms have clearer convergence tests.

So far frequentist inferences for eccentricities have been done in several ways.~\citep{lucy1971} and~\cite{husnoo2012} respectively used $p$-values and Bayesian Information Criterion (BIC) to test the hypothesis that eccentricity is non zero. More precisely,~\cite{lucy1971} compute the probability distribution of the eccentricity estimate under the hypothesis that the eccentricity is null and find a Rayleigh distribution whose variance depends on the SNR (which we also obtain as a special case of our analysis section~\ref{sec:moremodels} with $p=0$). For a given measured eccentricity $\widehat{e}$, they measure the probability that the Rayleigh distribution is higher than $\widehat{e}$ and report an eccentric orbit if this probability is lower than a certain threshold.

~\cite{husnoo2012} computes 
\begin{align}
\mathrm{BIC}(\mathcal{M})  = \chi^2_{\mathrm{min}}(\mathcal{M}) + p \ln N + \ln (2\pi|\mathbfss{V}|)
\label{eq:bic}
\end{align}
where $\chi^2_{\mathrm{min}} $ is the minimum $\chi^2$ obtained when minimising the distance between the data and model $\mathcal{M} $, $p$ is the number of degrees of freedom of $\mathcal{M}$ (three for a sine model and five for a Keplerian one) and $|\mathbfss{V}|$ is the determinant of the correlation matrix. The orbit is said to be eccentric  if $\mathrm{BIC}(\mathcal{M}_{\mathrm{ecc}}) \geqslant \mathrm{BIC}(\mathcal{M}_{\mathrm{circ}})$ where $\mathcal{M}_{\mathrm{ecc}}$ and $\mathcal{M}_{\mathrm{circ}}$ are respectively eccentric and circular models.

Though reasonable, these techniques can be improved. First, they both consider the alternative $e$ is zero or non-zero, and do not allow to test if a given value of eccentricity is compatible with the data or not. Secondly, the analytical approximation of the eccentricity distribution is not always accurate.
Finally, the Bayesian information criterion~\eqref{eq:bic} gives equal weight to all parameters, only their number $k$ appears. This approximation is valid in the limit of a large number of observations.

\subsection{New methods}

Our aim is to overcome as much as possible these limitations. It turns out that the procedure to construct confidence intervals outlined in~\cite{casellaberger2001}, chapter 9, allows us to test the hypothesis that the true eccentricity is equal to a certain value $e$ for all $e$.  The idea is to reject the hypothesis that eccentricity is equal to $e$ if all models with eccentricity $e$ have a likelihood lower than a fraction of the maximum likelihood. The following criterion is computed in Appendix~\ref{fisherian}.  We reject the hypothesis that the eccentricity has a certain value $e$ with a confidence level $\alpha$ if   
\begin{align}
&LR :=  \frac{\max\limits_{\btheta \in \Theta_e} f(\mathbfit{y}|\btheta)  }{\max\limits_{\btheta \in \Theta} f(\mathbfit{y}|\btheta) } \leqslant \e^{-\frac{1}{2}\beta} \label{eq:lrt} \\
&\beta = F_{\chi^2_\rho}^{-1}(1 - \alpha) \label{eq:lrt2} \\
&\rho = 2 + 2S'^2\frac{e^2}{1+e^2} - \frac{\pi e}{1+e^2}L_{\frac{1}{2}}\left(-\frac{S'^2}{2}\right)
L_{\frac{1}{2}}\left(-\frac{e^2S'^2}{2}\right) \label{eq:lrt3}.
\end{align}
where  $\Theta_e$ is the set of parameters that have all eccentricity $e$, $f(\mathbfit{y}|\btheta)$ is the likelihood,  $F_{\chi^2_\rho}^{-1}$ is the inverse cumulative distribution function of a $\chi^2$ law with $\rho$ degrees of freedom, $S' = (\sigma /K_t) \sqrt{2/N}$  and $L_{\frac{1}{2}} $ is the Laguerre polynomial of order $1/2$.  The quantity~\eqref{eq:lrt} is simply the ratio of the maximum likelihood obtained by restriction to the models with fixed eccentricity divided by the maximum likelihood on all models. The condition states that if all models that have eccentricity $e$ have too low a likelihood, then $e$ is rejected. The following equations give the value of that threshold, which is obtained by calculating the law followed by the random variable $LR$ under the hypothesis that the true eccentricity is $e$ ($LR|(e_t=e)$). It is in fact easier to compute the law followed by the logarithm of $LR$, to obtain a $\chi^2$ law whose degree depends on a definition of the SNR $S$ and on the eccentricity under study, but is always smaller than 2. Our computations, detailed in Appendix~\ref{fisherian}, also make use of simplifying assumptions, but these are checked to give satisfactory results on simulated signals. 

One of the problems of that expression is that it depends on the true value of the semi-amplitude, $K_t$, which is unknown. There are two ways to circumvent this issue: either by assuming that $\rho = 2$ for all $e$, which is the maximum value $\rho$ can take, of $K_t$, or by approximating $K_t$ by the semi-amplitude of a circular orbit fitted at the period of the signal. The first option can be used to obtain conservative intervals to ensure that $e$ is non zero. The second one gives a more realistic criterion to reject an eccentricity if no extra care is needed. Let us note that $\rho=2$ is obtained for $e=0$. This has a simple interpretation: the model can be approximated by a linear one in $k = e\cos \omega$ and $h=e\sin \omega$. When $e=0$, both $k$ and $h$ are set to zero, which blocks two degrees of freedom. Denoting by $\mathbfit{y}_e$ the model with fixed eccentricity $e$  that has maximum likelihood and $\mathbfit{y}^\star$ the model with maximum likelihood, all parameters free, 
\begin{align}
0.5\ln(LR) = \|\mathbfss{W}(\mathbfit{y} -\mathbfit{y}_e)\|^2 - \|\mathbfss{W}(\mathbfit{y} -\mathbfit{y}^\star)\|^2.
\label{eq:lrt4}
\end{align}
behaves then as a $\chi^2$ law with two degrees of freedom.

To offer a point of comparison of the confidence interval computed equation~\eqref{eq:lrt3}, two other metrics are tested. First, we simply consider 
\begin{align}
F = \frac{N-p}{\rho} \frac{\|\mathbfss{W}(\mathbfit{y} -\mathbfit{y}_e)\|^2 - \|\mathbfss{W}(\mathbfit{y} -\mathbfit{y}^\star)\|^2}{\|\mathbfss{W}(\mathbfit{y} -\mathbfit{y}^\star)\|^2}
\label{eq:lrtfratio}
\end{align}
which is basically equation~\eqref{eq:lrt4}, normalized by $\|\mathbfss{W}(\mathbfit{y} -\mathbfit{y}^\star)\|^2$ so that it depends less on the noise level assumption. The quantity~\eqref{eq:lrtfratio} is assumed to follow a $F$ distribution with $\rho$ and $N-p$ degrees of freedom.

Secondly, we generalize the test suggested by~\citeauthor{lucy1971}. Let us denote by $e^\star$ the estimate of eccentricity obtained by maximum likelihood when all parameters are free. For eccentricity $e$, we fit a Keplerian model that has an eccentricity fixed at $e$. We then compute the probability 
\begin{align}
\mathrm{Pr}\{|\widehat{e} - e| > |e^\star - e| |e, \mathbfss{V}, \widehat{e} \sim \mathrm{Rice}(e, \eta^2) \}
\label{eq:lucyswingen}
\end{align}
that is the probability that an eccentricity estimated by maximum likelihood  $\widehat{e}$ is at least as far from its assumed value $e$ than the distance between $e$ and the best fit actually observed, assuming the noise model is Gaussian with known covariance matrix $\mathbfss{V}$. We also assume that $e$ follows a Rice distribution as in appendix~\ref{appendix_realformula}. A Rice distribution can be seen as the modulus of a vector with two independent Gaussian variables that have the same variance, $X \sim G(a, \eta^2)$ and $Y \sim G(b, \eta^2)$ where $k$ and $h$ are the means of these variables. To specify the distribution, we need therefore two scalars: the variance of both random variables $\eta^2)$ and the modulus of the mean of these two variables, $r = \sqrt{a^2 + b^2}$. Here $a=k$ and $b=h$, so $r=e$. Then $\eta^2$ is the variance of the estimates of $k$ or $h$, which under the hypotheses of section~\ref{sec:moremodels} have the same variance $\eta^2 = (\sigma_{\mathrm{RV}}^2/K_e^2)(\pi/(N-p)) $. Then the quantity~\eqref{eq:lucyswingen} can easily be evaluated by the cumulative distribution function of the Rice distribution, which is a Marcum $Q$-function.

Computing~\eqref{eq:lrt4},~\eqref{eq:lrtfratio} or~\eqref{eq:lucyswingen} necessitates to compute the minimum distance between the observations and a model with fixed eccentricity. To do so, we exploit the fact that Keplerian models are partly linear,
$\mathbfit{y}(\mathbfit{t},\btheta) = A \dot{\mathbfit{X}}(P,e,\omega) + B \dot{\mathbfit{Y}}(P,e,\omega)  + C $
where $\dot{\mathbfit{X}}$ and $\dot{\mathbfit{Y}}$ are the components of the velocity on the orbital plane. For each couple $e,\omega$, we can minimize  $\| \mathbfit{y} - \mathbfit{y}(\mathbfit{t},\btheta) \|$ on $A,B,C$ and $P$, which are respectively three linear parameters and one non-linear parameter. Such a problem is fast to solve with, for instance, a Levenberg-Marquardt algorithm~\citep{levenberg1944, marquardt1963}. If the period is already known (which is supposed here), obtaining an array of $\chi^2$ on a fine grid of $e$ and $\omega$ (60 values each) takes only up to one minute. Let us finally note that the idea of restricting the global $\chi^2$ minimization to a grid of non linear parameters is not new~\citep{hartkopf1989,lucy2014}. There are even further resemblances of our interval calculation with~\cite{lucy2014}, where confidence intervals on orbital parameters are computed in a similar way. However, ~\cite{lucy2014} uses a degree of freedom $\rho = 1$ for all parameters. This is correct only if the model is linear in all the parameters or approximately linear in the vicinity of the best fit and unimodal.

\subsection{Tests}
\label{sec:fisheriantest}

The formula~\eqref{eq:lrt2} and~\eqref{eq:lrt3} have been derived with simplifying assumptions. 
To test and compare them to other options, we proceed as follows. We define the acceptable interval as the set of $e$ where $LR_e \leqslant \exp(-0.5 F_{\chi_\rho^2}(1-\alpha))$.
\begin{enumerate}
	\item We generate a population of exoplanets according to a certain prior density on the orbital elements $p(K,e,P,M_0,\omega)$. The measurement times are taken from existing data sets. The noise generated according to a Gaussian density of covariance matrix $V$. 
	\item For each system, we compute the set of eccentricity that are not rejected, we check that the true eccentricity belongs to this set and compute the measure of its complement in $[0,1]$, that is the measure of the set of rejected eccentricities. 
	\item The results are summarized in two plots. First, the fraction of cases where the true eccentricity is not in the acceptable interval as a function of $\alpha$. Second, the curve drawn when $\alpha$ goes from 0 to 1 by a point whose ordinate is measure of the complement of the set of acceptable eccentricity and whose abscissa is the fraction of cases where the true eccentricity is not in the acceptable interval. 
\end{enumerate}

Such tests were carried out with the following inputs: the measurement times are those of CoRoT-9~\citep{bonomo2017a}. The angles $\omega$ and $M_0$ are chosen uniformly, $e$ follows an uniform distribution. In Fig.~\ref{roc} a) and b)we plot the result of the experiment for a period fixed at 95 days  the semi-amplitude is fixed to $K = 3.5\sigma$ where $\sigma$ is the RMS of the errors. These are the parameters of the detected Jupiter in the system. In Fig.~\ref{roc} c) and d), we let the period vary uniformly in $\log P$ and compute the same quantities.

Plots a) and c) of Fig.~\ref{roc} are labelled ``ROC - like'' curve as a reference to Receiver-Operator Characteristic. These ones are defined when the data are used to decide between two hypothesis. The ROC curve represents the fraction of false positives as a function of the fraction of false negatives for a given decision rule. We adapt this notion to our case, where there is an infinity of hypotheses (each $e$ in $[0, 1]$ is a hypothesis). For a given rate of true eccentricity that is not in the acceptable interval (false negatives), the $y$ axis gives the precision on the estimate. The more eccentricities are rejected, the more precise the estimate.  The closer such a curve is to the upper left corner the better: regardless of the value of $\alpha$, the fraction of true $e$ rejected is zero (no false negatives) and almost all other eccentricities are rejected: the estimation is very precise.

Interestingly enough, the ROC curve (left) is very similar for all the metric considered with a slight advantage for the $F$-ratio and the likelihood ratio tests (formula~\eqref{eq:lrt4} and~\eqref{eq:lrtfratio}), which have a better precision (more eccentricities rejected) when the fraction of true $e$ rejected is low. We now need to set the level of true $e$ rejected. 
As expected, the curve obtained for $\rho=2$ gives an overestimated error rate for a given $\alpha$. For the three other tests, the correspondance seems appropriate. Overall, the  $F$-ratio and the likelihood ratio tests seem to perform best. 

One advantage of the frequentist method is that it relies only on local minimization algorithms, therefore it is fast and convergence is ensured. We have also shown that the parameter $\alpha$ (see eq.~\eqref{eq:lrt2}) allows us to directly control the confidence intervals meaning. Let us  mention that we observed some peculiar behaviour of the estimates for some periods where the matrix of the linearized model is ill-conditioned, that we wish to investigate into more depth in future studies. In those cases, the hypotheses allowing to compute formula~\eqref{eq:lrt3} are not verified and Bayesian analysis or more sophisticated formula would be required. On the other hand, it seems unlikely that someone would want to prove a non zero eccentricity of a planet particularly poorly sampled.

\begin{figure*}
	\centering
	\begin{tikzpicture}
	\path (0.05,0) node[above right]{\includegraphics[scale=0.35]{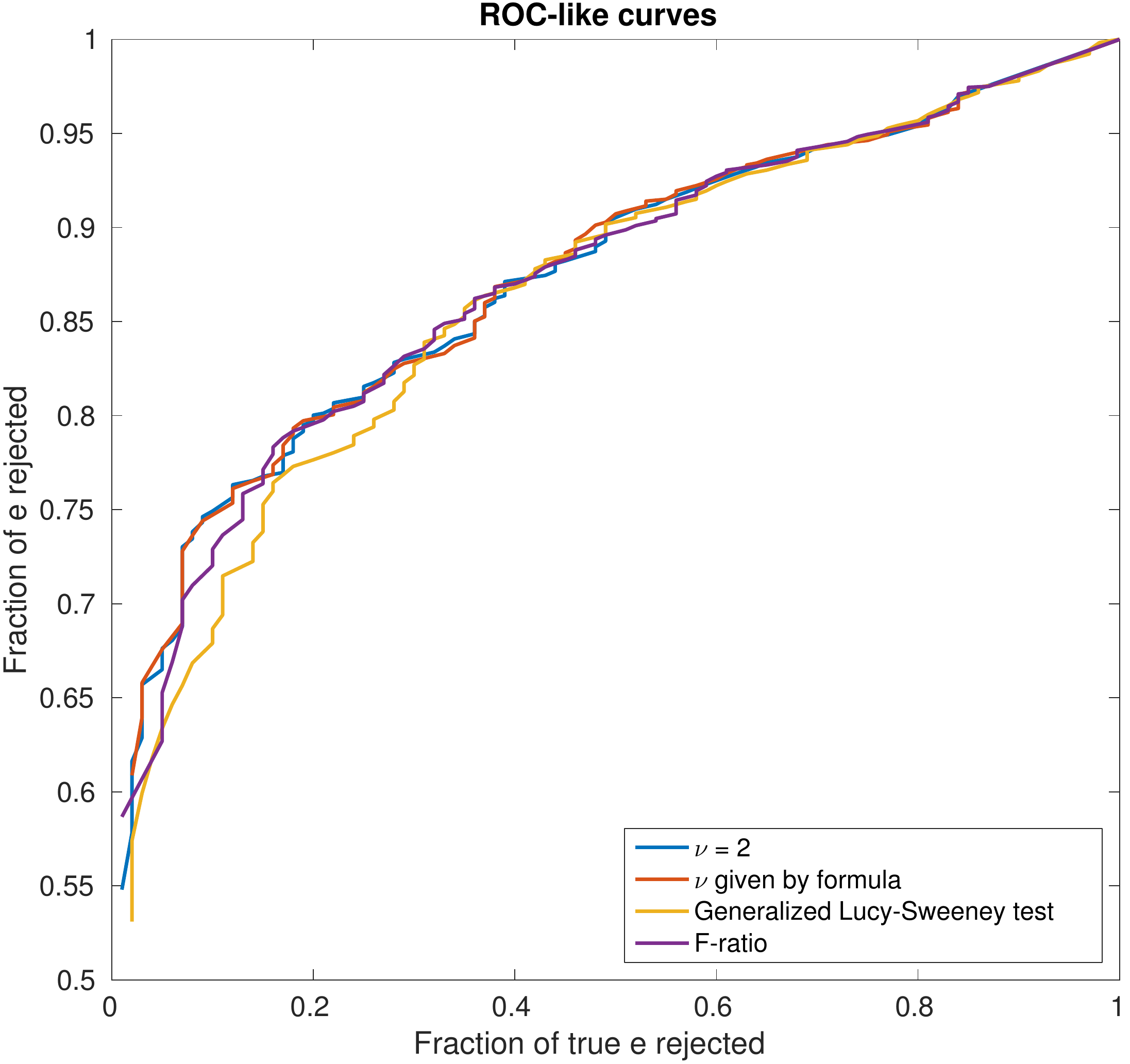}};
	\path (-0.5,7) node[above right]{a)};
	\path (9,0) node[above right]{\includegraphics[scale=0.35]{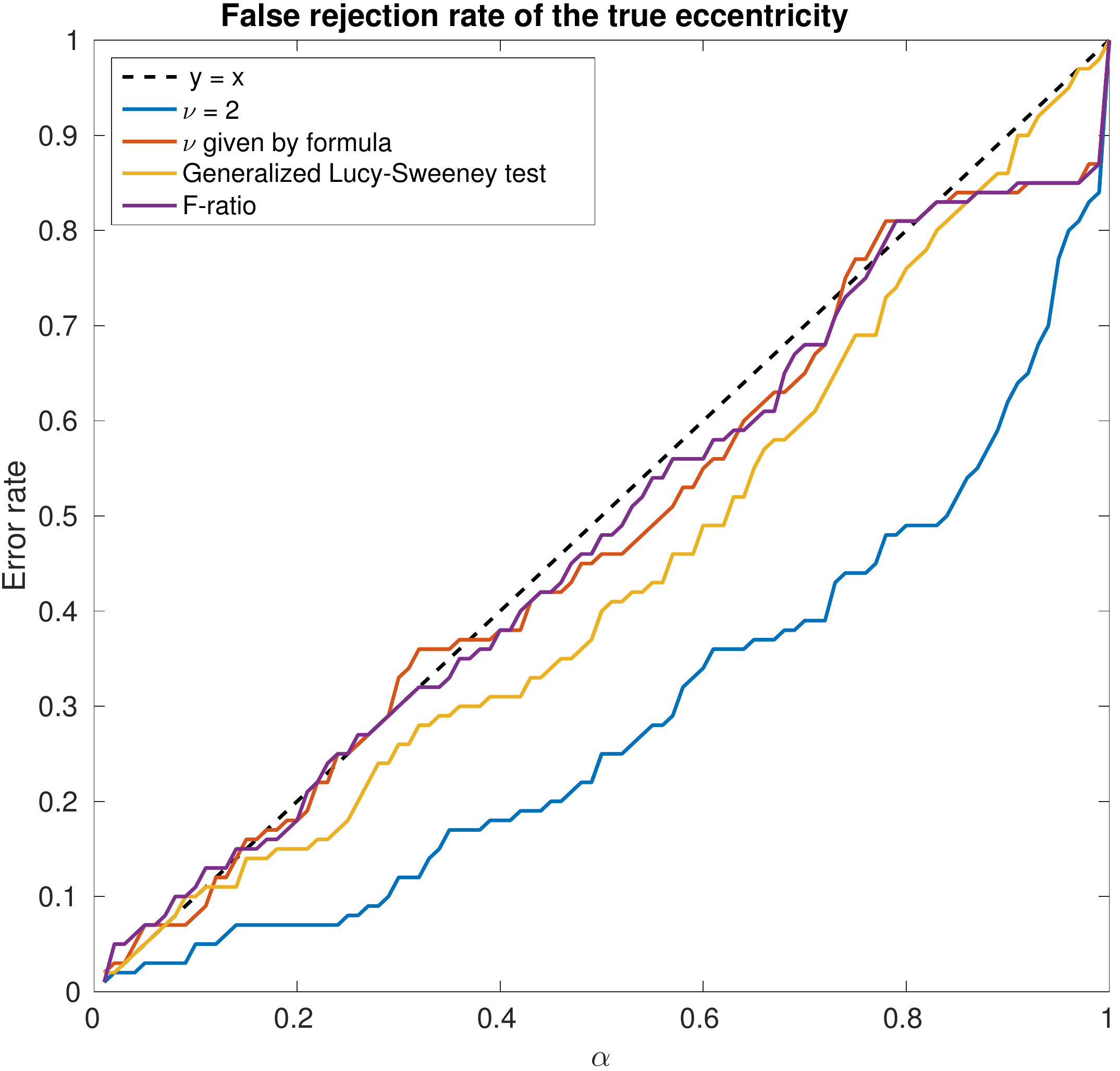}};
	\path (8.5,7) node[above right]{b)};
	\begin{scope}[yshift=-7.2cm]
	\path (0.05,0) node[above right]{\includegraphics[scale=0.4]{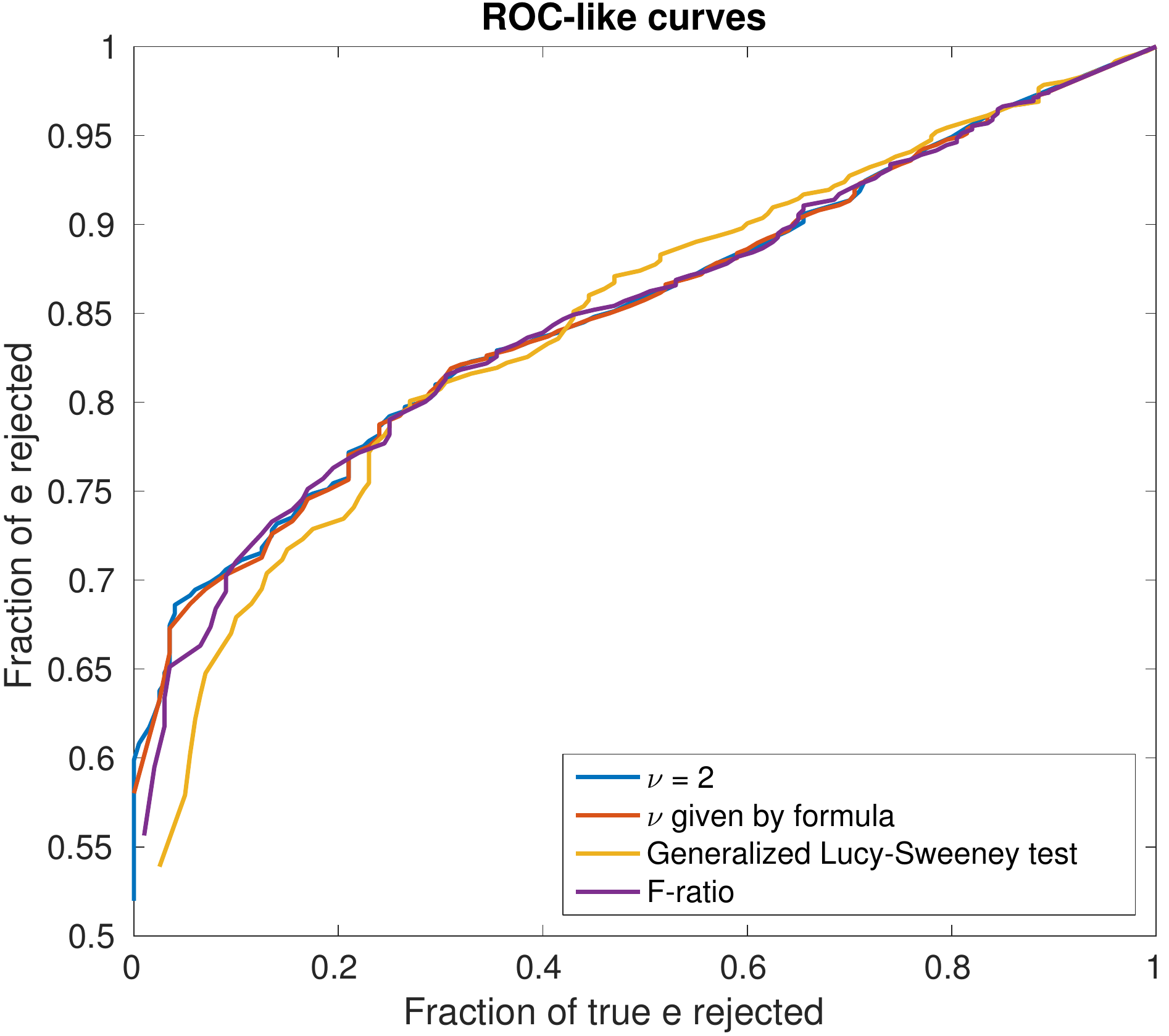}};
	\path (-0.5,6.5) node[above right]{c)};
	\path (9,0) node[above right]{\includegraphics[scale=0.4]{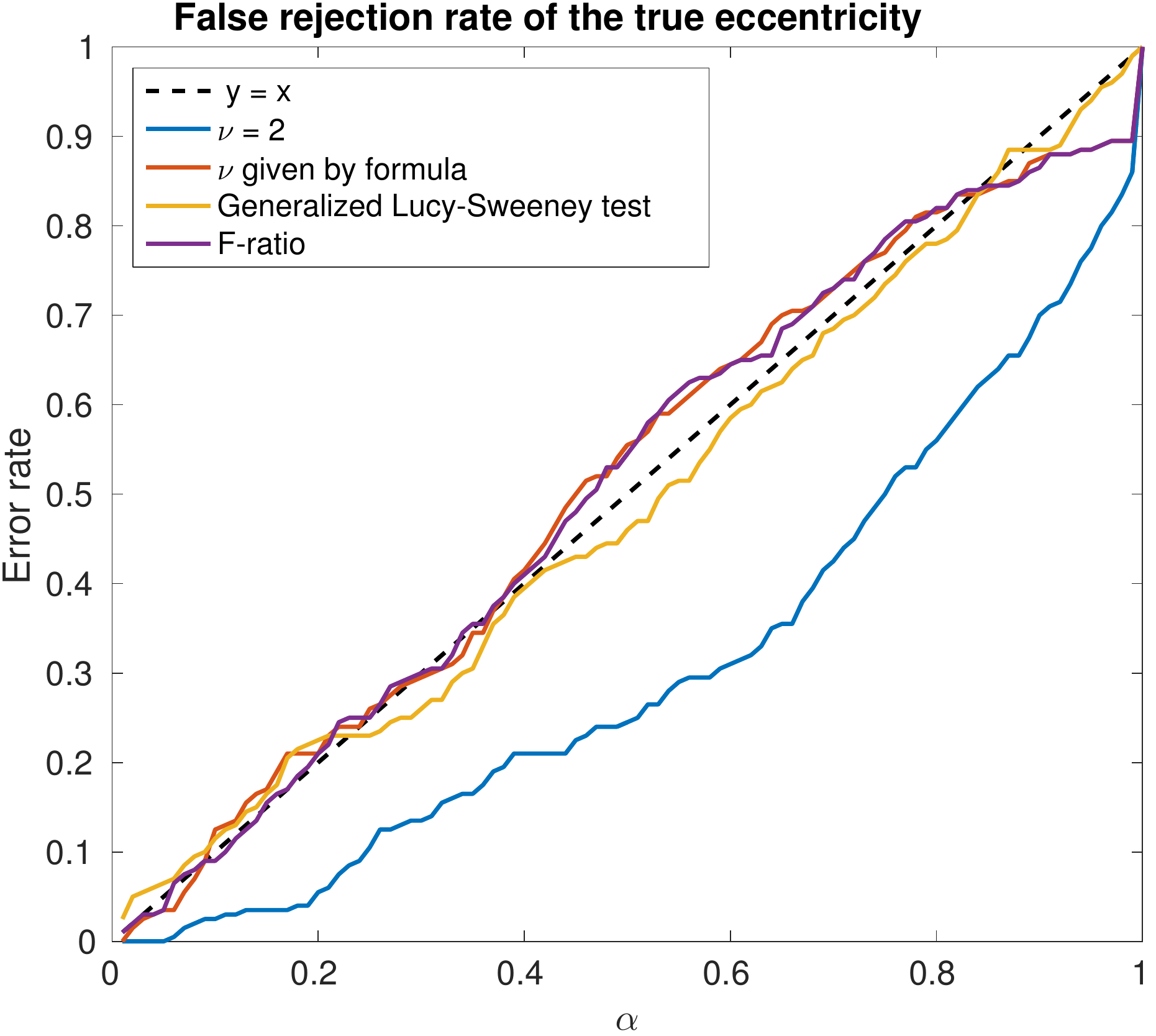}};
	\path (8.5,6.5) node[above right]{d)};
	\end{scope}
	\end{tikzpicture}
	\caption{Results of simulations described by steps (i) to (iii) of section~\ref{sec:fisheriantest}. Here $K$ = 3.5 $\sigma$, $\omega$ and $M_0$ are chosen uniformly in $[0, 2\pi]$, $e$ is chosen uniformly in $[0, 0.99]$. For figures a) and b), the period is fixed to 95 days, while it is chosen uniformly in $\log P$ for plots c) and d). Figures a) and c) represent the fraction of eccentricities rejected (or equivalently, the measure of the set of rejected eccentricities) as a function of the rate of rejection of true eccentricity. Figures b) and c) represent the rate of rejection as a function of $\alpha$. The blue, red, yellow and purple curves are respectively obtained with rejection criteria given by equations~\eqref{eq:lrt}-\eqref{eq:lrt2} with $\rho =2$, equations~\eqref{eq:lrt}-\eqref{eq:lrt2} with $\rho$ given by eq.~\eqref{eq:lrt3}, eq.~\eqref{eq:lrt4} and eq.~\eqref{eq:lrtfratio} with $\rho$ given by eq.~\eqref{eq:lrt3}. In the figure legends, $\nu$ refers to $\rho$. }
	\label{roc}
\end{figure*}

\subsection{Confidence interval calculation}
\label{fisherian}
In this section, we outline the calculation of the confidence intervals for eccentricity. Such an interval is constructed as a set of eccentricities that are not rejected by a hypothesis test. We choose the likelihood ratio test:
\begin{align*}
e \; \text{ is rejected if} \; \; \; \;  R :=  \frac{\max\limits_{\btheta \in \Theta_e} f(\mathbfit{y}|\btheta)}{ \max\limits_{\btheta \in \Theta} f(\mathbfit{y}| \btheta)} \leqslant \beta  
\end{align*}
where $\mathbfit{y}$ denotes the actual observations, $f(\mathbfit{y}|\btheta)$ denotes the likelihood, $\Theta_e$ is the set of parameters that have eccentricity $e$, and $\beta$ is a constant which will be made explicit later. Our aim is to compute the distribution of $R$ under the  assumption that the random variable giving the observations is $\vec Y = \mathbfit{y}_t +\vec \epsilon$, $\vec \epsilon$ being a Gaussian noise.
We further assume the noise is independent and identically distributed, the condition translates to 
\begin{align}
\label{eq:etestr}
e \; \text{ is rejected if} \; \; \; \;  D:=  \| \mathbfit{y} - \mathbfit{y}(\btheta_e) \|^2 -  \| \mathbfit{y} - \mathbfit{y}^\star\|^2 \geqslant -2 \sigma^2 \ln \beta  
\end{align}
where $\btheta_e = \arg \min\limits_{\btheta \in \Theta_e}  \| \mathbfit{y} - \mathbfit{y}(\btheta) \|^2$, $\sigma^2$ is the variance of the observations and $\mathbfit{y}^\star$ is the global minimum. We  now compute the law followed by $D$, so that we can select a $\beta$ that corresponds to a false alarm probability. Since $D$ is defined implicitly, the calculation of its distribution is difficult. We make two simplifying assumptions that allow us to obtain an analytical expression. The expression will then be tested on real cases through numerical simulations. 

Let us first consider the linear approximation $\mathbfit{y} = \mathbfss{M} \mathbfit{x}_t + \bepsilon$ where $\mathbfss{M}$ is defined as in~\eqref{eq:model2} and~\eqref{eq:model2b}. We further suppose that the columns of $\mathbfss{M}$ are orthonormal. Since the columns are originally of the form $\cos n \mathbfit{t}, \sin n \mathbfit{t}, \cos 2 n \mathbfit{t},\sin 2 n \mathbfit{t}$, they must be multiplied by $\sqrt{2/N}$ and the amplitude of the signal is no $K_t$ but $K_t \sqrt{N/2}$.
We look for the solution $\hat{\vec \theta}_e$ defined as
\begin{equation}
\hat{\vec \theta}_e = \mathrm{arg}\min_{\vec x\in\mathbb{R}^p} \|\mathbfit{y} - \mat M
\vec x\| \quad \mathrm{subject\ to} \quad
\sqrt{\frac{x_3^2+x_4^2}{x_1^2+x_2^2}} = e .
\end{equation}
Thanks to the Lagrange multipliers theorem, $\hat{\vec \theta}_e$ satisfies the conditions
\begin{equation}
\frac{\partial L}{\partial \vec x} = \vec 0, \quad
\frac{\partial L}{\partial \lambda} = 0 , \quad\mathrm{where}
\end{equation}
\begin{equation}
L(\vec x, \lambda) = \frac{1}{2}\|\mathbfit{y} - \mat M \vec x\|^2 + \frac{\lambda}{2} \trans{\vec x}\mat E \vec x
\end{equation}
with
\begin{equation}
\mat E = \mathrm{diag}\left(-e^2,-e^2,1,1,0,\ldots,0\right) .
\end{equation}
The condition $\partial L / \partial \vec x = \vec 0$ leads to
\begin{equation}
\left(\trans{\mat M}\mat M + \lambda \mat E\right)\vec x = \trans{\mat M}\mathbfit{y} .
\end{equation}
Since the columns of $\mat M$ are orthonormal, $\trans{\mat M}\mat M$ is the identity, thus
\begin{equation}
x_1 = \frac{u_1}{1-\lambda e^2} , \quad
x_2 = \frac{u_2}{1-\lambda e^2} , \quad
x_3 = \frac{u_3}{1+\lambda} , \quad
x_4 = \frac{u_4}{1+\lambda} , \quad
\end{equation}
and $x_j = u_j , \ \forall j\geq 5$, where we have defined $u_i = \trans{\vec M_i}\mathbfit{y}$, $\vec
M_i$ being the $i$-th column of $\mat M$. The first four components of $\vec x$ are also constrained
by $\partial L / \partial \lambda = 0$. Let $U = u_1^2 + u_2^2$ and $V = u_3^2+u_4^2$. We get
\begin{equation}
\frac{-e^2}{(1-\lambda e^2)^2} U + \frac{1}{(1+\lambda)^2} V = 0 ,
\end{equation} 
or, equivalently,
\begin{equation}
e^2(e^2V-U)\lambda^2 - 2e^2(V+U)\lambda + V-e^2U = 0 ,
\end{equation}
whose solutions are
\begin{equation}
\lambda_\pm = \frac{e^2(U+V)\pm e(1+e^2)\sqrt{UV}}{e^2(e^2V-U)} .
\end{equation}
For the solution $\hat{\vec \theta}_e$ to actually be a minimum of $L$, all its eigenvalues must be
positive, i.e., $\lambda$ must verify $-1<\lambda<1/e^2$. Only $\lambda_-$ fulfils this criterion,
thus
\begin{equation}
\lambda = \frac{e^2(U+V)-e(1+e^2)\sqrt{UV}}{e^2(e^2V-U)} ,
\end{equation}
and
\begin{equation}
x_1 = \frac{1+e_0^2}{1+e^2}u_1 , \quad
x_2 = \frac{1+e_0^2}{1+e^2}u_2 , \quad
x_3 = \frac{e^2}{e_0^2}\frac{1+e_0^2}{1+e^2}u_3 , \quad
x_4 = \frac{e^2}{e_0^2}\frac{1+e_0^2}{1+e^2}u_4 ,
\end{equation}
with $e_0^4 = e^2V/U$. After a few calculation, we show that
\begin{equation}
D = \sum_{k=1}^4(u_k-x_k)^2 = \frac{\left(e\sqrt{u_1^2+u_2^2}-\sqrt{u_3^2+u_4^2}\right)^2}{1+e^2} .
\end{equation}
Let $x = e\sqrt{u_1^2+u_2^2}/K_t$ and $y = \sqrt{u_3^2+u_4^2}/K_t$. These two random variables
follow Rice distributions with parameters
\begin{equation}
\rho_x = e\sqrt{\frac{N}{2}} , \quad \sigma_x = \frac{e\sigma}{K_t} , \quad
\rho_y = e\sqrt{\frac{N}{2}} , \quad \sigma_y = \frac{ \sigma}{K_t} .
\end{equation}
An expansion of the product term shows that $D$ behaves approximately as a weighted sum of variables following a $\chi^2$ distribution. We can then use the Welch-Satterthwaite approximation~\citep{satterthwaite1946,welch1947}: $D$ approximately follows a $\chi^2$ distribution whose number of degrees of freedom $\rho$ is given by $\mathbb{E}\{D \}$.
In the following, we denote by $S' = \frac{K_t}{\sigma}\sqrt{\frac{N}{2}}$ the SNR.
The expected value of $D$ is
\footnotesize
\begin{equation}
\begin{split}
\mathbb{E}\{D\} &= \frac{K_t^2}{1+e^2}
\int_0^\infty \int_0^\infty (x-y)^2 f(x|\rho_x,\sigma_x) f(y|\rho_y,\sigma_y)\,\dd x \dd y , \\
&= \frac{K_t^2}{1+e^2} \left[2\sigma_x^2+\rho_x^2+2\sigma_y^2+\rho_y^2-\pi\sigma_x\sigma_yL_{\frac12}\left(-\frac{\rho_x^2}{2\sigma_x^2}\right)L_{\frac12}\left(-\frac{\rho_y^2}{2\sigma_y^2}\right)\right] \\
&= \frac{K_t^2}{1+e^2}\left[2\frac{\sigma^2}{K_t^2}(1+e^2)+Ne^2-e\pi\frac{\sigma^2}{K_t^2}
L_{\frac{1}{2}}\left(-\frac{S'^2}{2}\right)L_{\frac{1}{2}}\left(-\frac{e^2 S'^2}{2}\right)\right] .
\end{split}
\end{equation}
\normalsize
With $\rho = \mathbb{E}\{D\}/\sigma^2$, we get
\begin{equation}
\rho = 2 + 2S'^2\frac{e^2}{1+e^2} - \frac{\pi e}{1+e^2}L_{\frac{1}{2}}\left(-\frac{S'^2}{2}\right)
L_{\frac{1}{2}}\left(-\frac{e^2S'^2}{2}\right) .
\end{equation}

To obtain a confidence level $\alpha$, then we need to take $-2 \ln \beta =  F_{\chi^2_\rho}^{-1}(1-\alpha)$ where $F_{\chi^2_\rho}^{-1}$ is the inverse cumulative distribution function of a $\chi^2$ distribution with $\rho$ degrees of freedom. Conversely, it is possible to convert a measured $D$ to a probability simply by computing $\alpha_e = 1-F_{\chi^2_\rho}(D)$. The hypothesis $e_t =e$ is rejected if $ \alpha_e$ is below a certain threshold.

This formula was tested numerically. It is in very good agreements with the simulations as soon as $S'$ is above $\approx 20$. As it decreases, the average of estimated eccentricity increases (which is exactly saying that the bias increases) therefore the approximation of low eccentricities does not hold any more. The value of $S'$ can be evaluated keeping in mind that when the linearised model at $e=0$ is poorly conditioned, (matrix $\mathbfss{M}$, as defined equations~\eqref{eq:model2} and~\eqref{eq:model2b}), then the uncertainty on $k$ and $h$ is higher than given by the simple formula~\eqref{eq:sigmak} and the $S'$ analytical approximation is inoperative.

\section{Non Gaussian noise}

\label{sec:nongauss}

\subsection{Simulations}
\label{sec:nongausssim}

In this appendix, we show that the non Gaussianity of the noise has a small impact on the quality of the eccentricity estimates. 

\REWRITE{ We generate 1000 realisations of six different types of noises. These have a null mean, are independent, identically distributed and scaled to have a standard deviation $\sigma = 1$ (their covariance matrix is the identity). We consider noises that are Gaussian, Student $T$ with 3 and 4 degrees of freedom,  uniform, exponential and Poisson. We inject a circular planet on the measurement times of Gl 96 with $K = 4 \times \sigma$ and uniform $\omega$, $M_0$.  Ten periods are chosen randomly according to a log normal distribution on the 5 to 500 days interval.}

For each type of noise, each of the 10 $\times$ 1000 noise realisations chosen, we compute  the error on eccentricity ($|\widehat{e} - e_t|$, $e_t$ being the true eccentricity and $\widehat{e}$ the estimate), and the root mean square (RMS) of the residuals. \REWRITE{The RMS is a s a proxy for the estimated noise level, and thus the width of the error bars.}  The latter is the maximum likelihood estimate of the noise level for an i.i.d. noise model.

The average values of those on the 10 times 1000 realisations are reported in table~\ref{tab:estimates}. Note that we take the square root of the mean squared error (MSE).
We find that the \REWRITE{mean error on eccentricity} for non Gaussian noise is within 2\% of the value of the mean error for a Gaussian noise, and the estimated jitter  is within 5\% of the value of the mean jitter for a Gaussian noise. Only the mean squared error varies by 10\% between the Gaussian and Student $T$ distribution with 3 degrees of freedom. 
Such results are remarkable, since for instance the Poisson noise takes discrete values and is non symmetrical. 

The cumulative distribution function of the error and the estimated noise levels (the standard deviation of the residuals) are shown in Fig.~\ref{fig:nongauss_err}. 
Note that the jitter estimates have a slightly greater dispersion for the Student distributions, but as in the case of the average values we do not see striking differences. We simply note that for noise distributions with heavy tails (here Student), as expected, there is a higher fraction of cases where the noise level is severely underestimated or overestimated. We conclude that non Gaussianity does not play a significant role, except for an increased variability of the noise level estimation when the noise distribution has heavy tails.

\begin{table}
	\caption{Error, mean squared error and noise level estimate of the eccentricity estimates for \REWRITE{Gaussian and non Gaussian noises}.}
	\begin{tabular}{p{1.8cm}|p{0.6cm}|p{0.6cm}|p{0.6cm}|p{0.6cm}|p{0.6cm}|p{0.6cm}}
		Noise type &  Gauss- ian& Stu- dent $T$  3 & Stu- dent $T$ 4  & Uni- form & Expo- nential & Poi- sson  \\ \hline
		Mean error&  0.067 &  0.066 & 0.0672 & 0.067&0.066 &0.067 \\
		$\sqrt{\mathrm{MSE}}$ &  0.084 & 0.094 &0.090 &0.085 & 0.083& 0.084  \\
		Noise level & 0.951 & 0.904 & 0.938& 0.952&0.940 &0.947\\
	\end{tabular}
	\label{tab:estimates}
\end{table}
\begin{figure}
	\includegraphics[width=8.2cm]{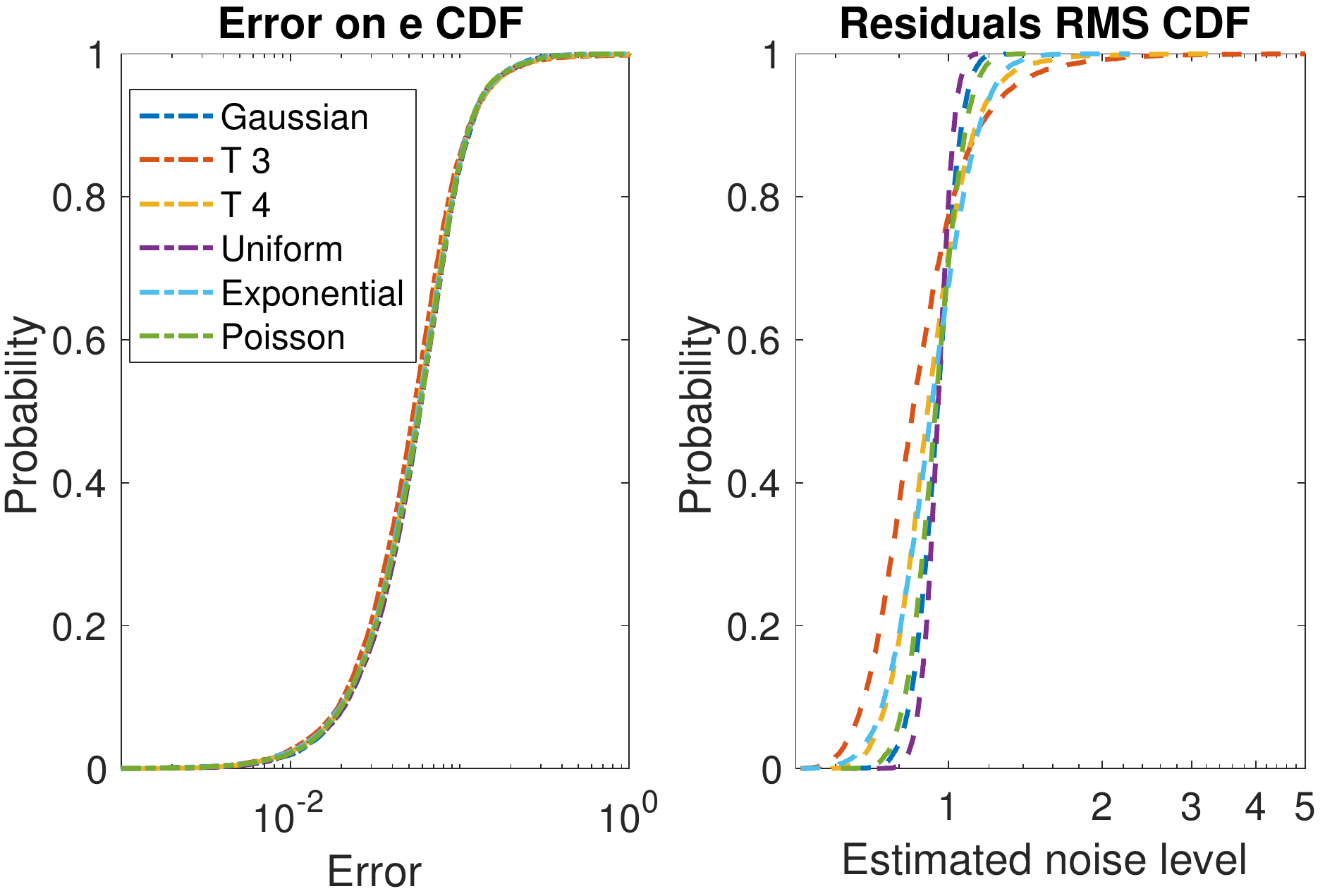}
	\caption{left: cumulative distribution function of the error on eccentricity for six different types of noise:  Gaussian, Student $T$ with 3 and 4 degrees of freedom,  uniform, exponential and Poisson. Right: standard deviation of the residuals after a Keplerian fit.}
	\label{fig:nongauss_err}
\end{figure}

\subsection{Distribution of the bias and the jitter }
\label{sec:appendix_nongauss}

In this section, we state and prove some mathematical results on the estimate of eccentricity obtained via a linear model. We assume as in section~\ref{sec:firstorder} that the eccentricity estimate is given by the linear model, $\mathbfit{y} = \mathbfss{M} \mathbfit{x} + \bepsilon$ where the first four columns of $\mathbfss{M} $ are such that the $i$-th line is evaluated at observation time $t_i$, $\mathbfss{M}_{i} = (\cos(n t_i) \; \sin(n t_i) \; \cos(2n t_i) \; \sin(2 n t_i))$ and $\mathbfit{x} = (A \; B \;  C \; D)^T$. The remaining columns $\mathbfss{M}$ are such that $\mathbfss{M}$ is of maximal rank. The eccentricity estimate is 
\begin{align}
\widehat{e} = \sqrt{\frac{\widehat{C}^2+ \widehat{D}^2}{\widehat{A}^2+ \widehat{B}^2}}.
\label{eq:eestimate}
\end{align}
We denote by $ \widehat{K} =\sqrt{ \widehat{A}^2+ \widehat{B}^2 }$.
The only assumption on the noise $\bepsilon$ is that a vanishing mean and a non degenerate covariance matrix $\mathbfss{V}$. 

The model just described is supposed to be the true model, which is unknown by the data analyst.
We assume that the model with which the analysis is done is $\mathbfit{y} = \mathbfss{M} \mathbfit{x} + \bepsilon$ where $\bepsilon$ is a Gaussian noise identically distributed of variance $\sigma^2$, that is a free parameter. In that model, for a data set $\mathbfit{y}_0$ the maximum likelihood estimates of $\mathbfit{x}$ and $\sigma$ are respectively 
\begin{align}
\widehat{\mathbfit{x}} &= (\mathbfss{M}^T \mathbfss{M})^{-1} \mathbfss{M}^T \mathbfit{y}_0 \\
\widehat{\sigma^2}  & = \frac{ \| \mathbfit{y}_0 -  \mathbfss{M} \widehat{\mathbfit{x}} \|^2 }{N} = \frac{ \| (\mathbfss{I} - \mathbfss{M}(\mathbfss{M}^T \mathbfss{M})^{-1} \mathbfss{M}^T) \mathbfit{y}_0  \|^2 }{N} = \frac{ \| \mathbfss{Q} \mathbfit{y}_0  \|^2 }{N}
\label{eq:mlsigma}
\end{align}
where we denote  by $\mathbfss{I}$ the $N\times N$ identity matrix and by $\mathbfss{Q} = \mathbfss{I} - \mathbfss{M}(\mathbfss{M}^T \mathbfss{M})^{-1} \mathbfss{M}^T$.
 
 The question we address is the dependency of the estimate~\eqref{eq:eestimate} on the noise nature. We show that several quantities relevant to our purposes are only determined by the covariance of the noise. Assertion \textit{iv} below shows that the bias on $e^2$ depends only on the covariance structure of the noise to order 2 in $1/K$. More precisely, 
\begin{theorem}
	\begin{enumerate}
\item $	\mathbb{E}\{ \widehat{\sigma^2}  \} = \frac{1}{N}\mathrm{tr}( \mathbfss{Q}  \mathbfss{V} \mathbfss{Q}^T )  $ where $\mathrm{tr}$ is the sum of the diagonal terms of a matrix (the trace).
\item $\mathbb{E}\{ \widehat{\mathbfit{x}}  \} = \mathbfit{x}_t$
\item  $\mathrm{Cov}\{ \widehat{\mathbfit{x}}  \} = (\mathbfss{M}^T \mathbfss{M})^{-1} \mathbfss{M}^T  \mathbfss{V}   \mathbfss{M}  (\mathbfss{M}^T \mathbfss{M})^{-1} $
\item $\mathbb{E}\{ \widehat{e}^2  \}-e_t^2 =  f( \mathbfss{V})   + o\left( \frac{1}{\widehat{K}^3} \right)$ . 
	\end{enumerate}
\end{theorem}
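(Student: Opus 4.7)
Parts (i), (ii), (iii) follow directly by substituting $\mathbfit{y}=\mathbfss{M}\mathbfit{x}_t+\bepsilon$ into the definitions and applying the three hypotheses $\mathbfss{Q}\mathbfss{M}=\mathbf{0}$, $\mathbb{E}\{\bepsilon\}=\mathbf{0}$, and $\mathbb{E}\{\bepsilon\bepsilon^T\}=\mathbfss{V}$. For (ii), $\widehat{\mathbfit{x}}=\mathbfit{x}_t+(\mathbfss{M}^T\mathbfss{M})^{-1}\mathbfss{M}^T\bepsilon$ has mean $\mathbfit{x}_t$. For (iii), the centered estimate $\widehat{\mathbfit{x}}-\mathbfit{x}_t$ is a linear form in $\bepsilon$, whose covariance is obtained by sandwiching $\mathbfss{V}$ between the linear coefficients. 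For (i), the residual is $\mathbfss{Q}\mathbfit{y}=\mathbfss{Q}\bepsilon$; the trace identity $\mathbb{E}\{\bepsilon^T\mathbfss{Q}\bepsilon\}=\mathrm{tr}(\mathbfss{Q}\mathbfss{V})$, together with the fact that $\mathbfss{Q}$ is the symmetric idempotent projector onto the orthogonal complement of $\mathrm{range}(\mathbfss{M})$ (so $\mathbfss{Q}=\mathbfss{Q}^T=\mathbfss{Q}^2=\mathbfss{Q}\mathbfss{Q}^T$), yields $\mathrm{tr}(\mathbfss{Q}\mathbfss{V})=\mathrm{tr}(\mathbfss{Q}\mathbfss{V}\mathbfss{Q}^T)$, and dividing by $N$ concludes.

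For (iv), the plan is a large-SNR perturbation expansion. Write $\widehat{A}=A_t+\delta_A$, and similarly for $B, C, D$; the quadruple $(\delta_A,\delta_B,\delta_C,\delta_D)$ is a linear form in $\bepsilon$ with zero mean and covariance determined by $\mathbfss{V}$ alone via (iii). Set $u_1=2(A_t\delta_A+B_t\delta_B)$, $u_2=\delta_A^2+\delta_B^2$, $v_1=2(C_t\delta_C+D_t\delta_D)$, $v_2=\delta_C^2+\delta_D^2$, so that
\begin{align}
\widehat{e}^2=\frac{K_t^2 e_t^2+v_1+v_2}{K_t^2+u_1+u_2}.
\end{align}
Each $\delta_{\cdot}$ is $O(\sigma/\sqrt{N})$, so $u_1,v_1=O(1/S)$ and $u_2,v_2=O(1/S^2)$ relative to $K_t^2$. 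I would expand $1/(K_t^2+u_1+u_2)$ as a geometric series in the small quantity $(u_1+u_2)/K_t^2$, multiply through, and take the expectation. All terms linear in the $\delta$'s vanish by (ii). The surviving second-order terms,
\begin{align}
\frac{\mathbb{E}\{v_2\}-e_t^2\,\mathbb{E}\{u_2\}}{K_t^2}+\frac{e_t^2\,\mathbb{E}\{u_1^2\}-\mathbb{E}\{v_1 u_1\}}{K_t^4},
\end{align}
involve only second moments of $\bepsilon$, and are therefore a function of $\mathbfss{V}$ alone; this is $f(\mathbfss{V})$. Non-Gaussianity enters solely through third and higher moments of $\bepsilon$, which first appear in the cubic-in-$\delta$ terms of the expansion; these are $O((\sigma/K_t)^3)\sim 1/\widehat{K}^3$, giving the stated remainder.

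The main obstacle is the rigorous justification of the series expansion and the interchange of expectation and infinite sum: the denominator $K_t^2+u_1+u_2$ is random and could, on a set of small probability, be close to zero, in which case the term-by-term expansion breaks down. The standard remedy is to partition the sample space using the large-deviation event $E=\{|u_1+u_2|\le c K_t^2\}$ for some $c<1$. On $E$ the geometric series converges uniformly and the order-$n$ term contributes $O(1/\widehat{K}^n)$; the complement $E^c$ is controlled by Chebyshev-type inequalities on $\|\bepsilon\|$ and its contribution made $o(1/\widehat{K}^3)$ under mild moment assumptions on $\bepsilon$. The numerical experiments of appendix~\ref{sec:nongausssim} further confirm that the leading-order approximation captures the dominant part of the bias even for distinctly non-Gaussian noises.
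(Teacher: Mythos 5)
Your proposal is correct and follows essentially the same route as the paper: parts (i)--(iii) are the same direct substitutions of $\mathbfit{y}=\mathbfss{M}\mathbfit{x}_t+\bepsilon$ using $\mathbfss{Q}\mathbfss{M}=\mathbf{0}$ and the first two moments of $\bepsilon$, and part (iv) is the same second-order expansion of $\widehat{e}^2$ in $1/K$ about the true (mean) values, with the linear terms vanishing in expectation and the quadratic terms depending only on the second moments of $\widehat{\mathbfit{x}}$, hence on $\mathbfss{V}$ via (iii). Your added remarks on justifying the geometric series (truncating on the event that the random denominator stays away from zero) address a point the paper treats only formally, but they do not change the underlying argument.
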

\begin{proof}
	(i) As $\mathbfit{y}_0 = \mathbfit{y}_t + \bepsilon $ and $\mathbfit{y}_t$ is in the image space of  $\mathbfss{M}$, we have $\mathbfss{Q} \mathbfit{y}_0 = \mathbfss{Q} \bepsilon$.
	
	 The estimate of the noise level is given by eq.~\eqref{eq:mlsigma}. Since 
	 $ \| \mathbfss{Q} \mathbfit{y}_0  \|^2 = \mathbfit{y}_0^T \mathbfss{Q}^T  \mathbfss{Q} \mathbfit{y}_0$ and $ \mathbfss{Q} \mathbfit{y}_0 = \mathbfss{Q} \bepsilon  $,  $	\mathbb{E}\{ \widehat{\sigma^2}  \} = \sum_{i,j} (QQ^T)_{ij}  \mathbb{E}\{ \epsilon_{i}   \epsilon_{j } \}  /N=  \sum_{i,j} (QQ^T)_{ij} V_{ij}/N =\frac{1}{N}\mathrm{tr}( \mathbfss{Q}  \mathbfss{V} \mathbfss{Q}^T )$.
	 
	 (ii)  We have  $\mathbb{E}\{ \widehat{\mathbfit{x}}  \} =$$ \mathbb{E}\{ (\mathbfss{M}^T \mathbfss{M})^{-1} \mathbfss{M}^T \mathbfit{y}_0 \} =$ $ (\mathbfss{M}^T \mathbfss{M})^{-1} \mathbfss{M}^T (\mathbfit{y}_t  +  \mathbb{E}\{ \bepsilon\} $ and by hypothesis, $\mathbb{E}\{ \epsilon \} = 0$, hence the result.
	 
	 (iii) $\mathrm{Cov}\{ \widehat{\mathbfit{x}}  \} =$ $\mathrm{Cov}\{ (\mathbfss{M}^T \mathbfss{M})^{-1} \mathbfss{M}^T \mathbfit{y}_0 \} =$$ \mathrm{Cov}\{(\mathbfss{M}^T \mathbfss{M})^{-1} \mathbfss{M}^T \bepsilon\}  $
	 
	 (iv) We pose $\widehat{A} = \bar{A} + a$, $\widehat{B} = \bar{B} + b$, $\widehat{C} = \bar{C} + a$, $\widehat{D} = \bar{D} + d$. Denoting by $\bar{K} =\sqrt{ \bar{C} + \bar{D} }$.  A Taylor expansion of the denominator $\widehat{ e^2}$ about $\bar{A}, \bar{B},\bar{C},\bar{D}$ at order two in $1/\bar{K}$ is
	 \begin{align}
	 \begin{split}
	 	\widehat{e^2} =& \frac{\bar{C}^2 + \bar{D}^2 + 2\bar{C}c + 2\bar{D}d +c^2 +d^2}{\bar{K}^2}  \\ &\left( 1 - 2\frac{\bar{A} a + \bar{B} b }{\bar{K}^2} - \frac{a^2+b^2}{\bar{K}^2} - 4\frac{(\bar{A} a + \bar{B} b )^2}{\bar{K}^4} +...\right) \\
	 		\mathbb{E}\{\widehat{e^2}\}	- e_t^2	  =&   \sum\limits_{i=1}^{\infty} \frac{\gamma_i}{\bar{K}^i} 
	 		 \end{split} 
	 \end{align}
	  A simple development shows that $\gamma_1=0$ and that $\gamma_2 = f( \bmu^2 )$ where $ \bmu^2 $ is the vector of moments of order 2 of $\mathbfss{x}$, which is a function of the noise covariance as shown in \textit{(iii)}.  For completeness, we give the explicit expression $\gamma_2 = \sigma_c^2 +\sigma_d^2 + 4e_t (\cos \psi \cos\phi  C_{ac} +\sin \psi \cos\phi  C_{cb} + \cos \psi \sin\phi  C_{ad} + \sin \psi \sin\phi  C_{db} )  +  e_t^2( \sigma_a^2 +\sigma_b^2) $ where $C_{ij}$ is the covariance of $\widehat{ I}$ and  $\widehat{ J}$ and $\bar{A} = \bar{K} \cos \phi$, $\bar{B} = \bar{K} \sin \phi$ $\bar{C} = \bar{e} \bar{K} \cos \phi$, $\bar{C} = \bar{e} \bar{K} \sin \phi$.
	 \end{proof}

\section{Residual analysis}
\label{appendixresiduals}

In this section we compute the law followed by the residuals of a linear least-square fit. Let us suppose that we have a model
\begin{align*}
\mathbfit{y} = \mathbfss{A}\mathbfit{x} + \bepsilon, \; \; \bepsilon \sim G(0,\mathbfss{V})
\end{align*}
where $\mathbfit{y}$ is a vector of $N$ observations, modelled as a linear combination of the column of the $N\times p$ matrix $A$, and $\bepsilon$ is a Gaussian noise of covariance matrix $\mathbfss{V} =: \mathbfss{W}^{-1}$. Assuming $\mathbfss{V}$ and $\mathbfss{A}$ are known, the least square estimate of $\mathbfit{y}$ is $\widehat{\mathbfit{y}} = \mathbfss{A}(\mathbfss{A}^T\mathbfss{W}\mathbfss{A})^{-1}\mathbfss{A}^T\mathbfss{W}\mathbfit{y}$ Therefore
\begin{align*}
\mathbfss{W}^{1/2}(\mathbfit{y} - \widehat{\mathbfit{y}}) & =  \mathbfss{W}^{1/2}\left( \mathbfss{A}\mathbfit{x}+ \bepsilon - \mathbfss{A}(\mathbfss{A}^T\mathbfss{W}\mathbfss{A})^{-1}\mathbfss{A}^T\mathbfss{W}(\mathbfss{A}\mathbfit{x}+\bepsilon)   \right) \\
& = \mathbfss{W}^{1/2}(\mathbfss{I}_N -\mathbfss{B})\bepsilon \\
& =: \mathbfit{r}_{W}
\end{align*}
where $\mat I_N$ is the identity matrix of size $N$ and $\mathbfss{B}:= \mathbfss{A }(\mathbfss{A}^T\mathbfss{W}\mathbfss{A})^{-1}\mathbfss{A}^T\mathbfss{W}$. The quantity $\vec r_{W}$, being a product of a matrix ($\mathbfss{W}^{1/2}(\mathbfss{I}_n -\mathbfss{B})$) with a Gaussian random variable of covariance $\mathbfss{V}$ has a covariance $\mathbfss{U}$
\begin{align*}
\mathbfss{U} & = \mathbfss{W}^{1/2}(\mathbfss{I}_N -\mathbfss{B}^T) \mathbfss{V} (\mathbfss{I}_N -\mathbfss{B})\mathbfss{W}^{1/2} \\
& = \mathbfss{W}^{1/2} ( \mathbfss{V} -  \mathbfss{BV} - \mathbfss{VB}^T + \mathbfss{BVB}^T) \mathbfss{W}^{1/2} \; \; \\ \text{since}\; \;   \mathbfss{W}^{1/2}\mathbfss{VB}^T = & \mathbfss{W}^{-1/2}\mathbfss{B}^T \mathbfss{W}^{1/2} = \mathbfss{W}^{1/2} \mathbfss{BVB}^T \mathbfss{W}^{1/2}, \\
\mathbfss{U}   & = \mathbfss{I}_N - \mathbfss{W}^{1/2}\mathbfss{B}^T \mathbfss{W}^{-1/2} \\
& = \mathbfss{I}_N - \mathbfss{C}(\mathbfss{C}^T \mathbfss{C})^{-1} \mathbfss{C}^T 
\end{align*}
where $\mathbfss{C} = \mathbfss{W}^{1/2} \mathbfss{A}$. This notation is convenient because it shows clearly that $\mathbfss{P} = \mathbfss{C}(\mathbfss{C}^T \mathbfss{C})^{-1} \mathbfss{C}^T$ is a projection matrix on the space generated by the columns of $\mat C$. Finally
\begin{align}
\mathbfss{U} = \mathbfss{I}_N - \mathbfss{P}
\label{eq:imp}
\end{align}
Is a projection on the space orthogonal to the one generated by $\mat C$ columns. Therefore, there exists an orthonormal matrix $\mat Q$ such that 
\begin{align*}
\mathbfss{Q}^T \mathbfss{UQ} = \mathbfss{J}_p
\end{align*} 
where  $\mathbfss{J}_p$ is a diagonal matrix whose first $p$ elements are zero and the others are equal to one. Finally, let us remark that $\mathbfit{r}_{QW}:= \mathbfss{Q}^T \mathbfit{r}_{W}$ has a covariance matrix $ \mathbfss{Q}^T \mathbfss{UQ}= \mathbfss{J}_p$, which shows the claim of section~\ref{sec:resanalysis}, $\mathbfss{Q}^T \mathbfss{W}^{1/2}(\mathbfit{y} - \widehat{\mathbfit{y}})$ has $p$ 0 components and the others are Gaussian variables of mean $0$ and variance 1.

Let us finally note that the covariance matrix $\mathbfss{U} = \mathbfss{I}_N - \mathbfss{P}$ of $\mathbfit{r}_{W} = \mathbfss{W}^{1/2}(\mathbfit{y} - \widehat{\mathbfit{y}})$ will be close to identity if there are many more observations than parameters. This explains why the weighted residuals $\mathbfit{r}_{W}$ almost behaves like independent Gaussian variables and, for instance, why plotting $\mathbfit{r}_{W}(t_i) - \mathbfit{r}_{W}(t_j)$ as a function of $t_i - t_j$ gives hints on the correlations.

\end{document}